\newtheorem{theorem}{Theorem}
\newtheorem{lemma}[theorem]{Lemma}
\newtheorem*{theorem*}{Theorem}
\newtheorem*{cor*}{Corollary}
\theoremstyle{definition}
\newtheorem{definition}{Definition}
\newtheorem{claim}[theorem]{Claim}
\theoremstyle{remark} 
\newtheorem{remark}[theorem]{Remark}
\newenvironment{claimproof}[1]{\textit{#1.} }{\hfill$\square$}
\newcommand{\N}{\mathbb{N}}
\newcommand{\E}{\mathbb{E}}
\newcommand{\Z}{\mathbb{Z}}
\newcommand{\eps}{\varepsilon}
\renewcommand{\epsilon}{\eps}
\newcommand{\Tcov}{\E(t_{cov})}
\title{Tight Bounds for the Cover Times of Random Walks\\ with Heterogeneous Step Lengths
\thanks{This work has received funding from the European Research Council (ERC) under the European Union's Horizon 2020 research and innovation program (grant  agreement No 648032).}}
\author{
  Brieuc Guinard \\
  \texttt{guinard@irif.fr}
  \and
  Amos Korman \\
  \texttt{amos.korman@irif.fr} \\
\and 
%\affil{
IRIF, CNRS and Univ. of Paris, France}
\date{}
\begin{document}

\maketitle

\begin{abstract}\parskip0.08cm
	Search patterns of randomly oriented steps of different lengths have been observed on all scales of the biological world, ranging from the microscopic to the ecological, including in protein motors, bacteria, T-cells, honeybees, marine predators, and more, see e.g., \cite{MarinePredatorEnvironment, Jansen, WeierstrassianSnails,Amoeba,MarinePredator,Viswanathan,Viswanathan2}. Through different models, it has been demonstrated that adopting a variety in the magnitude of the step lengths can greatly improve the search efficiency. However, the precise connection between the search efficiency and the number of step lengths in the repertoire of the searcher has not been identified. 
	
	Motivated by biological examples in one-dimensional terrains, a recent paper studied the best cover time on an $n$-node cycle that can be achieved by a random walk process that uses $k$ step lengths \cite{LATIN}. By tuning the lengths and corresponding probabilities the authors therein showed that the best cover time is roughly $n^{1+\Theta(1/k)}$. While this bound is useful for large values of $k$, it is hardly informative for small $k$ values, which are of interest in biology \cite{Auger,Review-inter,Lomholt,WeierstrassianMussels}. In this paper, we provide a tight bound for the cover time of such a walk, for every integer $k> 1$. Specifically, up to lower order polylogarithmic factors, the cover time is $n^{1+\frac{1}{2k-1}}$. For $k=2,3, 4$ and $5$ the bound is thus $n^{4/3}$, $n^{6/5}$, $n^{8/7}$, and $n^{10/9}$, respectively. Informally, our result implies that, as long as the number of step lengths $k$ is not too large, incorporating an additional step length to the repertoire of the process enables to improve the cover time by a polynomial factor, but the extent of the improvement gradually decreases with $k$. 
	%In order to obtain the tight bound, we improved both the lower bound and the upper bound in \cite{LATIN}.
\end{abstract}
	\newpage
	\setcounter{page}{1}
	\section{Introduction}
	This paper follows the ``Natural Algorithms'' line of research, aiming to contribute to biological studies from an algorithmic perspective \cite{ITCS,chazelle1,ANTS, Musco}. In particular, we follow a similar approach to Chazelle \cite{chazelle1,chazelle2}, considering a process that has been extensively studied by physicists and biologists, and offering a more uniform algorithmic analysis based on techniques from probability theory. Our subject of interest is random walks with heterogeneous step lengths, a family of processes that during the last two decades has become a central model for biological movement, see e.g., \cite{COVER,MarinePredatorEnvironment,Jansen,Lomholt,oneD, HunterGatherer,WeierstrassianSnails,humans2,MarinePredator,Viswanathan,Viswanathan2}. Our approach is to quantify by how much can the search efficiency improve when the searcher is allowed to use more steps. Specifically, our goal is to analyze, for every integer $k$, the best cover time achievable by a random walk that utilizes $k$ step-lengths, and identify the parameters that achieve the optimal cover time. Hence, in some sense, we view the number of steps as a ``hardware'' constraint on the searcher, and ask what is the best ``software'' to utilize them, that is, the best way to set the lengths, and the probabilities of taking the corresponding steps. We focus on the one-dimensional terrain (an $n$-node cycle) as it is both biologically relevant, and, among other Euclidean spaces, it is the most sensitive to step-length variations (e.g., the simple random walk on the two-dimensional plain already enjoys a quasi-linear cover time). 
	A preliminary investigation of this question was recently done by the authors of the current paper together with collaborating researchers \cite{LATIN}, yielding asymptotic bounds with respect to $k$. Unfortunately, these bounds are not very informative for small values of $k$, which are of particular interest in biology \cite{Auger,Review-inter,Lomholt,WeierstrassianMussels}. For example, for processes that can use a small number of step-lengths, say $k=2$ or $k=3$, the bound in \cite{LATIN} merely says that the cover time is polynomial in $n$, which does not even imply that such a process can outperform the simple random walk--- whose cover time is known to be $\Theta(n^2)$. In this paper we improve both the lower bound and the upper bound in \cite{LATIN}, identifying the tight cover time for every integer $k$.

	%In turn, physicists and biologists have theoretically analysed it efficiency as a foraging algorithm, based mainly on simulations, and differential equation techniques. 

		\subsection{Background and Motivation}
	The exploration-exploitation dilemma is fundamental to almost all search or foraging processes in biology \cite{Couzin}. An efficient search strategy needs to strike a proper balance between the need to explore new areas and the need to exploit the more promising ones found. At an intuitive level, this is often perceived as a tradeoff between two scales: the global scale of exploration and the local scale of exploitation. This paper studies the benefits of incorporating a hierarchy of multiple scales, where lower scales serve to exploit the exploration made by higher scales. We demonstrate this concept by focusing on random walk search patterns with heterogeneous step lengths, viewing the usage of steps of a given length as searching on a particular scale.

	In the last two decades, random walks with heterogeneous step lengths have been used by biologists and physicists to model biological processes across scales, from  microscopic to macroscopic, including in DNA binding proteins \cite{Berg,DNA}, immune cells \cite{Tcell}, crawling amoeba \cite{Amoeba}, locomotion mode in mussels \cite{Jager,Jansen}, snails \cite{WeierstrassianSnails}, marine predators \cite{MarinePredatorEnvironment,MarinePredator}, albatrosses \cite{Viswanathan,Viswanathan2}, and even in humans \cite{Boyer1743,humans2,HunterGatherer}. Most of these biological examples concern search contexts, e.g., searching for pathogens or food.	Indeed, from a search efficiency perspective, it has been argued that the heterogeneity of step lengths in such processes allows to reduce oversample, effectively improving the balance between global exploration and local exploitation \cite{Review-inter,Viswanathan2}. 
However, the precise connection between the search efficiency and the number of step lengths in the repertoire of the searcher has not been identified. 

	Due to possible cognitive conflicts between motion and perception, in some of the aforementioned search contexts it was argued that  biological entities 
	are essentially unable to detect targets while moving fast, and hence targets are effectively found only between jumps, see e.g., \cite{Review-inter,Lomholt} and the references therein. Those models are often called {\em intermittent}. When the search is intermittent, we say that a site is {\em visited} whenever the searcher completes a jump landing on this site.  It is also typically assumed that the searcher has some radius of visibility $r$, and a target can only be detected if it is in the $r$-vicinity of a site currently visited by the searcher. Discretizing the space, one may view a Euclidian space as a grid of the appropriate dimension, in which each edge is of length $r$. In this discretization, sites are nodes, and the searcher can detect a target at a node, only if it makes a random jump that lands on it. 
	
	In general, two families of processes with heterogeneous step lengths have been extensively studied in Euclidean spaces: {\em L\'evy Flights} (named after the mathematician Paul L\'evy), and {\em Composite Correlated Random Walks (CCRW)}, see e.g., \cite{Auger,Review-inter,Lomholt}. Both have been claimed to be optimal under certain conditions and both have certain empirical support. In the L\'evy Flight process, step lengths have a probability distribution that is heavy-tailed: at each step a direction is chosen uniformly at random, and the probability to perform a step of length $d$ is proportional to $d^{-\mu}$, for some fixed parameter $1<\mu<3$. 
	 
	Searchers employing a CCRW  can potentially alternate between multiple modes of search\footnote{CCRW have also been classified as either {\em cue-sensitive}, i.e., they can change their mode of operation upon detecting a target \cite{Benhamou}, or {\em internally-driven}, i.e., their movement pattern depends only on the mechanism internal to the searcher \cite{InherentLevy}. 
		However, when targets are extremely rare and there is no a-priori knowledge about their distribution, one must cover a large portion of the terrain before finding a target, and hence the aforementioned distinction becomes irrelevant.}, but apart for few exceptions \cite{WeierstrassianMussels}, such patterns have mostly been studied when assuming that the number of search modes is 2. Specifically, a diffusive phase  in which targets can be detected and a ballistic phase in which the searcher moves in a random direction in a straight line whose length is exponentially distributed with some mean $L$. This CCRW with 2 modes can be approximated as a discrete random walk with two step lengths, hereafter called {\em 2-scales search}: first, choose a direction uniformly at random. Then, with some probability $p$ take a step of unit length, and otherwise, with probability $1-p$, take a step of some predetermined length $L$.

	L\'evy Flights  and  2-Scales searches have been studied extensively using differential equation techniques and  computer simulations. These studies aimed to both compare the performances of these processes as well as  to identify the parameters that maximize the rate of target detection or minimize the hitting time under various target distributions \cite{Review-inter,COVER,Lomholt,oneD,Viswanathan2}.

	Most of the literature on the subject has concentrated on either one or two dimensional Euclidian spaces. In particular, the one-dimensional case has attracted attention due to several reasons. First, it finds relevance in several biological contexts, including in the reaction pathway of DNA binding proteins \cite{Berg,DNA}. One-dimension can also serve as an approximation to general narrow and long topologies, which can be found for example  in blood veins or other organs. Second, from a computational perspective, the one-dimension is the only dimension where the simple random walk has a large cover time, namely, quadratic, whereas in all higher dimensions the cover time is nearly linear.
	This implies that in terms of the cover time, heterogeneous random walks can potentially play a much more significant role in one-dimension than in higher dimensions. 

\subsection{Definitions} 
We model the one-dimension space as an 
$n$-node cycle, termed $C_n$. For an integer $k$, we define the random walks process with $k$ step lengths as follows.
	\begin{definition}[$k$-scales search] \label{sec:model}
	A random walk process $X$ is called a \emph{$k$-scales search} on $C_n$  if there exists	a probability distribution ${\textbf{p}}=(p_i)_{i=0}^{k-1}$, where $\sum_i p_i=1$, and integers $L_0,L_1,\ldots, L_{k-1}$ such that, on each step, $X$ makes a jump $\{0,-L_i,+L_i\}$ with probability respectively $p_i/2, p_i/4, p_i/4$. Overall, with probability $1/2$, the process $X$ stays in place\footnotemark.
			The numbers $(p_i)$ and $(L_i)$ are called the \emph{parameters} of the search process $X$. \footnotetext{This laziness assumption is used for technical reasons, as is common in many other contexts of random walks. Note that  this assumption does not affect the time performance of the process, as we consider it takes time $0$ to stay in place.} The speed is assumed to be a unitary constant, that is, it takes $L$ time to do a step of length $L$.
	\end{definition}
	\noindent 
	Our goal is to show upper and lower bounds on the \emph{cover time} of a $k$-scales search, that is, the expected time to visit every node of the ambient graph $C_n$, where it is assumed that a jump from some point $x$ to $y$ visits only the endpoint $y$, and not any of the intermediate nodes. We denote by $\E(t_{cov}(n,k))$ the smallest cover time achievable by a $k$-scales search over the $n$-node cycle. The parameters $n$ and $k$ are omitted when clear from the context.  
	
	We also define the following $k$-scales search which is often referred to in the mathematical literature as a {\em Weierstrassian random walk} \cite{WeierstrassianMaths1}. In the biology literature, it has been 	used as a model for the movement strategy of snails \cite{WeierstrassianSnails} and mussels \cite{WeierstrassianMussels}. 
	\begin{definition}[Weierstrassian random walk] Let $b\geq 2$ and $k$ be integers such that $b^{k-1}< n \leq b^{k}$. The {\em Weierstrassian random walk} with parameter $b$ is the $k$-scales search defined by: 
			$L_i = b^i$ and $p_i = c_b b^{-i}$, 
			for every $0 \leq i \leq k-1$, with the normalizing constant $c_b=\frac{b-1}{b-b^{1-k}}$.
	\end{definition} 
	Note  that $c_b$ is an increasing function of $b>1$, and so $c_b\geq c_2\geq  1/2$ for $b\geq 2$. Hence, $p_0=c_b\geq  1/2$. Also $p_0=c_b\leq 1$, hence $c_b=\Theta(1)$ is indeed a constant. 

	\subsection{Previous Bounds on the Cover Time of $k$-scales search}
	The work of Lomholt et al.\  \cite{Lomholt} considered intermittent search on the one-dimensional cycle of length $n$, and compared the performances of the best 2-scales search  to the best L\'evy Flight. With the best parameters, they showed that the best 2-scales search can find a target in roughly $n^{4/3}$ expected time, but introducing L\'evy distributed relocations with exponent $\mu$ close to $2$ can reduce the search time to quasi linear. 
	
	Taking a more unified computational approach, a recent paper \cite{LATIN} analyzed the impact of having $k$ heterogeneous step lengths on the cover time (or hitting time\footnote{Note that in connected graphs, the notion of {\em cover time}, namely the expected time until all sites (of a finite domain) are visited when starting the search from the worst case site, is highly related to the {\em hitting time}, namely, the expected time to visit a node $x$ starting from node $y$, taken on the worst case pair $x$ and $y$; the cover time is always at least the hitting time, and in connected graphs it is at most  a logarithmic multiplicative factor more than the hitting time, see \cite{Peres}[Matthews method, Theorem~11.2].}) of the $n$-node cycle $C_n$. Specifically, the following bounds were established in \cite{LATIN}.
	\begin{theorem*}[Upper bound on the cover time of Weierstrassian random walk from \cite{LATIN}]\label{thm:main-up-latin}
		Let $b, n$ be integers such that $2\leq b<n$ and set $k=\log n / \log b$. The cover time of the Weierstrassian random walk with parameter $b$ on the $n$-cycle is at most
		$\mbox{poly}(k) \cdot \mbox{poly}(b) \cdot n \log n$.
	\end{theorem*}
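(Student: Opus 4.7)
The plan is to reduce the cover time to the hitting time via Matthews' method, and then bound the hitting time by a multi-scale ``zoom-in'' analysis that exploits the self-similar structure of the Weierstrassian walk. By Matthews' bound, $\E(t_{cov}) \le O(\log n)\cdot\max_{u,v}\mathbb{E}_u[T_v]$, so it suffices to show $\max_{u,v}\mathbb{E}_u[T_v]=\mathrm{poly}(b,k)\cdot n$ in time units, given that each step of the walk costs $\mathbb{E}[\Delta t]=c_b k/2=\Theta(k)$ time on average.

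For the hitting time, fix a target $v$ and define nested windows $W_{k-1}\supset W_{k-2}\supset\cdots\supset W_0=\{v\}$ centred at $v$, with $|W_i|=\Theta(b^i)$ (so $W_{k-1}$ essentially covers the whole cycle). I would prove by induction on $i$, descending from $k-2$ down to $0$, that once the walk enters $W_{i+1}$ it reaches $W_i$ within $\mathrm{poly}(b,k)\cdot b^{i+1}$ expected time. Summing the resulting geometric series across $i=0,\ldots,k-2$ yields the target hitting-time bound $\mathrm{poly}(b,k)\cdot n$. The inductive step exploits that the scale-$i$ step (length $b^i$, probability $p_i=c_b b^{-i}$) is the coarsest jump capable of navigating at resolution $b^i$ inside $W_{i+1}$: combined with the finer scales that redistribute the walker among the $b$ cosets of $W_{i+1}$ modulo $b^i$, a scale-$i$ jump lands in $W_i$ with constant probability per ``attempt''. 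Each attempt costs $1/p_i=O(b^i)$ steps, i.e., $O(k b^i)$ time on average, and $O(b)$ attempts suffice, giving total expected cost $O(b^{i+1}k)$ per level.

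The principal obstacle I anticipate is the interference between scales in the inductive step. Larger-scale jumps (scales $j>i$) may eject the walker from $W_{i+1}$ before the zoom-in at level $i$ completes, and the coset distribution of the walker inside $W_{i+1}$ modulo $b^i$ is governed by the finer-scale steps. For the ejections, I would argue they are sufficiently rare on the $b^{i+1}$ timescale, since the rate of a scale-$j$ jump is $p_j=O(1/b^j)$, and that by the symmetry and reversibility of the Weierstrassian walk each ejection is followed by a return to $W_{i+1}$ within constant expected cost. For the coset distribution, I would invoke the rapid mixing of the scale-$<i$ walk on the quotient cycle $W_{i+1}/b^i$ of size $b$, argued essentially by a scaled-down instance of the same induction. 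Plugging the resulting hitting-time bound into Matthews' inequality yields the claimed $\mathrm{poly}(b,k)\cdot n\log n$ cover time.
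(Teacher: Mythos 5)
Your route (Matthews' bound plus a telescoping of the hitting time over nested windows $W_{k-1}\supset\cdots\supset W_0$) is genuinely different from the paper's, which instead controls the cover time through the ratio $\sum_{m=m_0}^{2m_0}\Pr(Z(m)=x)\big/\sum_{m\le m_0}\Pr(Z(m)=0)$ after decomposing the walk into $k$ dependent coordinate walks on $C_b\times\cdots\times C_b\times\Z$ and estimating each coordinate's distribution. The telescoping $\E_u[T_v]\le\sum_i\max_{w\in W_{i+1}}\E_w[T_{W_i}]$ is valid by the strong Markov property, so the architecture is sound. The problem is that your inductive hypothesis and the mechanism you propose for the ``ejection'' difficulty are both quantitatively false, and the false step is precisely the technical heart of the argument.

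Concretely, the claim that each ejection from $W_{i+1}$ ``is followed by a return to $W_{i+1}$ within constant expected cost'' fails badly. The chain is reversible with uniform stationary distribution, so Kac's formula gives $\E_{\pi_{W_{i+1}}}[T^+_{W_{i+1}}]=n/|W_{i+1}|$; since the per-move exit probability from $W_{i+1}$ is $\Theta(b^{-(i+1)})$, the mean sojourn length is $\Theta(b^{i+1})$ and hence the mean \emph{excursion} length is $\Theta(b^{i+1})\cdot(n/b^{i+1}-1)=\Theta(n)$ moves, i.e.\ $\Theta(kn)$ time --- not $O(1)$. Moreover the level-$i$ descent requires $\Theta(b^2)$ scale-$i$ attempts (diffusive hitting of one of $b$ super-nodes, not the $O(b)$ attempts with ``constant success probability'' you assert), hence witnesses $\Theta(b)$ ejections, so the excursions alone contribute on the order of $kbn$ time to \emph{each} level. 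Your inductive claim ``from $W_{i+1}$ the walk reaches $W_i$ within $\mathrm{poly}(b,k)\cdot b^{i+1}$ expected time'' is therefore false for all but the top levels (already for $i=0$ the excursion cost is of order $n\gg\mathrm{poly}(b,k)\cdot b$ when $k$ is large), so the induction as designed cannot close. The totals still happen to land at $\mathrm{poly}(b,k)\cdot n$ (roughly $k$ levels times $\widetilde{O}(kbn)$ each, consistent with the true $\widetilde{\Theta}(nb)$ answer), so the approach is salvageable, but only after replacing the constant-cost-return step with genuine return-time estimates to each window from \emph{worst-case} exit points (Kac only controls the stationary average), which is itself a multi-scale hitting problem that your plan does not address.
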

	Taking $b=\lceil n^{1/k} \rceil$ yields the following corollary.
	\begin{cor*}[Upper bound from \cite{LATIN}]\label{cor:main-up-latin}
		For any $k\leq \frac{\log n}{\log \log n}$, there exists a $k$-scales search with cover time $n^{1+O(\frac 1k)} \log n$.
		
	\end{cor*}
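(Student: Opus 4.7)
My plan is to derive the corollary directly from the preceding theorem by substituting the indicated value $b=\lceil n^{1/k}\rceil$ and carefully tracking how each factor in $\operatorname{poly}(k)\cdot\operatorname{poly}(b)\cdot n\log n$ contributes to the exponent of $n$.

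First I would verify that this choice of $b$ legitimately yields a walk with at most $k$ scales. Since $b\ge n^{1/k}$ we have $b^k\ge n$, so the unique integer $k'$ with $b^{k'-1}<n\le b^{k'}$ satisfies $k'\le k$; thus the Weierstrassian walk of parameter $b$ is a $k'$-scales search, and hence in particular a $k$-scales search (one may pad the distribution with scales of zero probability, or simply observe that using fewer scales is a special case of using $k$).

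Next I would apply the theorem to obtain the bound $\operatorname{poly}(k)\cdot\operatorname{poly}(b)\cdot n\log n$ on the cover time. The bound on $\operatorname{poly}(b)$ is immediate: $b=\lceil n^{1/k}\rceil\le n^{1/k}+1\le 2n^{1/k}$, so $b^{O(1)}=n^{O(1/k)}$. The only nontrivial point is absorbing the $\operatorname{poly}(k)$ factor into the same $n^{O(1/k)}$ expression. Writing $\operatorname{poly}(k)=k^{c}$ for some constant $c$ coming from the theorem, I want $k^{c}\le n^{O(1/k)}$, which is equivalent to $k\log k = O(\log n)$. This is precisely where the hypothesis $k\le \log n/\log\log n$ is used: for such $k$,
\[
k\log k \;\le\; \frac{\log n}{\log\log n}\cdot\log\!\left(\frac{\log n}{\log\log n}\right)\;\le\; \log n.
\]

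Combining the two estimates yields $\operatorname{poly}(k)\cdot\operatorname{poly}(b)= n^{O(1/k)}$ and therefore the cover time is at most $n^{1+O(1/k)}\log n$, as claimed. The only substantive step is the bookkeeping argument in the previous paragraph; everything else is definitional substitution. I do not anticipate any genuine obstacle beyond making sure the constants hidden in the two $\operatorname{poly}$'s are compatible with the hypothesis on $k$, which the chosen range $k\le \log n/\log\log n$ comfortably accommodates.
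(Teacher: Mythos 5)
Your derivation is correct and is exactly the route the paper intends: it states the corollary as following from the preceding theorem simply by ``taking $b=\lceil n^{1/k}\rceil$,'' and your bookkeeping (bounding $\mathrm{poly}(b)$ by $n^{O(1/k)}$ and absorbing $\mathrm{poly}(k)$ via $k\log k=O(\log n)$, which is where the hypothesis $k\le \log n/\log\log n$ enters) fills in precisely the omitted details.
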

Note that for small values of $k$, this bound is not very informative. For example, for $k=2,3$ the bound merely says that the cover time is polynomial in $n$, which is known already for $k=1$, i.e., the simple random walk, whose cover time is $\Theta(n^2)$.

	\begin{theorem*}[Lower bound from \cite{LATIN}]\label{thm:main-lb-latin}
		For every $\epsilon>0$, there exist sufficiently small  constants $c,c'>0$ such that for $k\leq c' \frac{\log n}{\log \log n}$, any $k$-scales search cannot achieve a cover time better than $c\cdot n^{1+\frac{1/2-\epsilon}{k+1}}$. 
	\end{theorem*}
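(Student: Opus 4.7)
The plan is to fix a horizon $T := c \cdot n^{1 + (1/2-\epsilon)/(k+1)}$ with a small constant $c = c(\epsilon)$, and show that the expected number of distinct sites visited by time $T$ is at most $n/2$. Since $C_n$ is vertex-transitive, this forces $\Pr[x\text{ visited by time }T] = \E[|\text{visited}|]/n \leq 1/2$ for some vertex $x$, hence $\Pr[t_{cov} > T] \geq 1/2$ and $\E(t_{cov}) \geq T/2$. I may assume without loss of generality $1 = L_0 < L_1 < \cdots < L_{k-1} \leq n$ (otherwise coverage is impossible). Partition $[1,n]$ logarithmically into $k+1$ geometric sub-intervals of common ratio $r := n^{1/(k+1)}$; since only $k$ step-lengths are available, pigeonhole produces a ``gap'' interval $[A, Ar]$ containing none of the $L_i$. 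Call a jump \emph{short} (length $\leq A$) or \emph{long} (length $\geq Ar$), and decompose the walker's position as $W = W^S + W^L$, where $W^S, W^L$ are the cumulative sums of short and long jumps respectively.

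Let $M$ denote the (random) number of long jumps taken by clock time $T$; since each costs clock time $\geq Ar$, one has $M \leq T/(Ar)$ deterministically. The long jumps split $[0,T]$ into $M+1$ \emph{short-step epochs} of clock-time durations $T_0, \ldots, T_M$ with $\sum_j T_j \leq T$. During each epoch $W^L$ is constant, so the sites visited in epoch $j$ form a translate of the range of $W^S$ on that epoch, giving the deterministic bound
\begin{equation*}
|\text{visited sites}| \;\leq\; M \;+\; \sum_{j=0}^{M} \mathrm{diam}\!\left(W^S \text{ on epoch } j\right).
\end{equation*}

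The short-step process $W^S$ has i.i.d.\ increments of magnitude $\leq A$ and variance per unit clock time at most $\sum_{i:\, L_i \leq A} p_i L_i^2 / \bar L \leq A$, where $\bar L := \sum_i p_i L_i$. Doob's $L^2$-maximal inequality applied within each epoch gives $\E[\mathrm{diam}_j^2 \mid T_j] = O(A T_j)$. Pathwise Cauchy--Schwarz followed by Jensen yields
\begin{equation*}
\E\!\left[\sum_j \mathrm{diam}_j\right] \;\leq\; \sqrt{\bigl(\tfrac{T}{Ar} + 1\bigr) \cdot O(AT)} \;=\; O\!\left(\tfrac{T}{\sqrt r}\right),
\end{equation*}
(assuming $T \geq Ar$; the complementary case $M = 0$ reduces directly to $\E[|\text{visited}|] \leq O(\sqrt{AT})$, which is even smaller). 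Together with $\E[M] \leq T/(Ar) \leq T/\sqrt r$, this gives $\E[|\text{visited}|] = O(T/\sqrt r) = O\bigl(c\cdot n^{1-\epsilon/(k+1)}\bigr)$. Under the hypothesis $k \leq c' \log n/\log \log n$ with $c' = c'(\epsilon)$ small, the factor $n^{\epsilon/(k+1)} \geq (\log n)^{\epsilon/c'}$ dominates any fixed $O(1)$ constant; choosing $c$ sufficiently small therefore yields $\E[|\text{visited}|] \leq n/2$, as required.

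The main obstacle is that the short-step epochs are delimited by random stopping times (the long-jump instants), and the sub-walks $W^S$ and $W^L$ are coupled through the shared sequence of moves rather than independent. The clean workaround is to invoke Doob's inequality \emph{conditionally} on the filtration at the start of each epoch, exploiting the strong Markov property of the i.i.d.\ increment sequence so that the short-step walk within the epoch retains the variance bound. The hypothesis $k \leq c' \log n/\log \log n$ plays only a quantitative role here: it absorbs the $O(1)$ constants produced by the variance estimate and keeps the stated lower bound non-trivially larger than $n$.
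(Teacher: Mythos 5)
Your argument is correct, and it is worth noting that the paper does not actually prove this statement --- it is quoted from \cite{LATIN}; the closest in-paper machinery is Lemma \ref{lem:main-lb} (via Claim \ref{claim:seen-per-phase}), which bounds the expected number of nodes visited per \emph{phase} of duration $L_{i+1}$ and would yield the cited bound after a pigeonhole over the ratios $L_{i+1}/L_i$. Your route differs in two genuine ways. First, you work with a single global horizon $T$ and extract the gap $[A,Ar]$ by pigeonholing $k$ lengths into $k+1$ geometric intervals up front, rather than proving a per-ratio bound for every $i$ and then pigeonholing at the end; the two are essentially equivalent in strength. Second, and more substantively, the paper controls the short-scale range by writing $D\leq\sum_{j\leq i}D_j$ and bounding each sub-walk's expected maximal displacement separately, which costs a factor $k$ (visible in the $\Omega(n\sqrt{L_{i+1}/L_i}/k)$ of Lemma \ref{lem:main-lb}, and absorbed by the $\epsilon$ and the hypothesis $k\leq c'\log n/\log\log n$); your second-moment bound on the \emph{combined} short-jump martingale --- conditionally on the jump lengths the signs are symmetric and independent, so $\E[\mathrm{diam}_j^2\mid T_j]\leq O(\sum L^2)\leq O(AT_j)$ --- avoids that loss entirely, which is tidier and in fact closer in spirit to the paper's Lemma \ref{lem:lb-bigsteps}. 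One small slip: in the regime $T<Ar$ the fallback bound $O(\sqrt{AT})$ is \emph{not} smaller than $O(T/\sqrt{r})$ (the inequality $\sqrt{AT}\leq T/\sqrt r$ is equivalent to $Ar\leq T$, the opposite case); it is nonetheless sufficient, since $Ar\leq n$ gives $\sqrt{AT}\leq\sqrt{c}\,n^{1-(1/2+\epsilon)/(2(k+1))}\leq n/2$, so the conclusion stands.
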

		The aforementioned lower bound of \cite{LATIN} is more precise than the upper bound, but still not tight, as we show in the next subsection. For example, for $k=2$, the lower bound in \cite{LATIN} gives $n^{7/6}$ instead of $n^{4/3}$, which is the tight bound. 
	
	\subsection{Our Results}
 This paper provides tight bounds for the cover times of $k$-scales searches, for any integer $k>1$. Specifically, we prove that the optimal cover (or hitting) time achievable by a $k$-scales search is $n^{1+\frac{1}{2k-1}}$, up to lower order polylogarithmic factors. Our bound implies that for small $k$, the improvement in the cover time incurred by employing one more step length is polynomial, but the extent of the improvement gradually decreases with $k$. 
 
 In order to establish the tight bound, we first had to understand what should be a good candidate for the tight bound to aim to. This was not a trivial task, as the precise bound takes an unusual form. After identifying the candidate for the bound, we had to improve both the upper and the lower bounds from \cite{LATIN}, which required us to overcome some key technical difficulties. For the lower bound, \cite{LATIN} established that the cover time is bounded from below by a function (specifically the square-root) of the ratio $L_{i+1}/L_i$, for every $i$. As it turns out, what was required to tighten the analysis is a better understanding about the relationships between the cover time and the extreme step-lengths, namely, $L_0$, $L_1$ and $L_{k-1}$. Specifically, in proving the precise lower bound we have two components, one for the ``local'' part (exploitation) and the other for the ``global'' part (exploration). We showed that in order to be efficient on the local part, the small step-lengths need to be small, whereas in order to be  efficient on the global part (traversing large distances fast), the largest step-length needs to be large. This allowed us to widen the ratios between consecutive step-lengths, consequently increasing the lower bound. 
 
 In order to obtain the precise upper bound, we improved the analysis in \cite{LATIN} of the Weierstrassian random walk process. This, in particular, required overcoming non-trivial issues concerning dependencies between variables that were overlooked in \cite{LATIN}. By doing this, we also refined the estimates on the order of magnitude of other dependencies. In addition,  we had to incorporate short-time probability bounds for each step-length used by the process, and perform a tighter analysis of the part of the walk that corresponds to the largest step length $L_{k-1}$. 
 
 We next describe our contribution in more details.

	\subsubsection{The Lower Bound}
	We begin with the statement of the lower bound. The formal proof is given in Section \ref{sec:lower}.
	\begin{theorem}\label{thm:main-lb}
		Let $k$ and $n$ be positive integers. The cover time of any $k$-scales search $X$ on $C_n$ is: \[\E(t_{cov}(n,k))=n^{1+\frac{1}{2k-1}}\cdot \Omega(1/k).\]
	\end{theorem}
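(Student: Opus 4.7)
Let $X$ be an arbitrary $k$-scales search on $C_n$ with parameters $(p_i,L_i)$ ordered so that $L_0<L_1<\cdots<L_{k-1}$, and write $T := \E(t_{cov})$. The plan is to derive three families of lower bounds on $T$, one per scale, and combine them multiplicatively with carefully chosen weights so that every $L_j$-dependency telescopes away.

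\textbf{The three ingredients.} (i)~\emph{Global}: $T=\Omega(n^2/L_{k-1})$. This is a second-moment argument: the per-step variance of $X$ is at most $\sum_i p_i L_i^2 \le L_{k-1}\bar L$ with $\bar L=\sum_i p_iL_i$, so the squared displacement of $X$ by time $T$ is $O(TL_{k-1})$, which has to be $\Omega(n^2)$ for $X$ to hit the antipode of $C_n$. Equivalently, $L_{k-1}\ge\Omega(n^2/T)$---the largest step-length must be large. (ii)~\emph{Local}: $T=\widetilde{\Omega}(L_1^2)$. Inside any fixed arc of length $L_1$, every step of length $\ge L_1$ exits the arc, so the walker can only cover the interior by $L_0=1$ steps; the cover time of an $L_1$-interval by unit steps is $\Omega(L_1^2)$, and every such arc must be covered. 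Equivalently, $L_1\le \widetilde{O}(\sqrt T)$---the smallest nontrivial step-length must be small. (iii)~\emph{Inter-scale}: for each $1\le i\le k-2$, $T=\widetilde{\Omega}(nL_{i+1}/L_i)$. Partition $C_n$ into $n/L_{i+1}$ arcs of length $L_{i+1}$; inside each arc only steps of length $\le L_i$ stay inside, and the cover time of an $L_{i+1}$-interval by a random walk whose step sizes are bounded by $L_i$ is $\Omega(L_{i+1}^2/L_i)$ in time units (achieved by the $L_i$-step walk, needing $(L_{i+1}/L_i)^2$ steps of duration $L_i$). Summing over the $n/L_{i+1}$ arcs, each of which needs its own contribution to $T$, yields the claim. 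This is the key improvement over \cite{LATIN}'s $T=\Omega(\sqrt{L_{i+1}/L_i})$: our bound is anchored at $n$ and is tight in the regime of interest.

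\textbf{Combination.} Raise the local bound to weight $1$, each of the $k-2$ inter-scale bounds to weight $2$, and the global bound to weight $2$, and multiply. For every $j$ the exponent of $L_j$ on the right-hand side telescopes to $0$: $L_1$ gets $+2$ from the local bound and $-2$ from the inter-scale bound at $i=1$; each middle $L_j$ (for $2\le j\le k-2$) gets $+2$ from the inter-scale bound at $i=j-1$ and $-2$ from the one at $i=j$; and $L_{k-1}$ gets $+2$ from the inter-scale bound at $i=k-2$ and $-2$ from the global bound. The total exponent of $T$ on the left is $1+2(k-2)+2=2k-1$ and of $n$ on the right is $2(k-2)+2\cdot 2=2k$, so $T^{2k-1}=\widetilde{\Omega}(n^{2k})$, i.e., $T=\widetilde{\Omega}(n^{1+1/(2k-1)})$. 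The $\Omega(1/k)$ prefactor in the statement absorbs the $k$ multiplicative constants combined along the way.

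\textbf{Main obstacle.} The technical heart of the argument is making the per-arc lower bounds in (ii) and (iii) rigorous. A naive ``single-visit cover time'' bound does not hold in general, because the walker can enter the same arc many times at different interior entry points (fresh entries supplied by the higher-scale jumps), and the aggregated in-arc time needed for coverage can be much smaller than the single-visit cover time. The remedy is to replace ``cover the whole arc'' by ``hit a specific worst-case target node inside the arc'', for which additivity across visits is more controlled (e.g.\ through a Wald-type identity on visit durations), and then pay a polylogarithmic Matthews factor to translate the hitting-time lower bound back into a cover-time one---this accounts for the tildes in the $\widetilde\Omega$. Carrying this out uniformly across the $k-1$ partitioning scales, and controlling the dependencies between scales (which are nontrivial because a single trajectory generates the visits at every scale simultaneously), together with the precise combinatorial choice of the telescoping weights, is the main delicate piece of the proof.
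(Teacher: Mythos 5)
Your combination scheme (a weighted geometric mean of per-scale lower bounds chosen so that the $L_j$'s telescope) is sound and is in fact equivalent to the paper's argument, which splits on whether $L_{k-1}\le n^{1-\frac{1}{2k-1}}$ and otherwise applies a pigeonhole to the product $L_0L_1\cdot\prod_{i\ge 1}(L_{i+1}/L_i)=L_0L_{k-1}$. Your ingredient (i) also matches the paper's Lemma~\ref{lem:lb-bigsteps}. The fatal problem is ingredient (iii). The inter-scale bound $T=\tilde{\Omega}(nL_{i+1}/L_i)$ is \emph{false}: for the Weierstrassian walk with $b=\lfloor n^{2/(2k-1)}\rfloor$ and any $k\ge 3$ it would give $T=\tilde{\Omega}(n\cdot b)=\tilde{\Omega}(n^{1+\frac{2}{2k-1}})$, contradicting the matching upper bound $\tilde{O}(n^{1+\frac{1}{2k-1}})$ of Theorem~\ref{thm:main-upper}. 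The reason your per-arc argument cannot be repaired is the one you half-identify in your ``Main obstacle'' paragraph but then dismiss: a node inside an $L_{i+1}$-arc is typically visited as the \emph{landing point of a long jump originating outside the arc}, at essentially zero ``in-arc time'' cost, so neither the cover time nor the hitting time of the arc restricted to in-arc moves controls anything. The correct way to extract the inter-scale constraint is not per-arc but per-time-window: in any window of duration $L_{i+1}$ the walk visits only $O(k\sqrt{L_iL_{i+1}})$ distinct nodes (jumps of length $\ge L_{i+1}$ contribute at most one node each, and Kolmogorov's maximal inequality bounds the range of the shorter jumps), whence $T=\Omega\bigl(\frac{n}{k}\sqrt{L_{i+1}/L_i}\bigr)$ --- the square root of your claimed ratio. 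This is the paper's Lemma~\ref{lem:main-lb}, inherited from \cite{LATIN}; the actual new ingredient in the paper is not a strengthening of this ratio bound but the two ``extreme-scale'' bounds, $\Omega(n\sqrt{L_0L_1})$ and $\Omega(n^2/L_{k-1})$.

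Ingredient (ii) is also problematic. The claim $T=\tilde{\Omega}(L_1^2)$ is only justified when every longer step length is a multiple of $L_1$ (so that projecting mod $L_1$ kills all non-unit steps); for general parameters the projected walk on $C_{L_1}$ is itself heterogeneous and the arc can be covered much faster than $L_1^2$. More importantly, even granting it, it is the wrong strength: if you rerun your telescoping with the \emph{correct} inter-scale bounds $T\ge\frac{n}{k}\sqrt{L_{i+1}/L_i}$, the local bound $T\ge L_1^2$ forces weights that yield only $T=\tilde{\Omega}(n^{1+\frac{1}{4k-5}})$, whereas the paper's local bound $T=\Omega(n\sqrt{L_0L_1})$ (proved by the same window-counting argument as above, sharpened by noting that within a window of duration $L_1$ only multiples of $L_0$ are reachable, so only $O(\sqrt{L_1/L_0})$ nodes are visited) yields exactly $T=\Omega(\frac{n}{k}\,n^{\frac{1}{2k-1}})$ after telescoping. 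So the architecture of your proof is right, but both of your scale-coupling lemmas need to be replaced by the window-based node-counting bounds of Claim~\ref{claim:seen-per-phase}.
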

	
	\subsubsection{The Upper Bound}
	The following theorem implies that up to lower order terms, the cover time of the Weierstrassian random walk  matches the lower bound  of the cover time of any $k$-scales search, as given by Theorem~\ref{thm:main-lb}, for $2\leq k\leq \log n$, i.e., for all potential scales. 
	\begin{theorem}	\label{thm:main-upper}	Let $k$ be an integer such that $2\leq k\leq \log_2 n$. The  Weierstrassian random walk with parameter $b=\lfloor n^{\frac{2}{2k-1}} \rfloor $ is a $k$-scales search that achieves a cover time of: \[n^{1+\frac{1}{2k-1}} \cdot O\left(k^2\log^{2} n\right).\] 
	\end{theorem}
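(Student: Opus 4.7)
The proof proposal is to prove that by time $T = C k^2 n^{1+1/(2k-1)} \log^2 n$, for a suitable absolute constant $C > 0$, every node of $C_n$ has been visited with probability at least $1/2$, and then obtain the expectation bound via a restart argument using the strong Markov property. Throughout, set $b = \lfloor n^{2/(2k-1)} \rfloor$ so that $L_i = b^i$ and $p_i = c_b b^{-i}$. The expected duration of a single jump is $\sum_i p_i L_i = c_b k$, so during $[0,T]$ the walk executes $M = \Theta(T/k)$ jumps, and a Chernoff bound on the iid scale-choice sequence $(Z_1,\ldots, Z_M)$ ensures that the count $N_i$ of scale-$i$ jumps concentrates around $M p_i = \Theta(T/(k b^i))$ with failure probability $n^{-\omega(1)}$. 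The numerical identity that motivates the choice of $b$ is $\sum_i p_i L_i^2 = c_b \sum_i b^i = \Theta(b^{k-1})$; combined with $b^{k-1}\cdot n^{1+1/(2k-1)} = n^2$, this shows the diffusion span in time $T$ is $\Theta(n \sqrt{\log n})$, i.e., the walk is just able to cross $C_n$ within its budget.

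The main engine is a hierarchical covering argument. Decompose $X_t = \sum_{i=0}^{k-1} S^{(i)}_t$, where $S^{(i)}_t$ is the partial sum of all jumps of scale $i$ up to time $t$; conditional on the schedule of scale choices, the sub-walks $S^{(i)}$ are independent lazy $\{-b^i, 0, +b^i\}$ random walks on $b^i \mathbb{Z}$. I would invoke the following short-time strengthening of the classical cover-time bound as a single-scale black box: a lazy $\{-\ell, 0, +\ell\}$ random walk run for $N$ steps visits every point within distance $R$ of its starting position with probability at least $1 - n^{-2}$ provided $N \geq C'(R/\ell)^2 \log n$. With the chosen $T$, this hypothesis is satisfied by each scale $i\leq k-2$ with $R = b^{i+1}$ (one window of the next higher scale), since $N_i = \Theta(k \log^2 n \cdot n^{1+1/(2k-1)}/b^i)$; and by the top scale with $R = n$, since $N_{k-1} = \Theta(kb\log^2 n)$ while $(n/b^{k-1})^2 \log n = b\log n$, using $n/b^{k-1} = n^{1/(2k-1)}$.

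These per-scale estimates are combined by downward induction on $i$: with probability at least $1-O((k-i)/n)$, every point of the $i$-sub-lattice $b^i \mathbb{Z}_n$ is visited by $X$ within $[0,T]$. The base case $i=k-1$ is exactly the top-scale single-scale estimate applied to $S^{(k-1)}$. For the inductive step, I would exploit the key algebraic observation that $X_t \bmod b^{i+1}$ depends only on the scales $\leq i$, combined with independence of the sub-walks: whenever the walk lands on a new $(i+1)$-sub-lattice point $v$ (by the inductive hypothesis), the residual scale-$i$ work accumulated while the higher-scale position stays near $v$ suffices to cover the $L_{i+1}$-window around $v$ on the $i$-sub-lattice. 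Summing over sub-lattice points and taking $i=0$ yields coverage of all of $C_n$ with probability at least $1 - O(k/n) \geq 1/2$.

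The main technical obstacle, as flagged by the introduction, is the dependency between scale sub-walks that was glossed over in \cite{LATIN}: the number and placement of scale-$i$ steps performed while the walk resides in a prescribed window is not truly independent of the higher-scale trajectory defining those windows. To handle this rigorously I would either condition on the full scale-choice sequence $(Z_1,\ldots,Z_M)$ — making the $S^{(i)}$ conditionally independent lazy random walks — and then separately concentrate the number of scale-$i$ jumps falling inside each time window by an auxiliary Chernoff argument; or couple the true process with a product of genuinely independent scale walks on the infinite line and absorb the coupling error into the small-probability failure budget. A secondary subtlety is that $N_{k-1}$ is only of order $kb\log^2 n$ — orders of magnitude smaller than the other $N_i$ — so the top-scale analysis requires the sharp short-time hitting estimate stated above rather than the classical asymptotic cover-time bound, and one must verify that the scale-$k-1$ walk on its sub-cycle of size $n^{1/(2k-1)}$ truly covers rather than merely reaches a typical vertex.
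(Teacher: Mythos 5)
Your plan has a genuine quantitative gap in its core inductive step, and it is not one that can be repaired by bookkeeping: the local-covering mechanism you propose does not supply enough scale-$i$ steps. Take your own accounting at face value. Between two consecutive jumps of scale $\geq i+1$ the walk makes only $p_i/p_{i+1}=b$ jumps of scale $i$, so the scale-$i$ work accumulated ``while the higher-scale position stays at $v$'' is $b$ times the number of visits of the coarse walk to $v$. Even in the most favorable case where the coarse walk equidistributes over the $n/b^{i+1}$ points of its lattice, each coarse point receives about $Mp_{i+1}\cdot b^{i+1}/n=\tilde\Theta(k\sqrt{b})$ visits (using $b^{k}/n=n^{1/(2k-1)}=\sqrt{b}$ for your choice of $b$), hence only $\tilde\Theta(kb^{3/2})$ scale-$i$ steps — whereas your own black box demands $C'(b^{i+1}/b^{i})^{2}\log n=C'b^{2}\log n$ steps to cover the $b$ points of the window. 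The deficit is a factor $\sqrt{b}/\mathrm{polylog}(n)=n^{\Omega(1/k)}$. The case $k=2$ makes this concrete: $N_0=\tilde\Theta(n^{4/3})$ unit steps in total, but only $\tilde\Theta(b^{3/2})=\tilde\Theta(n)$ of them fall within any fixed coarse cell, against the $b^2=n^{4/3}$ needed to cover that cell by a unit-step walk. Closing the gap by enlarging $T$ costs a factor $\sqrt{b}$ and lands you at $n^{1+2/(2k-1)}$, not the claimed bound. (A secondary, fixable imprecision: ``every point of the $i$-sub-lattice is visited by $X$'' cannot be the inductive statement, since the scales $<i$ displace $X$ off that sub-lattice; the induction would have to be about $\sum_{j\geq i}S^{(j)}$, and then the base case you give does apply.)

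The reason the theorem is nevertheless true is that the fine scales do not need to \emph{cover} a window during any single coarse sojourn; they only need their residues $R_i=S_i\bmod b$ to \emph{mix} on $C_b$ over the whole time horizon, which takes $b^2$ scale-$i$ steps out of the $b^{k-i}\geq b^2$ available. Once each $R_i$ is mixed, $\Pr(X(m)=x)\approx b^{-(k-1)}(mp_{k-1})^{-1/2}$ for every target $x$ simultaneously, and the paper converts this into a hitting-time bound via the identity $\Pr(N\geq1)=\E(N)/\E(N\mid N\geq1)$ applied to the number of visits to $x$ in $[m_0,2m_0]$ (upper-bounding the denominator by the expected number of returns to the origin, controlled with short-time bounds $\Pr(R_i(m)=y)=O(1/\sqrt{mp_i})$), followed by Matthews' bound. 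Mixing of a residue costs $b^2$ steps to reach all $b$ residues ``in distribution,'' whereas covering costs $b^2$ steps \emph{per coarse sojourn}; this distinction is exactly where your budget of $n^{1+1/(2k-1)}$ is spent, and your proposal does not capture it. You correctly observe that $X\bmod b^{i+1}$ depends only on scales $\leq i$, but the right use of that observation is the modular/mixing argument, not a windowed covering argument.
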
 
	Observe that combining Theorems~\ref{thm:main-lb} and \ref{thm:main-upper} we obtain the best cover time $Cov_{k,n}$ achievable by a $k$-scales search on $C_n$, which is $\tilde{\Theta}\left(n^{1+\frac{1}{2k-1}}\right)$ for any $2\leq k\leq \log n$. For particular values of $k$, we thus have:
	\begin{equation*}
	\begin{array}{|c|c|c|c|c|c|c|c|}
	\hline
	k & 1 & 2 & 3 & 4 & 5 & \ldots & \log n
	\\
	\hline 	
	\E(t_{cov}(n,k)) & {\Theta}(n^2) & \tilde{\Theta}(n^{\frac 43}) & \tilde{\Theta}(n^{\frac 65}) & \tilde{\Theta}(n^{\frac 87}) &  \tilde{\Theta}(n^{\frac{10}{9}})&\ldots &O(n \log^3 n)  \\
	\hline
	\end{array}
	\end{equation*}
	
	Theorem \ref{thm:main-upper} follows immediately from the following more general theorem, by taking $b=n^{\frac{2}{2k-1}}$.
	
		\begin{restatable}[]{theorem}{ThmUp}\label{thm:main-up}
		Let $b, k, n$ be integers such that $b^{k-1}<n\leq b^k$.
		The cover time of the Weierstrassian random walk on $C_n$ with parameter $b$ is
		\[ O\left( n \max\left\{\frac{b^k}{n},\frac{n}{b^{k-1}}\right\} \cdot k^2 \cdot \log b\cdot \log n\right )=\tilde{O}\left(\max\left\{{b^k},\frac{n^2}{b^{k-1}}\right\} \right). \]

\end{restatable}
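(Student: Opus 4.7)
My plan is to bound the hitting time of any node $v\in C_n$ under the Weierstrassian walk by $O(T)$, with $T = n\max\{b^k/n,\,n/b^{k-1}\}\cdot k^2\log b\log n$, and then apply Matthews' method to convert this into the cover-time bound claimed. Fix $v$ and let $N_v(T)$ be the number of visits to $v$ during $[0,T]$. By Paley--Zygmund,
\[
\Pr(N_v(T)\ge 1)\;\ge\;\frac{\E[N_v(T)]^2}{\E[N_v(T)^2]},
\]
so it suffices to prove that this ratio is $\Omega(1/\log n)$.

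A preliminary step controls the number of jumps at each scale. The expected duration of a single step is $\sum_i p_iL_i/2=c_bk/2 = \Theta(k)$, so by Chernoff the total number of elementary steps in $[0,T]$ is concentrated around $T/(c_bk)$ and the number of scale-$i$ jumps around $T/(kb^i)$; a union bound over the $k$ scales gives a high-probability event on which all scale-counts are within constant factors of their means simultaneously. Conditioning on this event is what resolves the inter-scale dependencies flagged in the introduction as overlooked in \cite{LATIN}. For the first moment $\E[N_v(T)] = \sum_{t\le T}\Pr(X_t = v)$, I use a Fourier expansion on $C_n$: the step characteristic function is $\phi(\xi) = \tfrac12 + \sum_i(p_i/2)\cos(\xi L_i)$, so $\Pr(X_t = v) = \tfrac1n\sum_{\xi\in\widehat{C_n}}\E[\phi(\xi)^{m_t}]e^{-i\xi v}$, where $m_t$ is the random number of steps by time $t$. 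The identity $1-\phi(\xi) = \tfrac12\sum_ip_i(1-\cos(\xi L_i))\gtrsim\sum_ip_i\min(1,(\xi L_i)^2)$ gives geometric decay of each non-zero Fourier mode at a scale-dependent rate; summing over the $n-1$ non-zero modes shows $\Pr(X_t=v) = \tfrac1n(1+o(1))$ once $t$ exceeds a mixing threshold below $T$, and hence $\E[N_v(T)] = \Omega(T/n)$.

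For the second moment, a standard manipulation using the approximate stationarity on $C_n$ established in the first-moment step reduces matters to bounding the expected number of returns $G(T) = \sum_{u\le T}\Pr(X_u = 0\mid X_0 = 0)$: one has $\E[N_v(T)^2]\le 2\E[N_v(T)]\cdot G(T)$. I estimate $G(T)$ by splitting into dyadic windows in $u$ and, inside each window, bounding $\Pr(X_u = 0)$ scale-by-scale via short-time local limit estimates of the form $\Pr(X_u = 0) = O(1/\sqrt{u\,b^{i-1}/k})$ in the regime where scales $\le i$ dominate the spread. These per-scale bounds are the new short-time probability bounds for each step-length announced in the introduction. Summing telescopes to $G(T) = O((T/n)\log n)$ as soon as $T\gtrsim\max\{b^k,\,n^2/b^{k-1}\}$ up to polylogarithmic factors, and combining with the first-moment bound gives $\Pr(N_v(T)\ge 1) = \Omega(1/\log n)$ as required.

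The main obstacle, and the part requiring the tighter analysis of $L_{k-1}$ promised in the introduction, is the contribution of the largest scale to $G(T)$. Two competing sources of excess return probability must be controlled, and their balance is what produces the max $\{b^k,\,n^2/b^{k-1}\}$ in the statement. First, when $b^k\gg n$, a single scale-$(k-1)$ jump lands at a position that, modulo $n$, can coincide with $0$ with probability as large as $b^{k-1}/n$, so pairs of scale-$(k-1)$ jumps inflate $G(T)$; controlling this requires the concentration on the count of scale-$(k-1)$ jumps together with the per-scale return bound above. Second, in the opposite regime where $n/b^{k-1}$ is large, the walk coarse-grained on the quotient cycle of length $n/\gcd(n,L_{k-1})$ mixes slowly, and $\Pr(X_u = 0)$ stays well above $1/n$ for many $u$. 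The threshold $T\gtrsim\max\{b^k,\,n^2/b^{k-1}\}$ is precisely what simultaneously defeats both sources, and combined with the first-moment estimate and Matthews' method it yields the stated cover time.
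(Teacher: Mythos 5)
Your overall skeleton matches the paper's: bound hitting times via the ratio of ``expected visits to $x$ in a window'' to ``expected returns to the origin'' (your Paley--Zygmund step together with $\E[N_v^2]\le 2\E[N_v]G(T)$ is equivalent to the identity $\Pr(N\ge 1)=\E(N)/\E(N\mid N\ge 1)$ that the paper uses), condition on the per-scale jump counts to decouple the scales, use short-time local bounds per scale for the returns to the origin, and finish with Matthews' method. The threshold $\max\{b^k,n^2/b^{k-1}\}$ you identify is also the paper's $m_0$. Where you genuinely diverge is the core pointwise estimate: you propose Fourier analysis on $C_n$, whereas the paper decomposes the walk into $k$ dependent coordinate walks via the base-$b$ representation of $C_n\cong C_b\times\dots\times C_b\times C_{\hat n}$.

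The Fourier step as written has a real gap. The inequality $1-\phi(\xi)=\tfrac12\sum_i p_i(1-\cos(\xi L_i))\gtrsim\sum_i p_i\min(1,(\xi L_i)^2)$ is false: $1-\cos(\xi L_i)=2\sin^2(\xi L_i/2)$ vanishes whenever $\xi L_i\in 2\pi\Z$, and because $L_i=b^i$ this happens systematically. For $\xi=2\pi j/n$ with $jb^i\equiv 0\pmod n$ the scale-$i$ term contributes nothing, so for resonant frequencies such as $\xi\approx 2\pi/b^{j}$ only the scales below $j$ produce decay, giving $1-\phi(\xi)=\Theta(b^{-j-1})$ rather than $\Theta(1)$; the worst modes ($j$ near $k-1$, and the mode $\xi=2\pi/n$) are exactly what force $M\gtrsim b^{k}$ and $M\gtrsim n^2/b^{k-1}$ moves respectively. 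So the claim that each nonzero mode decays ``at a scale-dependent rate'' controlled by your displayed bound, and hence that $\Pr(X_t=v)=\tfrac1n(1+o(1))$ below $T$, does not follow from what you wrote; carrying it out requires classifying the frequencies $j$ by the largest scale that resonates with them, which is essentially the base-$b$ analysis in disguise and is the heart of the proof. (You clearly sense this in your last paragraph -- the two ``competing sources'' you describe are precisely these two extreme resonances -- but the first-moment argument you give does not yet handle them.) A secondary, fixable issue: the number of moves completed by time $T$ is not controlled ``by Chernoff,'' since individual step durations range up to $b^{k-1}$ and are heavy-tailed relative to their mean $\Theta(k)$; the paper sidesteps this entirely by working in move-count and converting to time only at the end via the optional stopping theorem (Wald), and you would need Bernstein/Chebyshev with the second moment $\Theta(b^{k-1})$ of the step duration, or the same Wald device.
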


The formal proof of Theorem \ref{thm:main-up} is deferred to Appendix \ref{sec:upper-bound-proof}. In Section \ref{sec:upper-intuition} we provide a sketch of the proof.

As mentioned, Theorem \ref{thm:main-up} using the particular value $b=n^{\frac{2}{2k-1}}$ gives a tight upper bound for $k$-scales search. However, since the Weierstrassian random walk is of independent interest as it is used in biology, it might be useful to understand
its cover time also for other values of $b$. 
Note that Lemmas \ref{lem:main-lb} and \ref{lem:lb-bigsteps} below, when applied to the Weierstrassian random walk on $C_n$, show that the cover time is at least
$\Omega\left ( \max\{ n\sqrt{b},\frac{n^2}{b^{k-1}} \}\right ).$
This is quite close to the bound  $\tilde{O}\left(\max\left\{{b^k},\frac{n^2}{b^{k-1}}\right\} \right)$ of the theorem. Indeed if $n\geq b^{k-\frac 12}$, both bounds match, up to logarithmic terms. If $n\leq b^{k-\frac 12}$, the ratio of the bounds is $\frac{b^{k-\frac 12}}{n}$.

	\section{The Lower Bound Proof}\label{sec:lower}
	The goal of this section is to establish the lower bound in Theorem \ref{thm:main-lb}.
	For this purpose, 
	consider a $k$-scales search $X$ on the cycle $C_n$ and denote $(L_i)_{i=0}^{k-1}$ its step lengths with $L_i  < L_{i+1}$ for all $i\in [k-2]$. For convenience of writing we also set $L_{k} = n$, but it should be clear that it is actually not a step length of the walk. Let $p_i$ denote the probability of taking the step length $L_i$.
	
	The theorem will follow from the combination of two lemmas. The first one, Lemma \ref{lem:main-lb}, stems from the analysis of the number of nodes that can be visited during $L_{i+1}$ time steps. It forces $L_0L_1$ as well as the ratios $L_{i+1}/L_i$ for all $1\leq i\leq k-1$ to be small enough in order to have a small cover time. The second one, Lemma~\ref{lem:lb-bigsteps}, comes from bounding the cover time by the time it takes to go to a distance of at least $n/3$. It forces $L_{k-1}$ to be big enough to have a small cover time.
	
	\begin{lemma}\label{lem:main-lb}
		The cover time of $X$ is  at least 
		\begin{itemize}
			\item $\Tcov=\Omega(n\sqrt{L_0 L_1})$.
			\item  $\Tcov=\Omega\left(\frac{n}{k}  \sqrt{\frac{L_{i+1}}{L_i}}\right)$ for any $1\leq i\leq k-1$.
		\end{itemize}
	\end{lemma}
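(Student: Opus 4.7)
The plan is to upper-bound $\E[|\{X_s : 0 \leq s \leq T\}|]$, the expected number of distinct nodes visited by $X$ during wall-clock time $T$. If this expectation is at most $n/2$, then Markov's inequality applied to a uniformly random target $y\in C_n$ (together with Fubini) gives $\Pr[\tau_y > T] \geq 1/2$ for some $y$, hence $\Tcov \geq T/2$. It therefore suffices, in each bullet, to obtain an upper bound on the expected distinct-count and to solve for the largest admissible $T$.

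For the first bullet I will call a non-stay elementary step of $X$ in $[0, T]$ \emph{small} if it is at scale $0$ (length exactly $L_0$) and \emph{large} otherwise (length at least $L_1$). Each large step consumes at least $L_1$ units of time, so their number $M - 1$ is at most $T/L_1$, and they split the trajectory into $M$ maximal runs of scale-$0$ steps. Within a run of $s_j$ scale-$0$ steps the walker stays on a fixed $L_0$-lattice and performs the trace of a symmetric $\pm 1$ walk, so visits at most $O(\sqrt{s_j}+1)$ distinct lattice points. The deterministic bounds $\sum_j s_j \leq T/L_0$ and $M \leq T/L_1 + 1$, together with Cauchy--Schwarz, give $\sum_j \sqrt{s_j} \leq \sqrt{M \sum_j s_j} = O(T/\sqrt{L_0 L_1})$, which is the desired upper bound on the distinct-count; enforcing it below $n/2$ yields $T=\Omega(n\sqrt{L_0 L_1})$.

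For the second bullet ($1\leq i\leq k-1$) I use the same split, but now \emph{small} means scale $\leq i$ and \emph{large} means scale $\geq i+1$; again $M-1 \leq T/L_{i+1}$ large steps cut the trajectory into $M$ blocks of wall-clock durations $t_1, \ldots, t_M$ summing to at most $T$. The small-walk increment has per-elementary-step variance $\tfrac12 \sum_{j\leq i} p_j L_j^2$, while the mean wall-clock time per elementary step is $\tfrac12\sum_j p_j L_j = \bar L/2$, so by a standard compound-renewal identity the small walk has variance $\sum_{j \leq i} p_j L_j^2/\bar L \leq L_i$ per unit wall-clock time (using $L_j \leq L_i$ for $j \leq i$). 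Doob's $L^2$ maximal inequality then bounds the range, and hence the number of distinct nodes visited, inside block $j$ by $O(\sqrt{t_j L_i})$ in expectation. Summing, applying Cauchy--Schwarz, and using the deterministic bounds $M \leq T/L_{i+1}+1$ and $\sum_j t_j\leq T$ gives
\[
\E[|\{X_s : 0 \leq s \leq T\}|] \;\leq\; M + C\sqrt{M L_i \textstyle\sum_j t_j} \;\leq\; C'\, T\sqrt{L_i/L_{i+1}},
\]
and solving $C' T\sqrt{L_i/L_{i+1}} \leq n/2$ yields $T=\Omega(n\sqrt{L_{i+1}/L_i})$. The extra $1/k$ factor in the statement will enter either by being conservative with the hidden constants or by union-bounding over the at most $k$ sub-scales that feed into the small walk.

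The main obstacle will be the random block decomposition in the second bullet: the quantities $M$, $(t_j)$, and the small-walk variances inside each block are all correlated random variables, so one must condition on the block structure before applying Doob's inequality, and then move the expectation through $\sqrt{\cdot}$ (via Jensen) and through the Cauchy--Schwarz step before invoking the deterministic bounds on $M$ and on $\sum_j t_j$. Once this bookkeeping is in place, the remaining algebra is routine.
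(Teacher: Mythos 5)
Your proposal is correct, and while it shares the paper's high-level strategy (bound the expected number of distinct nodes visited per unit of wall-clock time, after observing that jumps of length at least $L_{i+1}$ contribute at most one new node each), the execution differs in two genuine ways. First, the paper cuts time into fixed phases of duration exactly $L_{i+1}$ and bounds the expected range per phase, whereas you cut the trajectory at the large jumps themselves and apply Cauchy--Schwarz to the resulting random block lengths; both work, and your pathwise bounds $M-1\leq T/L_{i+1}$ and $\sum_j t_j\leq T$ let you push expectations through cleanly once you condition on the scale sequence, which is legitimate because the signs $\xi_s$ are independent of the lengths $V_s$. Second, and more substantively, for $i\geq 1$ the paper bounds the displacement of the combined small walk via the triangle inequality $D\leq\sum_{j\leq i}D_j$ and sums the first moments $\E(D_j)=O(\sqrt{L_{i+1}L_j})$ over the up to $k$ participating scales --- this is exactly where the $1/k$ loss in the statement comes from. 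You instead treat the combined small walk as a single martingale whose conditional variance over a block of duration $t$ is $\sum_{\ell\leq i}m_\ell L_\ell^2\leq L_i\sum_{\ell\leq i} m_\ell L_\ell\leq L_i t$ (with $m_\ell$ the number of scale-$\ell$ steps in the block), and Doob/Kolmogorov gives range $O(\sqrt{L_i t})$ directly. This yields the stronger conclusion $\Tcov=\Omega\bigl(n\sqrt{L_{i+1}/L_i}\bigr)$ with no $1/k$ factor, which of course implies the lemma as stated. Two minor remarks: your ``compound-renewal'' justification of the variance rate is most cleanly replaced by the deterministic conditional computation just quoted, since within a block the renewal rate is not the unconditional one; and your final reduction via a uniformly random target and Fubini is equivalent in strength to the paper's Markov-inequality step on the total visited count.
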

	
	The second part of Lemma \ref{lem:main-lb} was already given in \cite{LATIN}. We sketch here the ideas behind the proof of the first part, namely, that the cover time is at least of order $n\sqrt{L_0 L_1}$. Essentially, we count the expected number of nodes that can be visited in a time duration of $L_1$, which we call a {\em phase}. A jump of length $L_i\geq L_1$ will not contribute to visiting a new node during this time duration. Thus, we may suppose that there are only jumps of length $L_0$. Since $L_1 \leq n$, the process does not do a turn of the cycle and, therefore, it can be viewed as a walk on $\Z$. Furthermore, since every jump has length $L_0$, we can couple this walk by a corresponding simple random walk, that does steps of length $1$, during a time duration of ${L_1}/{L_0}$. The expected number of nodes visited during a phase is thus of order $\sqrt{{L_1}/{L_0}}$. It follows that we need at least $n /(\sqrt{{L_1}/{L_0}})$ such phases before covering the cycle. Since a phase lasts for $L_1$ time, the cover time is at least of order $n \sqrt{L_0L_1}$. The full proof of Lemma~\ref{lem:main-lb}, including the part that was proven in \cite{LATIN}, appears, for completeness, in Appendix \ref{app:lem1}.

	\begin{lemma} \label{lem:lb-bigsteps}The cover time of $X$ is at least $\Omega(n^2 \frac{\mu}{\sigma^2})$, 
		where $\mu=\frac 12 \sum_{i\leq k-1}p_iL_i$ and $\sigma^2=\frac 12 \sum_{i\leq k-1}p_iL_i^2$ are the mean and variance of the jump lengths, respectively. In particular, the cover time is: \[\Tcov=\Omega\left(\frac{n^2}{L_{k-1}}\right).\]
	\end{lemma}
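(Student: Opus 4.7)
My plan is to bound $\Tcov$ from below by the time the walker needs to reach a displacement of order $n$ on the underlying line, and then convert this step-count bound into a time bound via Wald's identity. First I would lift the cycle walk to a walk $(Y_m)_{m\geq 0}$ on $\mathbb{Z}$ starting from $Y_0=0$, with i.i.d. increments $S_m\in\{0,\pm L_0,\ldots,\pm L_{k-1}\}$ distributed as in Definition~\ref{sec:model}, and set $\tau_m:=|S_m|$ as the duration of the $m$-th step. Since the set of nodes ever visited on the cycle is the image of $\{Y_0,\ldots,Y_m\}\subseteq\mathbb{Z}$ under reduction mod $n$, covering the cycle forces $\max_{0\leq j\leq m}Y_j-\min_{0\leq j\leq m}Y_j\geq n-1$, and hence in particular $\max_{0\leq j\leq m}|Y_j|\geq n/3$ at the cover step $M_{cov}$. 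Defining the first-passage step count
\[
N_*\;=\;\min\{m\geq 0:|Y_m|\geq n/3\},
\]
we then have $M_{cov}\geq N_*$ and $\Tcov\geq \E\bigl[\sum_{i=1}^{N_*}\tau_i\bigr]$.

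Next I would apply the martingale $Y_m^2-m\sigma^2$, whose increments have conditional expectation $\E[S_{m+1}^2]=\sigma^2$. Optional stopping at the bounded stopping time $N_*\wedge K$ gives $\E[Y_{N_*\wedge K}^2]=\sigma^2\E[N_*\wedge K]$. Since $\{|Y_{N_*\wedge K}|\geq n/3\}\supseteq\{N_*\leq K\}$ and $Y^2\geq 0$, the left-hand side is at least $(n/3)^2\Pr(N_*\leq K)$. The lifted walk is a mean-zero, finite-variance, integer-valued random walk on $\mathbb{Z}$, hence recurrent, so $N_*<\infty$ almost surely and $\Pr(N_*\leq K)\to 1$ as $K\to\infty$. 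Monotone convergence applied to $N_*\wedge K$ on the right then yields $\sigma^2\E[N_*]\geq n^2/9$.

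To turn the step-count bound into a time bound I invoke Wald's identity for the nonnegative i.i.d.\ step durations $\tau_i$ with mean $\mu=\tfrac12\sum_i p_i L_i$, namely $\E\bigl[\sum_{i=1}^{N_*}\tau_i\bigr]=\mu\E[N_*]$ (established by truncating $N_*$ at $K$, applying the bounded version of Wald, and passing to the limit via monotone convergence of the nonnegative $\tau_i$, which also covers the trivial case $\E[N_*]=\infty$). Combining, $\Tcov\geq \mu\E[N_*]\geq \mu n^2/(9\sigma^2)=\Omega(n^2\mu/\sigma^2)$, which is the main statement. The corollary $\Tcov=\Omega(n^2/L_{k-1})$ then follows immediately from the elementary bound $\sigma^2=\tfrac12\sum_i p_i L_i^2\leq L_{k-1}\cdot\tfrac12\sum_i p_i L_i=L_{k-1}\mu$.

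The delicate point, where I expect the main care to be required, is the optional-stopping and Wald step: a priori neither $\E[N_*]$ nor $\E\bigl[\sum_{i=1}^{N_*}\tau_i\bigr]$ need be finite, so both identities must be obtained by first truncating $N_*$ at $K$ and then letting $K\to\infty$ using monotone convergence of nonnegative quantities, rather than by a direct application of the bounded-expectation versions. Everything else is a routine reduction of the cover-time bound to a first-passage question for a mean-zero random walk on $\mathbb{Z}$ with variance $\sigma^2$ per step.
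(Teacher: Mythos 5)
Your proof is correct, and it reaches the key estimate $\E(m_{cov})=\Omega(n^2/\sigma^2)$ by a somewhat different technical route than the paper. Both arguments share the same skeleton: reduce covering to the statement that the lifted walk must travel distance $\Omega(n)$ on $\Z$, bound the expected number of \emph{steps} needed for that, and convert steps to time via Wald's identity with mean step duration $\mu$. Where you differ is in the middle step. The paper fixes $m=2\E(m_{cov})$, applies Markov's inequality to argue that with probability $\geq 1/2$ the cycle is covered by step $m$, bounds the number of visited nodes by twice the maximal displacement $D_m$, and then invokes the cited bound $\E(D_m)=O(\sigma\sqrt{m})$ for general one-dimensional walks to force $m=\Omega(n^2/\sigma^2)$. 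You instead work with the first-passage time $N_*$ to level $n/3$ and lower-bound $\E[N_*]$ directly by optional stopping on the quadratic martingale $Y_m^2-m\sigma^2$, truncating at $N_*\wedge K$ and letting $K\to\infty$; the overshoot only helps since you need a lower bound on $\E[Y_{N_*}^2]$, and recurrence (or, even more simply, the observation that $N_*=\infty$ forces the cover time to be infinite) handles $\Pr(N_*\leq K)\to 1$. Your version is self-contained --- it replaces the external maximal-displacement reference with a textbook optional-stopping computation --- at the cost of the extra (correctly handled) bookkeeping around truncation and Wald for possibly infinite $\E[N_*]$. The derivation of the corollary from $\sigma^2\leq L_{k-1}\mu$ matches the paper's.
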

	
	\begin{proof}
	Let $m_{cov}$ denote the random number of \textit{steps} before all nodes of $C_n$ are covered, and let $t_{cov}$ be the random cover time of the process. By Wald's identity, we have:
	\begin{equation}\label{eq:time-from-move} \E(t_{cov})=\E(m_{cov})\cdot \mu,
	\end{equation}
	where $\mu=\frac 12 \sum_{i=0}^{k-1} p_i L_i $ is the expected length, and hence the expected time, of a jump (the factor $\frac 12$ comes from the laziness). By Markov's inequality, we have: \[\Pr \left( m_{cov} < 2 \E(m_{cov})\right) \geq  1/2.\]
	Let $N_m$ be the (random) number of  nodes visited by step $m$. We have:
	\[\E(N_{2\E(m_{cov})}) \geq \E\left(N_{2\E(m_{cov})} \mid m_{cov} < 2 \E(m_{cov})\right)\cdot \Pr\left (m_{cov} < 2 \E(m_{cov})\right) \geq n\cdot  \frac 12. \]
	Define $D_m$ as the maximal distance of the process from step $0$ up to step $m$, i.e., $D_m=\max_{s\leq m} \lvert X(s)\rvert $. Since $N_m\leq 2D_m+1$, we have:
	\[ 2\E(D_{2\E(m_{cov})})+1\geq \E(N_{2\E(m_{cov})}) \geq  n/2. \]
	As shown in \cite{Majumdar} for general one-dimensional random walks, we have $\E(D_m)=O\left( \sigma \sqrt{m} \right )$, 
	where $\sigma$ is the standard deviation of the length distribution, i.e., $\sigma^2=\frac 12 \sum_i p_i L_i^2$. Thus, we have:
	\[ \sqrt{\E(m_{cov})}\sigma =\Omega(n), \]
	and so:
	\[ \E(m_{cov}) =\Omega\left( \frac{n^2}{\sigma^2} \right) , \]
	and by Eq.~\eqref{eq:time-from-move}, we get:
	\[ \E(t_{cov}) =\Omega\left( n^2 \frac{\mu}{\sigma^2} \right) = \Omega \left( n^2 \frac{\sum_{i=0}^{k-1} L_ip_i}{\sum_{i=0}^{k-1} L_i^2 p_i}\right) , \]
	which proves the first part of the lemma. 
	
	In order to prove the second part, note that since $L_{k-1}$ is the biggest step length, we have  $\sum_{i=0}^{k-1} p_iL_i (1-\frac{L_i}{L_{k-1}})\geq 0$, and so $\frac{\sum_{i=0}^{k-1}L_ip_i}{\sum_{i=0}^{k-1} L_i^2 p_i} \geq \frac{1}{L_{k-1}}$. Therefore,
	\[ \E(t_{cov}) =  \Omega\left ( \frac{n^2}{L_{k-1}}\right) , \]
	which completes the proof of Lemma \ref{lem:lb-bigsteps}. \end{proof}
	
	Next, it remains to show how Theorem \ref{thm:main-lb} follows by combining Lemma \ref{lem:main-lb} and Lemma \ref{lem:lb-bigsteps}. First, consider the lower bound of $\Omega({n^2}/{L_{k-1}})$ in Lemma \ref{lem:lb-bigsteps}. If $L_{k-1}\leq n^{1-\frac{1}{2k-1}}$ then the bound in Theorem \ref{thm:main-lb} immediately follows. Let us therefore assume that $L_{k-1}>n^{1-\frac{1}{2k-1}}$. 
	
	Define $\alpha_0=L_0L_1$ and $\alpha_i=\frac{L_{i+1}}{L_i}$ for $i\in\{1,2, \ldots,k-2\}$. As
	
	\[ \prod_{i=0}^{k-2} \alpha_i = L_0L_{k-1}, \]
	there must exists an index $0\leq i\leq k-2$ such that $\alpha_i \geq \left(L_0L_{k-1}\right)^{\frac{1}{k-1}}$. Thus, by Lemma \ref{lem:main-lb}, the cover time is at least 
	\[ \Omega\left(\frac{n}{k} \left(L_0L_{k-1}\right)^{\frac{1}{2(k-1)}}\right). \]
	Since $L_{k-1}>n^{1-\frac{1}{2k-1}}=n^{\frac{2k-2}{2k-1}}$ and $L_0\geq 1$, we conclude that the cover time is at least
	
	\[ \Tcov=\Omega\left(\frac{n}{k} \cdot n^{\frac{1}{2k-1}}\right),\] 
	as desired. This completes the proof of Theorem \ref{thm:main-lb}.\qed

	\section{Upper Bound Proof (Sketch)}\label{sec:upper-intuition}

	Let us give the key ideas of the proof of Theorem \ref{thm:main-up}. 
	Some of the initial steps in the proof follow the technique in \cite{LATIN} (by doing so, we also corrected some mistakes in \cite{LATIN}). These parts are clearly mentioned below. Our main technical contribution that allowed us to obtain the precise upper bound, is the use of short-time probability bounds (see Eq.~\eqref{eq:R-distribution}), and a tighter analysis of the part of the walk that corresponds to the largest step length $L_{k-1}$. 
	
	In more details, let us consider the Weierstrassian walk on $C_n$, termed $X$.
	The following lemma establishes a link between the cover time of $X$ and the point-wise probabilities of $X$. For completeness, we provide a formal proof of it in Appendix \ref{sec:upper-bound-proof}, although it is not hard to obtain it using the technique in \cite{LATIN}. 
	
	\begin{restatable}[]{lemma}{PreliminaryLemma}\label{lem:cov-from-pointwise}
		If $p>0$ and $m_0>0$ are such that, for any $x\in \{0,\dots, n-1\}$,
		\begin{align}\label{eq:trick-refined}
		\frac{\sum_{m=m_0}^{2m_0}  \Pr(X(m)=x)}{\sum_{m=0}^{m_0} \Pr(X(m)=0)} \geq p,
		\end{align}
		then the cover time of the Weierstrassian random walk $X$ on the cycle $C_n$ is $O\left (m_0 p^{-1} k \log n \right )$.	
	\end{restatable}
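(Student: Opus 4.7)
The plan is to pass from the pointwise-probability hypothesis \eqref{eq:trick-refined} first to an expected-hitting-time bound (in number of steps), then to an expected cover-time bound (in the paper's time units) via Matthews' method and Wald's identity. The factor $k$ in the conclusion will emerge only at the very last step, from the fact that the mean duration of a single Weierstrassian step is $\Theta(k)$.

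First I would exploit translation invariance: the step distribution is position-independent, so $\Pr_y(X(m)=x)=\Pr_0(X(m)=x-y)$ and in particular $\Pr_y(X(m)=y)=\Pr_0(X(m)=0)$. The hypothesis then reads, for every starting point $y$ and every target $x$,
$$\sum_{m=m_0}^{2m_0}\Pr_y(X(m)=x)\;\geq\;p\sum_{m=0}^{m_0}\Pr_0(X(m)=0).$$
Decomposing along the first hitting time $\tau_x$ of $x$ from $y$ and applying the strong Markov property together with translation invariance gives
$$\sum_{m=m_0}^{2m_0}\Pr_y(X(m)=x)\;\leq\;\Pr_y(\tau_x\leq 2m_0)\sum_{j=0}^{2m_0}\Pr_0(X(j)=0).$$
Since $X$ is lazy with holding probability $1/2$ and reversible with symmetric step distribution, the spectral decomposition of its transition kernel shows that $m\mapsto\Pr_0(X(m)=0)$ is nonincreasing in $m$, which yields $\sum_{j=0}^{2m_0}\Pr_0(X(j)=0)\leq 2\sum_{m=0}^{m_0}\Pr_0(X(m)=0)$. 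Combining the three displayed inequalities gives $\Pr_y(\tau_x\leq 2m_0)\geq p/2$ uniformly in $y$ and $x$.

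Iterating this hitting bound in successive blocks of $2m_0$ steps via the strong Markov property yields $\E_y[\tau_x]=O(m_0/p)$ for all $y,x$. Matthews' inequality (Theorem~11.2 of~\cite{Peres}) then bounds the expected number of steps to cover $C_n$ by $O(m_0\log n/p)$. Finally, the expected duration of a single Weierstrassian step is
$$\mu=\tfrac{1}{2}\sum_{i=0}^{k-1}p_iL_i=\tfrac{1}{2}\sum_{i=0}^{k-1}c_b=\tfrac{1}{2}c_b k=\Theta(k),$$
so Wald's identity (used already in the proof of Lemma~\ref{lem:lb-bigsteps}) converts the bound on the number of steps into the advertised bound $\E(t_{cov})=O(m_0 k\log n/p)$ in time units.

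The cleanest ingredients are the first-passage decomposition, the strong Markov iteration, Matthews, and Wald; the delicate step I would scrutinise is the monotonicity of the return probability, which on a finite cycle requires a careful appeal to the spectral decomposition of the lazy reversible transition kernel (noting that all eigenvalues lie in $[0,1]$ thanks to laziness, so that $\Pr_0(X(m)=0)=\sum_i\lambda_i^m\phi_i(0)^2$ is genuinely nonincreasing in $m$, not just asymptotically). A secondary point to verify is that the sum $\sum_m\Pr_0(X(m)=0)$ that appears in the denominator of \eqref{eq:trick-refined} is strictly positive at $m=0$, so the inequality is nonvacuous.
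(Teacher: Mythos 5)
Your proof is correct, and the overall architecture (uniform bound on $\Pr_y(\tau_x\leq O(m_0))$, iteration over blocks, Matthews' bound, then a Wald-type conversion from steps to time contributing the factor $\Theta(k)$) matches the paper's. The one place where you genuinely diverge is the key step converting the hypothesis \eqref{eq:trick-refined} into a hitting probability. The paper applies the identity $\Pr(N\geq 1)=\E(N)/\E(N\mid N\geq 1)$ to the number $N_x(m_0,2m_0)$ of visits to $x$ in the window $[m_0,2m_0]$, and bounds the conditional expectation by $\E_x(N_x(0,m_0))=\sum_{m=0}^{m_0}\Pr_0(X(m)=0)$ via the Markov property: conditioned on the first hit occurring at some time $s\geq m_0$, only $2m_0-s\leq m_0$ steps remain in the window, so the return sum of length $m_0$ appears directly and no monotonicity is needed. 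You instead use the unconditional first-passage decomposition, which forces you to compare $\sum_{j=0}^{2m_0}\Pr_0(X(j)=0)$ with $\sum_{m=0}^{m_0}\Pr_0(X(m)=0)$, and you correctly identify that this requires the monotonicity of $m\mapsto\Pr_0(X(m)=0)$. That fact does hold here: the chain on $C_n$ is a symmetric circulant with holding probability $1/2$, so all eigenvalues lie in $[0,1]$ and the return probability is nonincreasing; you lose only a harmless factor of $2$. So your route is valid but imports an extra (standard) spectral ingredient that the paper's conditioning argument avoids. One further remark: the appendix version of this lemma is actually stated for the walk $Z$ on $\Z$ (using $m_{hit}^{C_n}(x)\leq m_{hit}^{\Z}(x)$), where your spectral argument for the cycle would have to be replaced by a Fourier-analytic one; for the main-text statement on $C_n$ as given, your argument is complete.
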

	Using Lemma \ref{lem:cov-from-pointwise}, the bound of Theorem \ref{thm:main-up} can be established by proving bounds on the probability to visit node $x\in [0,n)$ at step $m$.
	
	In order to simplify the presentation, assume first that  $n=b^k$. Proceeding first as in \cite{LATIN}, we view the $k$-lengths Weierstrassian random walks as $k$ (dependent) random walks, by grouping together the jumps of the same length (see Figure \ref{fig:intuition-ub}). Define $S_i(m)$ as the algebraic count of the jumps of lengths $b^i$. E.g., if, by step $m$, there are exactly four positive jumps of length $b^i$, and one negative, then $S_i(m)=3$. We have: \[X(m)=\sum_{i=0}^{k-1} S_i(m) b^i.\] Define also the following decomposition of $C_n$.
	
	\begin{definition}[Base $b$ decomposition]
		For any $x \in C_n$, we may decompose $x$ in base $b$ as \[x = \sum_{i=0}^{k-1} x_i b^i,\] with $0\leq x_i< b$. We call $x_i$ the $i$-th coordinate of $x$ (in base $b$). 
	\end{definition}
	
	It follows from Euclidean division, and the fact that $n=b^k$, that the base $b$ decomposition is well-defined and unique for every $x \in C_n$. This decomposition is illustrated in Figure \ref{fig:intuition-ub} (where we have taken $n=\hat{n}b^{k-1}$ to anticipate the more general case to follow).
	
	Note that $X(m)=x$ in $C_n$ if and only if \begin{equation}
	\label{eq:S-mod-b}
	\sum_i (S_i(m)-x_i)b^i = 0\mod n.
	\end{equation} By taking Eq.~\eqref{eq:S-mod-b} modulo $b^i$, for $i\leq k-1$, it is easy to show that Eq.~\eqref{eq:S-mod-b} is equivalent to
	\begin{equation*}
	S_i(m) =y_i\mod b,
	\end{equation*}
	for $y_i:=x_i-b^{-i}\sum_{j<i} (S_j(m)-x_j)b^j \mod b$.
	
	Thus, $X(m)=x$ is equivalent to $R_i(m)=y_i$ for all $i$, where $R_i=S_i\mod b$ is a random walk on $C_b$ that moves with probability $\frac{p_i}{2}$. This process is illustrated in Figure \ref{fig:intuition-ub}, where $X(m)=7$ is equivalent to $R_0(m)=3$ and $R_1(m)=2$.
	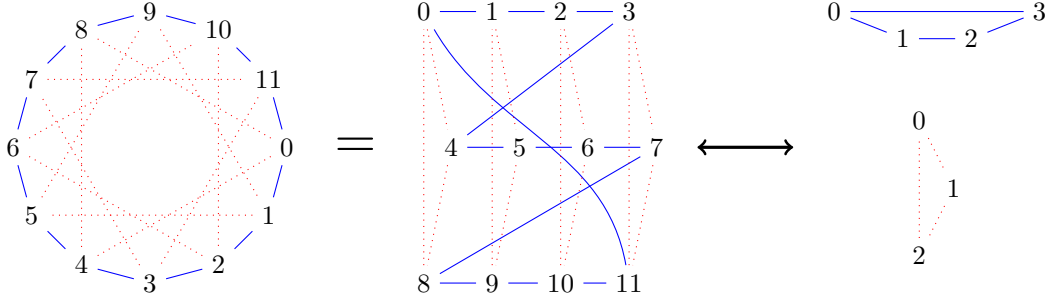
\begin{figure}
		\centering
		\begin{tikzpicture}[scale=1.8]
		
		\tikzset{NodeTe/.style={scale=1}};
		
		\path (0:1) node [NodeTe] (0) {$0$};
		\path (30:1) node [NodeTe] (11) {$11$};
		\path (60:1) node [NodeTe] (10) {$10$};
		\path (90:1) node [NodeTe] (9) {$9$};
		\path (120:1) node [NodeTe] (8) {$8$};
		\path (150:1) node [NodeTe] (7) {$7$};
		\path (180:1) node [NodeTe] (6) {$6$};
		\path (210:1) node [NodeTe] (5) {$5$};
		\path (240:1) node [NodeTe] (4) {$4$};
		\path (270:1) node [NodeTe] (3) {$3$};
		\path (300:1) node [NodeTe] (2) {$2$};
		\path (330:1) node [NodeTe] (1) {$1$};

		\draw [blue] (0)--(1)--(2)--(3)--(4)--(5)--(6)--(7)--(8)--(9)--(10)--(11)--(0);
		
		\draw [red, dotted] (0)--(4)--(8)--(0);
		\draw [red, dotted] (1)--(5)--(9)--(1);
		\draw [red, dotted] (2)--(6)--(10)--(2);
		\draw [red, dotted] (3)--(7)--(11)--(3);
		
		%%%%second disposition
		
		\draw(1.5,0) node[scale=2] {$=$};
		
		\path (3.5,-1) node [NodeTe] (11') {$11$};
		\path (3,-1) node [NodeTe] (10') {$10$};
		\path (2.5,-1) node [NodeTe] (9') {$9$};
		\path (2,-1) node [NodeTe] (8') {$8$};
		\path (3.7,0) node [NodeTe] (7') {$7$};
		\path (3.2,0) node [NodeTe] (6') {$6$};
		\path (2.7,0) node [NodeTe] (5') {$5$};
		\path (2.2,0) node [NodeTe] (4') {$4$};
		\path (3.5,1) node [NodeTe] (3') {$3$};
		\path (3,1) node [NodeTe] (2') {$2$};
		\path (2.5,1) node [NodeTe] (1') {$1$};
		\path (2,1) node [NodeTe] (0') {$0$};
		
		\draw [blue] (0')--(1')--(2')--(3');
		\draw[blue] (3')--(4');
		\draw [blue] (4')--(5')--(6')--(7');
		\draw[blue,] (7')--(8');
		\draw[blue] (8')--(9')--(10')--(11');
		\draw [blue, >=latex] (11') to[out=100, in=300] (0');
		
		\draw [red, dotted] (0')--(4')--(8')--(0');
		\draw [red, dotted] (1')--(5')--(9')--(1');
		\draw [red, dotted] (2')--(6')--(10')--(2');
		\draw [red, dotted] (3')--(7')--(11')--(3');

		%%%third part: the two cycles
		
		\path (5,1) node [NodeTe] (x00) {$0$};
		\path (5.5,0.8) node [NodeTe] (x01) {$1$};
		\path (6,0.8) node [NodeTe] (x02) {$2$};
		\path (6.5,1) node [NodeTe] (x03) {$3$};
		\draw[blue] (x00)--(x01)--(x02)--(x03)--(x00);
		
		\path (5.62,0.2) node [NodeTe] (x10) {$0$};
		\path (5.87,-0.3) node [NodeTe] (x11) {$1$};
		\path (5.62,-0.8) node [NodeTe] (x12) {$2$};
		\draw[red, dotted] (x10)--(x11)--(x12)--(x10);
		
		%%connections
		\draw[very thick, <->] (4,0) -- (4.7,0);
		\end{tikzpicture}
		\caption{The first two graphs represent, in different node disposition, the Weierstrassian walk on $C_{12}$ with parameter $b=4$. There are $k=2$ jump lengths, $L_0=1$ (blue edges) and $L_1=b=4$ (red, dotted edges). To the right, we show the decomposition of $C_{12}$  as $C_4 \times C_3$. For instance the node $x=7 \in C_{12}$ will be represented by $x_0=3\in C_4$ and $x_1=1\in C_3$.
		}\label{fig:intuition-ub}
	\end{figure}

	Unfortunately, the $R_i$'s and the $y_i$'s are not independent, due to the fact that only one of the $R_i$ can change between steps $m$ and $m+1$, however, let us overlook this issue in this informal outline. We then have:
	\begin{equation}\label{eq:Z-prod-R}\Pr(X(m)=x) \approx \prod_{i=0}^{k-1}\Pr(R_i(m)=y_i). \end{equation}
	Recall that $R_i$ is a random walk over $C_b$ 
	that moves with probability $p_i$. The following is a well-known property of the random walk a cycle (see, e.g., Example 5.7 and Proposition 6.18 in \cite{Aldous}):
	\begin{claim}\label{claim:lwr-distribution} For a simple random walk $R$ on $C_b$ that moves with probability $\frac 12$, and any $y\in C_b$,
		\begin{equation}
		\Pr\left(R(m)=y\right ) = \begin{cases} O\left(1/\sqrt{m}\right) \text{ if } m<b^2 \\  b^{-1}(1\pm \eps_m) \text{ if }m\geq b^2, \end{cases} \end{equation}
		with $\eps_m =O(e^{-cmb^{-2}})$ where $c>0$.
	\end{claim}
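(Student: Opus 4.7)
The plan is to handle the two regimes separately, since they have different origins: the long-time bound reflects the spectral gap of the chain, while the short-time bound reflects the local CLT for a random walk on $\mathbb{Z}$.

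For the long-time regime $m\geq b^2$, I would use the spectral decomposition of the lazy simple random walk on $C_b$. Its transition matrix $P$ ($P(x,x)=1/2$, $P(x,x\pm 1)=1/4$) is circulant, so it is diagonalized by the characters $\chi_j(y)=e^{2\pi i j y/b}$ with real eigenvalues
\[
\lambda_j \;=\; \tfrac{1}{2} + \tfrac{1}{2}\cos(2\pi j/b), \qquad j\in\{0,1,\ldots,b-1\}.
\]
Fourier inversion gives $\Pr(R(m)=y) = \tfrac{1}{b}\sum_{j=0}^{b-1}\lambda_j^m e^{2\pi i j y/b}$, so isolating the $j=0$ term yields
\[
\left|\Pr(R(m)=y) - \tfrac{1}{b}\right| \;\leq\; \tfrac{1}{b}\sum_{j=1}^{b-1}\lambda_j^m.
\]
Using the elementary inequality $1-\cos\theta \geq \tfrac{2}{\pi^2}\theta^2$ for $|\theta|\leq\pi$, one obtains $\lambda_j \leq \exp\!\big(-c\,\min(j,b-j)^2/b^2\big)$ for some universal $c>0$; by pairing $j$ with $b-j$ and summing,
\[
\sum_{j=1}^{b-1}\lambda_j^m \;\leq\; 2\sum_{j=1}^{\infty} e^{-c m j^2/b^2} \;=\; O\!\left(e^{-cm/b^2}\right)
\]
as soon as $m\geq b^2$, since in that range the $j=1$ term dominates the geometric-like tail. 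Dividing by $b$ gives the second case of the claim with $\eps_m=O(e^{-cm/b^2})$.

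For the short-time regime $m<b^2$, I would lift $R$ to an integer walk. Let $Z(m)=\sum_{s=1}^m \xi_s$ with i.i.d.\ increments $\xi_s\in\{-1,0,+1\}$ of probabilities $(\tfrac14,\tfrac12,\tfrac14)$; then
\[
\Pr(R(m)=y) \;=\; \sum_{k\in\mathbb{Z}}\Pr(Z(m)=y+kb).
\]
The one-dimensional local CLT (or, directly, Stirling applied to the trinomial coefficients counting configurations with $Z(m)=z$) yields the uniform Gaussian bound $\Pr(Z(m)=z)\leq C m^{-1/2} e^{-cz^2/m}$. Summing over $k$ and comparing with a Riemann sum to the integral $\int_{-\infty}^{\infty} m^{-1/2} e^{-cx^2/m}\,dx = O(1)$ one gets
\[
\Pr(R(m)=y) \;=\; O\!\left(\tfrac{1}{\sqrt m} + \tfrac{1}{b}\right),
\]
and the hypothesis $m<b^2$ forces $1/b<1/\sqrt m$, so the $1/\sqrt m$ term dominates.

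The main technical obstacle, though standard, is the tail estimate on $\sum_{j=1}^{b-1}\lambda_j^m$ for the long-time regime: one must separate the eigenvalues near $1$ (where the quadratic Taylor expansion is essential) from those bounded uniformly away from $1$, and verify that the resulting Gaussian-like sum collapses to $O(e^{-cm/b^2})$ precisely at the mixing-time threshold $m\asymp b^2$. The short-time estimate is then a one-step application of the local CLT combined with summation of Gaussians spaced at intervals of length $b$.
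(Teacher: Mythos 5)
Your proof is correct, but it takes a genuinely different route from the paper's. The paper essentially treats this claim as a black box, citing standard Markov-chain references (Example 5.7 and Proposition 6.18 in Aldous--Fill); in the appendix, where the analogous Claim~\ref{claim:lrw-cycle-distribution} is actually proved, the upper bound is again delegated to the regular-graph heat-kernel bound of Aldous--Fill, and the two-sided estimate for $q\geq b^2$ is extracted from mixing-time machinery: the total-variation bound $d(q)\leq\eps$ for $q\geq b^2\log(\eps^{-1})$, the inequality $s(2q)\leq 1-(1-2d(q))^2$ relating separation distance to total variation, and a separate patch (comparison with the walk on $\Z$) for the window $b^2\leq q\leq Cb^2$ where the resulting bound is vacuous. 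Your spectral argument replaces all of this with one self-contained computation: diagonalizing the circulant transition matrix, noting that laziness makes all eigenvalues $\lambda_j=\tfrac12+\tfrac12\cos(2\pi j/b)$ nonnegative, and bounding $\sum_{j\geq1}\lambda_j^m$ by a Gaussian-type sum that collapses to $O(e^{-cm/b^2})$ once $m\geq b^2$. This yields the upper and lower deviations from $b^{-1}$ simultaneously and avoids the awkward intermediate regime entirely. Your short-time argument (unfolding the cycle walk onto $\Z$ and summing the pointwise Gaussian local-CLT bound over the lattice $y+b\Z$) is likewise a direct proof of what the paper imports from Proposition 6.18; the only ingredient you should be careful to justify is the pointwise bound $\Pr(Z(m)=z)\leq Cm^{-1/2}e^{-cz^2/m}$ with genuine Gaussian decay (a plain $O(m^{-1/2})$ uniform bound plus a tail estimate is not quite enough to make the sum over $k$ come out to $O(m^{-1/2})$), but this is standard and provable by Stirling as you indicate. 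In short: the paper buys brevity by citation; your version buys self-containedness and a cleaner two-sided estimate at the cost of a page of elementary Fourier analysis.
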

 Considering that $R_i$ moves with probability $\frac{p_i}{2}=\Theta(b^{-i})$, we can expect that, at step m, $R_i(m)$ has the same distribution as the lazy random walk with $mp_i$ steps that moves with probability $\frac 12$. This is proved formally in Appendix \ref{sec:upper-bound-proof}. Hence, by substituting $m$ with $mp_i$ in Claim \ref{claim:lwr-distribution}, we obtain:
	\begin{equation}\label{eq:R-distribution} \Pr\left(R_i(m)=y_i\right ) = \begin{cases} O\left(1/\sqrt{mp_i}\right) \text{ if } m<b^{i+2} \\  b^{-1}(1 \pm \eps_{mp_i} ) \text{ if }m\geq b^{i+2}. \end{cases} \end{equation}	
	Theorem \ref{thm:main-up} then follows from Eq.~\eqref{eq:Z-prod-R}, Eq.~\eqref{eq:R-distribution} and Lemma \ref{lem:cov-from-pointwise}. Essentially, to cover $C_n$, we need that each $R_i(m)$ is mixed, i.e., has some significant probability to visit any node $y_i$ in $C_b$, which happens, as shown by Eq.~\eqref{eq:R-distribution}, for $m>b^{k-1+2}=b^{k+1}$. Let us apply Lemma \ref{lem:cov-from-pointwise} with
	\[m_0:= b^{k+1}.\]
	We first establish a lower bound on $\sum_{m=m_0}^{2m_0}\Pr(X(m)=x)$. By Eq.~\eqref{eq:Z-prod-R} and Eq.~\eqref{eq:R-distribution}, we have, for $m>m_0$,
	
	\[\Pr (X(m)=x)\approx \prod_{0\leq i\leq k-1} b^{-1}\left(  1-\eps_{mp_i}\right) = \Theta \left (b^{-k}\right),\]
    where the last equality is justified in the appendix. Thus,
    \[  \sum_{m=m_0}^{2m_0} \Pr(X(m)=x) = \Omega\left ( m_0 b^{-k}\right  ) =\Omega \left( b\right ).\]
	We need also to upper bound $\sum_{m=0}^{m_0} \Pr(X(m)=0)$, which is the expected number of returns to the origin up to step $m_0$.
 To do this, we shall use the short-time bounds of Eq.~\eqref{eq:R-distribution}. 
 
 Let us decompose the aforementioned sum as follows.
 \begin{equation}\label{eq:returns-origin-sum-decomposition} \sum_{m=0}^{m_0} \Pr(X(m)=0) = 1+\frac 12+ \sum_{j=0}^{k-1} \sum_{m=1+b^j}^{b^{j+1}} \Pr(X(m)=0) + \sum_{m=1+b^k}^{m_0} \Pr(X(m)=0).\end{equation}
	Fix $j$, such that $1\leq j \leq k-1$ and let $m\in (b^j, b^{j+1}]$. 
	By Eq.~\eqref{eq:Z-prod-R}, in order to upper bound $\Pr(X(m)=0)$ it is enough to bound $\Pr(R_i(m)=y_i)$ for every $i\leq k-1$. For $i>j$,
	we bound $\Pr(R_i(m)=y_i)$ by $1$. For $i\leq j-2$, we use Eq.~\eqref{eq:R-distribution} to upper bound $\Pr(R_i(m)=y_i)$ by $b^{-1}(1+\eps_{mp_j})$. For $i=j-1$ and $i=j$, we bound $\Pr(R_i(m)=y_i)$ by $O\left(1/\sqrt{mp_{j-1}}\right)$ and $O\left(1/\sqrt{mp_j}\right)$, respectively. We thus obtain, by Eq.~\eqref{eq:Z-prod-R}, 
	
	\begin{align*} \Pr(X(m)=x) &=O\left (\frac{1}{\sqrt{mp_{j-1}}} \cdot \frac{1}{\sqrt{mp_j}}  \cdot \prod_{0\leq i\leq j-2} b^{-1} \left  (1 + \eps_{mp_j} \right ) \right)\\ & = O\left (b^{-(j-1)}  \cdot \frac{\sqrt{b}b^{j-1}}{m} \right )=O\left (  \frac{\sqrt{b}}{m} \right), \end{align*}
where we justify in the appendix that $\prod_{0\leq i\leq j-2} ( 1 + \eps_{mp_j} )=O(1)$. Hence, we get:
\begin{equation}\label{eq:returns-origin-sum-i}
     \sum_{m=1+b^j}^{b^{j+1}} \Pr(X(m)=0) = O( \sqrt{b}\log b),
\end{equation}
by using that $\sum_{m=1+b^j}^{b^{j+1}} m^{-1} = \Theta\left( \int_{m=b^j}^{b^{j+1}} u^{-1}du\right)=\Theta(\log b)$. For the case $j=0$, we bound $\Pr(R_i(m)=y_i)$ by $1$ for $i>1$ and $\Pr(R_0(m)=y_0)$ by $O(m^{-\frac 12})$, so that, by Eq.~\eqref{eq:Z-prod-R}, $\Pr(X(m)=0)=O(\frac{1}{\sqrt{m}})$. Hence, we get:
\begin{equation}\label{eq:returns-origin-sum-0}
 \sum_{m=2}^{b} \Pr(X(m)=0) = O\left(\sqrt{b}\right).\end{equation}  Similarly, for $m\in (b^k, b^{k+1}]$, $\Pr(R_i(m)=y_i)$ is bounded by $b^{-1}(1+\eps_{mp_i})$ for $i\leq k-2$, and by $\frac{1}{\sqrt{mp_{k-1}}}$ for $i=k-1$. Thus, for $m\in (b^k, b^{k+1}]$,  \[\Pr(X(m)=0)=O\left(\frac{1}{\sqrt{m}\sqrt{b^{k-1}}}\right)\] and, since $\sum_{m=1+b^k}^{b^{k+1}} \frac{1}{\sqrt{m}} = O\left(\int_{b^k}^{b^{k+1}} \frac{1}{\sqrt{u}} du\right) = O\left ( \sqrt{b^{k+1}} \right)$, we get:
 \begin{equation}\label{eq:returns-origin-sum-k}
\sum_{m=1+b^k}^{b^{k+1}} \Pr(X(m)=x) = O\left(\frac{\sqrt{b^{k+1}}}{\sqrt{b^{k-1}}}\right)=O(b).\end{equation} In total, by Eq.~\eqref{eq:returns-origin-sum-decomposition}, combining Eqs.~\eqref{eq:returns-origin-sum-i}, \eqref{eq:returns-origin-sum-0} and \eqref{eq:returns-origin-sum-k}, we find that the expected number of returns to the origin up to step $b^{k+1}$ is \[ \sum_{m=0}^{m_0} \Pr(X(m)=0) = O\left ( k\sqrt{b}\log b+b \right ) =O\left ( kb \log b\right ).\]
So that all together we have: \[ \frac{\sum_{m=m_0}^{2m_0}\Pr(X(m)=x)}{\sum_{m=0}^{m_0} \Pr(X(m)=0)} = \Omega \left( \frac{b}{kb\log b} \right) = \Omega \left(\frac{1}{k\log b} \right). \] Thus, by Lemma \ref{lem:cov-from-pointwise}, the cover time of $X$ is at most: \begin{equation}\label{eq:cover-time-bound-main-text-b-k}
    O(m_0\cdot k \log b \cdot k \log n) =O(b^{k+1} k^2 \log b \log n)=O(nbk^2\log b \log n),
\end{equation} as claimed by Theorem~\ref{thm:main-up}, for the case where $n=b^k$.
	
	Consider now a more general case, in which $n$ is a multiple of $b^{k-1}$. Here, we can write $n=\hat{n}b^{k-1}$, where $\hat{n}\in(0,b]$ is an integer. What changes in this case is that the last coordinate, $R_{k-1}$, is now a random walk over $C_{\hat{n}}$ instead of over $C_b$, as depicted in Figure \ref{fig:intuition-ub}. $R_{k-1}$ is thus mixed after number of steps: \[ \hat{n}^2p_{k-1}^{-1}=\Theta(b^{k-1}\hat{n}^2)=\Theta(n^2/b^{k-1}).
	    \]  On the other hand, after $\Theta(b^{k-2+2})=\Theta(b^k)$ steps, the other coordinates are mixed. Thus, the number of steps needed before every coordinate $R_i$ is mixed is: 
	    \begin{equation}\label{eq:multiple} m_0=\Theta\left(\max\{b^k,n^2/b^{k-1}\}\right),\end{equation}
	which is again the order of magnitude of the cover time of $X$, up to polylogarithmic factors. Note that when $n=b^k$, Eq.~\eqref{eq:multiple} recovers the cover time of order $\tilde{\Theta}(b^{k+1})$. Furthermore, the ratio of the cover time for $n=b^{k}$ and $n=\hat{n}b^{k-1}$ is of order $\frac{b^{k+1}}{\max\{b^k,b^{k-1}\hat{n}^2\}}=\min\{b,\frac{b^2}{\hat{n}^2}\}$. When $b$ is large (which corresponds to $k$ being small), this can be significant. 
	Hence, naively bounding $\hat{n}$ from above by $b$ would not suffice to yield an optimal bound.
	
	The general case, when $n$ is not necessarily a multiple of $b^{k-1}$, needs to be treated with more care. What changes in this case is that we can no longer decompose $X$ as $k$ dependent random walks on $C_b \times \dots \times C_b \times C_{\frac{n}{b^{k-1}}} $, since $\frac{n}{b^{k-1}}$ is not an integer. Instead, we define $Z$ as the process that does the same jumps as $X$, but on the infinite line $\Z$, and we also define \[\hat{n}:=\lfloor {n}/{b^{k-1}} \rfloor.\]
	Then, we use almost the same decomposition, where $Z$ is viewed as $k$ dependent random walks over $C_b\times \dots\times C_b \times \Z$. The process corresponding to the last coordinate, $R_{k-1}$, is now a random walk on $\Z$, and we are interested especially on the probability of the event $R_{k-1}(m)=x_{k-1}$ for $x_{k-1}\in [0, \hat{n}]$. As the coordinate $R_{k-1}$ is not restricted to $[0, \hat{n}]$, we need to pay attention that the walk does not go too far. 
	
	\section{Discussion}

	The upper bound in Theorem  \ref{thm:main-upper} implies that almost linear time performances, as those obtained by L\'evy Flights, can be achieved with a number of step lengths that ranges from logarithmic to linear.  This further suggests that cover time performances similar to those of L\'evy Flights can be seen by a large number of different processes. In practice, if one aims to fit  empirical statistics of an observed process to a theoretical model of a particular heterogeneous step length distribution, the large degree of freedom can make this task extremely difficult, if not impossible. On the other hand, the fact that so many processes yield similar cover times may justify viewing all of them as essentially equivalent. This interpretation may also be relevant to the current debate regarding whether animals' movement is better represented by L\'evy Flights or by CCRW distributions  with 2 or 3 scales \cite{Pyke,Jager,Jansen,WeierstrassianMussels}. 
	Moreover, the fact that many heterogeneous step processes yield similar performances to L\'evy Flights may imply that limiting the empirical fit to either L\'evy Flights or CCRW searches  with 2 or 3 scales  may be too restrictive. Our work may suggest that instead, the focus could shift to identifying the number of scales involved in the search. 
	
	When combined with appropriate empirical measurements, 
	our lower bound  can potentially be used to indirectly show that a given intermittent process uses strictly more  than a certain number of step lengths. For example, if the process is empirically shown as a heterogeneous random walk whose cover time is almost linear, then Theorem  \ref{thm:main-lb} implies that it must use roughly logarithmic number of step lengths. From a methodological perspective, such a result  would be of particular appeal as demonstrating lower bounds in biology through mathematical arguments is extremely rare \cite{PLOS,Review}.

Finally, we note that most of the theoretical research on heterogeneous search processes which is based on differential equation techniques and computer simulations. In contrast, and similarly to \cite{LATIN},  our methodology relies on algorithmic analysis techniques and discrete probability arguments, which are more commonly used  in theoretical computer science. We believe that the computational approach presented here can contribute to a more fundamental understanding of these search processes. 
%\newpage
\bibliography{biblio-inter}
%\clearpage

\newpage
\appendix
\centerline{\Huge{Appendix}}
\section{Proof of Lemma \ref{lem:main-lb}}\label{app:lem1}
	Our goal in this section is to prove Lemma \ref{lem:main-lb}.

	In what follows we stress that we count the \emph{time} and not the number of moves.
	Fix an index $0\leq i <k$. We divide time into consecutive \emph{$i$-phases}, each of time-duration precisely $L_{i+1}$ (the last one may be shorter). We next prove the following.

	\begin{claim}\label{claim:seen-per-phase} The expected number of nodes visited during the $\ell$'th $i$-phase is
		\begin{itemize}
			\item For $i=0$, $\E(N_\ell) =O\left(\sqrt{\frac{L_1}{L_0}}\right)$.
			\item For $0<i < k$,
			%\begin{align*}
			$\E(N_\ell) = O\left(k \sqrt{L_{i} \cdot L_{i + 1}}\right)$.
		\end{itemize}
	\end{claim}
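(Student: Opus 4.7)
The plan is to bound $\E(N_\ell)$ by controlling the range of the walk within a phase and isolating the effect of the few large jumps.

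First, I note the key time-budget observation: the total completed-jump time in an $i$-phase is at most $L_{i+1}$, while any single jump of length $\geq L_{i+1}$ consumes this entire budget. Hence at most one ``big'' jump (of length $\geq L_{i+1}$) can complete per phase, splitting it into at most two sub-phases during each of which the walk uses only ``small'' jumps (of length $L_j$ with $j\leq i$). Within a sub-phase of duration $T_p$, I let $Z_j$ denote the signed sum of $L_j$-jumps completed in the sub-phase for each scale $j\leq i$; since each such jump costs $L_j$ units of time, the number $\tau_j$ of $L_j$-jumps is a stopping time with $\tau_j\leq T_p/L_j$ almost surely. Conditional on the ordering of step-lengths, $Z_j$ is a symmetric simple random walk on $L_j\Z$, so Doob's $L^2$ maximal inequality applied at $\tau_j$ yields $\E[\mbox{range}(Z_j)] = O(L_j\sqrt{\E[\tau_j]}) = O(\sqrt{L_j T_p})$.

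For $0<i<k$, I bound the number of distinct positions of the walk in a sub-phase by its integer range, which satisfies $\mbox{range}\bigl(\sum_{j\leq i} Z_j\bigr) \leq \sum_{j\leq i} \mbox{range}(Z_j)$ and hence in expectation is $O((i+1)\sqrt{L_i T_p})$ since $L_j \leq L_i$ for $j\leq i$. Summing over the at most two sub-phases via Cauchy--Schwarz (using $\sum_p T_p \leq L_{i+1}$), and adding $O(1)$ for the landing point of the single possible big jump, gives $O(k\sqrt{L_i L_{i+1}})$. For $i=0$, only the scale $j=0$ contributes, and since $Z_0$ lives on the sparse lattice $L_0\Z$, the number of distinct positions in a sub-phase of duration $T_p$ is at most $\mbox{range}(Z_0)/L_0 + 1 = O(\sqrt{T_p/L_0})$; combining the two sub-phases and the big jump's landing point yields $O(\sqrt{L_1/L_0})$.

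The main subtlety will be that the step counts $\tau_j$ of the different scales compete for the sub-phase's shared time budget and are not independent, so I cannot simply treat the $Z_j$'s as independent random walks with deterministic step counts. The remedy is that Doob's maximal inequality only requires the a.s.\ bound $\tau_j\leq T_p/L_j$ and the martingale property of $Z_j$ (which holds conditional on the length-sequence, since the $\pm$ signs are i.i.d.\ and independent of the lengths), both of which are available without any independence assumption across scales; linearity of expectation then lets the per-scale bounds add, which is all that the final estimate requires.
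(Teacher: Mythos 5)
Your proof is correct and follows essentially the same route as the paper's: bound the visited nodes by the range of the walk within a phase, discard (or isolate) the jumps of length at least $L_{i+1}$, decompose the displacement by scale, apply a maximal inequality per scale using the almost-sure bound of $L_{i+1}/L_j$ on the number of completed $L_j$-jumps, and for $i=0$ divide the range by $L_0$ since only multiples of $L_0$ are reachable. Your explicit handling of the at-most-one completing big jump and of the dependence between the per-scale step counts is a welcome tightening of details the paper treats informally, but it does not change the argument.
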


    \begin{claimproof}{Proof of Claim \ref{claim:seen-per-phase}}
	Fix an index $i$ and consider the $i$-phases. As the last $i$-phase may be shorter and intermediate $i$-phases may start when the process is executing a jump, the value of $\E(N_\ell)$ is at most $\E(N_1)$, namely, the expected number of nodes that are visited during the first $i$-phase. 
	Let us therefore concentrate on upper bounding $\E(N_1)$. The first $i$-phases lasts during the time period $[0,L_{i+1})$. Since only endpoints of jumps are visited, if during the $i$-phase the process starts any jump of length at least $L_{i+1}$, then the number of nodes does not increase. 
	Thus, to get an upper bound on $\E(N_1)$, we may consider only trajectories that do not use such large jumps, i.e., we may restrict the process to jumps of length $L_j$, for $j\leq i$.
	
	Denote by $D$  the maximal distance achieved by the process in the time interval $[0,L_{i+1})$. We have $N_1 \leq 2D+1$. In this phase of duration $L_{i+1}$, there are at most $\frac{L_{i+1}}{L_j}$ steps of length $L_j$ that can be made, for $j\leq i$, because a jump of length $L_j$ takes $L_j$ time. Let $D_j$ be the maximal distance travelled by the jumps of length $L_j$ (when ignoring jumps of length different than $L_j$). We have $D\leq \sum_{j\leq i} D_j$. Furthermore, by the Kolmogorov's inequality, we know that the maximal distance achieved by taking $m$  random walk steps with unitary length each is of order $\sqrt{m}$, in expectation. Therefore, when the walk does steps of length $L$ the expectation of the  maximal distance after $m$ steps is $O(\sqrt{m}L)$. Thus, \[\E(D_j) = O\left( \sqrt{\frac{L_{i+1}}{L_j}}\cdot L_j \right) \] and \[ \E(N_1)\leq 2\E(D)+1 = O\left( \sum_{j\leq i}\sqrt{L_{i+1} L_j} \right) =O\left( k\sqrt{L_iL_{i+1}}\right). \]
	This establishes the second item in the claim. 
	
	Bounding the number of visited nodes $N_1$ by the distance $D$, as was done above, is not very precise, since there may be non-visited points between jumps. In order to establish the first item in the claim, i.e., the case where $i=0$, let us be more precise. In this case, we may replace the equation $N_1\leq 2D+1$ by the more precise inequality \[N_1\leq 2 \frac{D}{L_0} + 1.\] Indeed, since there are only jumps of length $L_0$, and there is no time to do a full turn of the cycle in the duration $L_1$, we visit only multiples of $L_0$. Thus, when $i=0$, we have:
	\[\E(N_1) = O\left ({\frac{\sqrt{L_1 L_0}}{L_0}} \right )= O\left (\sqrt{\frac{L_1}{L_0}} \right),\]
	as desired. This completes the proof of Claim \ref{claim:seen-per-phase}. \end{claimproof}
	
	Let us end the proof of Lemma \ref{lem:main-lb}. By Claim \ref{claim:seen-per-phase}, the number of nodes visited during the $s$ first 
	$i$-phases is
	\[
	\E\left(\sum_{\ell=1}^s N_\ell\right) \leq s \cdot O\left( E_i \right).
	\]
	where  $E_0=\sqrt{\frac{L_1}{L_0}}$ and $E_i=\sqrt{L_i L_{i+1}}$ for $1\leq i\leq k-1$.
	Next, let us set $s_1:= n \cdot \frac{c}{\cdot E_i }$ for a sufficiently small constant $c$, such that the previous bound becomes less than $n/2$.
	Using Markov's inequality, we get 
	\[\Pr\left(\sum_{\ell=1}^{s_1} N_\ell \geq n \right)< \frac{1}{2}.\]
	Therefore, with probability at least $1/2$, the process needs at least $s_1$ phases before visiting all nodes. Since the duration of a phase is $L_{i+1}$, the cover time is at least \[s_1\cdot L_{i+1} = \Omega\left(n \cdot \frac{L_{i+1}}{E_i} \right), \] 
	which is  $\Omega(n\cdot \sqrt{L_1 L_0})$ if $i=0$ and $\Omega(n\cdot\sqrt{ \frac{L_{i+1}}{L_i}})$ otherwise. This completes the proof of Lemma~\ref{lem:main-lb}. \qed

	\section{Proof of the upper bound} \label{sec:upper-bound-proof}

In this section, we prove the following theorem, for which we presented the intuition of the proof in the main text, in the case $n=b^{k}$ and, briefly, $n=\hat{n}b^{k-1}$.
 
 \ThmUp*

\subsection{Notations}
Let $V _s$ and $\xi_s$ be, respectively, the length and the sign of the $s$-th jump. More precisely, $V_s$ is a random variable taking value $L_i=b^i$ with probability $p_i=c_b b^{-i}$ for every $i \leq k-1$,  $\xi_s$ takes value $0$, $1$ or $-1$, with probabilities $\frac 12, \frac 14, \frac 14$, and the variables $(V_s)_{s\in \N}$ and $(\xi_s)_{s\in\N}$ are independent. We define the Weierstrassian random walk $Z(m)$ on $\Z$ and $X(m)$ on the cycle $C_n$, after $m$ moves, as
\begin{align}
Z(m) = \sum_{s=1}^m \xi_s\cdot V_s, \hspace{1.5cm} X(m) = Z(m) \mod n.
\label{eq:def-moves}
\end{align}
As we consider it takes one unit of time to travel a distance $1$, the time it takes to accomplish the first $m$ moves, denoted $T(m)$, is defined as
\begin{align}
\label{eq:def-time2}
T(m) := \sum_{s=1}^m  \lvert \xi_s \rvert \cdot V_s.
\end{align}
On the finite graph $C_n$, we denote  by $m_{cov}$ the (random) number of moves needed before $X$
has visited every node of $C_n$. The quantity we want to bound is $\E(T(m_{cov}))$, the expected time needed to visit all nodes, which is called the \emph{cover time}.

We also denote by $m_{hit}(x)$ the random number of moves before hitting a point $x$ for the first time. If necessary, we precise $m^{C_n}_{hit}(x)$ or $m^\Z_{hit}(x)$ to indicate the underlying topology.

Finally, the subscript $x$ in $\Pr_x$ or $\E_x$ indicates that we consider the process starting at $x$. When this subscript is absent, it means that the process starts at $0$.

\subsection{Bounding the Cover Time using Pointwise Probabilities}
In this section, we prove a few remarks that, together, establish the following lemma, that appears as Lemma \ref{lem:cov-from-pointwise} in the main text. Both Lemmas differ slightly (here we study the pointwise probabilities of $Z$ instead of $X$) as the main text presents the intuition in a simplified context.

\begin{lemma}\label{lem:cov-from-pointwise-appendix}
	If $p>0$ and $m_0>0$ are such that, for any $x\in \{0,\dots, n-1\}$,
	\begin{align}\label{eq:trick-refined-appendix}
	\frac{\sum_{m=m_0}^{2m_0}  \Pr(Z(m)=x)}{\sum_{m=0}^{m_0} \Pr(Z(m)=0)} \geq p,
	\end{align}
	then the cover time of the Weierstrassian random walk $X$ on the cycle $C_n$ is $O\left (m_0 p^{-1} k \log n \right )$.	
\end{lemma}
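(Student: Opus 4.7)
The plan is to prove the bound through three successive reductions: (a) convert the pointwise-probability hypothesis into a lower bound of the form $\Pr(m_{hit}^{C_n}(x) \leq 2m_0) \geq p$ for every target $x \in \{0,\dots,n-1\}$; (b) iterate this bound via the Markov property and apply Matthews' method to obtain $\E(m_{cov}) = O(m_0 \log n / p)$ on the number of moves; and (c) pass from number of moves to elapsed time using Wald's identity, which contributes the remaining factor of $k$ because the expected time per move of the Weierstrassian walk equals $\frac{1}{2}\sum_i p_i L_i = \Theta(k)$.

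For step (a), fix $x \in \{0,\dots,n-1\}$ and work first with the walk $Z$ on $\Z$. Conditioning $\Pr(Z(m)=x)$ on the first hitting time of $x$ and using translation invariance on $\Z$ (so $\Pr_x(Z(s)=x)=\Pr(Z(s)=0)$), then summing over $m\in[m_0,2m_0]$ and swapping the order of summation, yields
\[\sum_{m=m_0}^{2m_0}\Pr(Z(m)=x)\;\leq\;\Pr\bigl(m_{hit}^{\Z}(x)\leq 2m_0\bigr)\cdot\max_{0\leq t\leq 2m_0}\sum_{u=(m_0-t)_+}^{2m_0-t}\Pr(Z(u)=0).\]
The crucial observation is that $u\mapsto\Pr(Z(u)=0)$ is non-increasing. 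This uses the built-in laziness of probability $1/2$: the per-step characteristic function of $Z$ is $\phi(\theta)=\frac{1}{2}+\frac{1}{2}\,\E(\cos(\theta V_1))\in[0,1]$, so $\Pr(Z(u)=0)=\frac{1}{2\pi}\int_{-\pi}^{\pi}\phi(\theta)^u\,d\theta$ is monotone in $u$. Each inner sum thus consists of at most $m_0+1$ consecutive return probabilities and is bounded by $G(m_0):=\sum_{u=0}^{m_0}\Pr(Z(u)=0)$, so the displayed inequality combined with the hypothesis gives $\Pr(m_{hit}^{\Z}(x)\leq 2m_0)\geq p$. Since $X=Z\bmod n$, any time at which $Z$ reaches $x\in[0,n)$ is also a time at which $X$ reaches $x$, so $\Pr(m_{hit}^{C_n}(x)\leq 2m_0)\geq p$ as well.

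For step (b), the Weierstrassian walk is translation-invariant on $C_n$, so the same bound $\Pr_z(m_{hit}^{C_n}(x)\leq 2m_0)\geq p$ holds from every starting state $z$. Iterating on disjoint windows of length $2m_0$ via the strong Markov property produces a geometric tail $\Pr(m_{hit}^{C_n}(x)>2Km_0)\leq(1-p)^K$, hence $\E(m_{hit}^{C_n}(x))=O(m_0/p)$ for every $x$, and Matthews' method then yields $\E(m_{cov})=O(m_0\log n/p)$. Finally, since $m_{cov}$ is a stopping time and the per-move contributions $|\xi_s|V_s$ are i.i.d., Wald's identity gives $\E(T(m_{cov}))=\E(m_{cov})\cdot\E(|\xi_1|V_1)=\E(m_{cov})\cdot\frac{c_b}{2}k=O(m_0\,k\log n/p)$, as required.

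The main technical point is the monotonicity of $u\mapsto\Pr(Z(u)=0)$: without it, the inner sum above would only be bounded by $G(2m_0)$, and the hypothesis ratio would not transfer directly into a hitting probability of $p$. The laziness of the walk is precisely what forces $\phi$ to take values in $[0,1]$ and so yields the required monotonicity; no further structural assumption on the step lengths is needed. Everything else is a direct application of the first-passage decomposition, Matthews' method, and Wald's identity.
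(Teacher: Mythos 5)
Your proposal is correct, and steps (b) and (c) coincide with the paper's argument (iterating the hitting bound via the Markov property, Matthews' method, and an optional-stopping/Wald argument contributing the factor $\Theta(k)=\frac 12\E(V_1)$). The interesting divergence is in step (a), where you convert the ratio hypothesis into $\Pr(m_{hit}^{\Z}(x)\leq 2m_0)\geq p$. The paper instead applies the identity $\Pr(N\geq 1)=\E(N)/\E(N\mid N\geq 1)$ to the visit count $N_x(m_0,2m_0)$ and bounds the conditional expectation by $\E_x(N_x(0,m_0))=\sum_{u=0}^{m_0}\Pr(Z(u)=0)$, the point being that, by the strong Markov property, the window of returns can be shifted to start at the first hit and has length at most $m_0$; no monotonicity of $u\mapsto\Pr(Z(u)=0)$ is needed. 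Your route, via the first-passage decomposition $\Pr(Z(m)=x)=\sum_{t\leq m}\Pr(m_{hit}^{\Z}(x)=t)\Pr(Z(m-t)=0)$, does need to know that each inner block of at most $m_0+1$ consecutive return probabilities is dominated by the initial block $\sum_{u=0}^{m_0}\Pr(Z(u)=0)$, and your Fourier argument for monotonicity is valid: the increment distribution is symmetric and integer-valued with an atom of mass $\frac 12$ at $0$, so $\phi(\theta)=\frac 12+\frac 12\E(\cos(\theta V_1))\in[0,1]$ and $\Pr(Z(u)=0)=\frac{1}{2\pi}\int_{-\pi}^{\pi}\phi(\theta)^u\,d\theta$ is non-increasing. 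The two arguments are essentially dual: the paper pays with a conditional-expectation bound justified by shifting the observation window, while you pay with a monotonicity lemma that exploits the laziness; both yield exactly the denominator $\sum_{m=0}^{m_0}\Pr(Z(m)=0)$. One small remark: when invoking Wald's identity you should note (as the paper does when verifying the optional stopping conditions) that $\E(m_{cov})<\infty$, which holds for any irreducible finite Markov chain; otherwise the identity does not apply.
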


\begin{proof}%[Proof of Lemma \ref{lem:cov-from-pointwise-appendix}.]
First, we note that $\E(V_1)$, namely, the average time taken by each non-lazy step, is roughly $k$. Specifically: 
\begin{equation}\label{eq:V1}
\E(V_1)=\sum_{i=0}^{k-1} b^i p_i =\sum_{i=0}^{k-1} c_b ({b}/{b})^i=c_b k=\Theta(k),\end{equation}
since $c_b = \frac{1-b^k}{1-b} = \Theta(1)$, as $b\geq 2$.

We next give a claim that reminds Wald's identity, but we formally prove it using the Martingale Stopping Theorem. Note that the factor $ \frac 12$ in the middle expression of the claim comes from the laziness, and hence $\E( \lvert \xi_1 V_1 \rvert ) = \frac{\E(V_1)}{2}$. 
\begin{claim}\label{lem:timeVSmoves}
	$\E(T(m_{cov})) = \E(m_{cov})\cdot \frac{\E(V_1)}{2}=\Theta(k \E(m_{cov})).$
\end{claim}
\begin{claimproof}{Proof}
Define \[Z_m:=\sum_{s\leq m} (V_s-\E(V_1) ).\] The claim is proven by showing first that $(Z_m)_m$ is a martingale with respect to $(X_m)_m$. Then, as the cover time is a stopping time for $(X_m)_m$ (i.e., the event $\{m_{cov}=m\}$ does not depend on $X_{s}$, for $s>m$), we can apply the Martingale Stopping Theorem which gives $\sum_{s\leq m_{cov}} (V_s-\E(V_1))= 0$.

In more details, recall (e.g., \cite{Mitzenmacher}[Definition 12.1]) that a sequence of random variables $(Z_m)_m$ is a {\em martingale} with respect to the sequence $(X_m)_m$ if, for all $m\geq 0$, the following conditions hold:
\begin{itemize}
    \item $Z_m$ is a function of $X_0, X_1, \dots, X_m$;
\item $\E(\lvert Z_m\rvert )<\infty$;
\item $\E(Z_{m+1} \mid X_0,\dots, X_m)  = Z_m$. 
\end{itemize}

We first claim that $Z_m=\sum_{s\leq m} (V_s-\E(V_1) )$ is a martingale with respect to $X_0,X_1,\ldots$. Indeed, since $V_s=\lvert X_s-X_{s-1}\rvert $, the first condition holds. Since $\E(\lvert Z_t\rvert )\leq \sum_{s\leq t} \E(\lvert V_s-\E(V_1)\rvert ) \leq 2t\E(V_1)<\infty$, the second condition holds. Finally, since $Z_{m+1}=Z_{m}+V_{m+1} -\E(V_1)$,  we have $\E(Z_{m+1} \mid X_0,\dots, X_m)=Z_m+\E(V_{m+1})-\E(V_1)=Z_m$, and hence the third condition holds as well.

Next, recall the Martingale Stopping Theorem (e.g., \cite{Mitzenmacher}[Theorem 12.2]) which implies that $\E(Z(T))=\E(Z_0)$, 
whenever the following three conditions hold:
\begin{itemize}
    \item $Z_0, Z_1, \dots$ is a martingale with respect to $X_0, X_1, \dots$,
    \item $T$ is a stopping time for $X_0, X_1, \dots$ such that $E(T)<\infty$, and 
    \item there is a constant $c$ such that $E(\lvert Z_{t+1} - Z_t \rvert  \mid X_0,\dots, X_t ) < c$.
    \end{itemize}

Let us prove that the conditions of the Martingale Stopping theorem hold. We have already seen that the first condition holds. Second, we need to prove that $E(m_{cov})<\infty$. This is, in fact, a general claim for an irreducible Markov chain on a discrete space (see \cite{Aldous}[Theorem 6.1] for a precise bound). Finally, we need to prove that $E(\lvert Z_{t+1} - Z_t \rvert \mid X_0,\dots, X_t )< c$ for some $c$ independent of $t$. Since $Z_{t+1} - Z_t=V_{t+1}-\E(V_1)$, we have $\E(\lvert Z_{t+1} - Z_t \rvert \mid X_0,\dots, X_t ) = \E(\lvert V_{t+1}-\E(V_1)\rvert ) \leq 2\E(V_1)$. Hence the conditions hold and the theorem gives:
\[\E(Z(m_{cov}))=\E(Z_0)=0.\]
Hence, \[ 0=\E(Z(m_{cov})) =  \E\left( - m_{cov}\E(V_1)+\sum_{s\leq m_{cov}} V_s \right)  =  - \E(m_{cov}) \E(V_1) )+\E\left(\sum_{s\leq m_{cov}} V_s\right), \]
which establishes the claim.
\end{claimproof}

\begin{claim}
	\label{lem:hit-time-from-point-probability}
Fix a positive integer $m$ and $p$. If, for any $x$ in $C_n$, $\Pr(m_{hit}(x)\leq m) \geq p$, then $\E(m_{cov})\leq mp^{-1} \log n$. 
\end{claim}
This is standard in the theory of Markov chains and stems from basic properties, but, for the sake of completeness, let us recall the proof.
\begin{claimproof}{Proof}
	By hypothesis, we have that $\Pr(m_{hit}(x) >m)\leq 1-p$ for any $x$ in the cycle. Thus: \begin{equation}\label{eq:probability-m-hit-2}
	    \Pr(m_{hit}(x) >2 m)=\Pr(m_{hit}(x) >2 m\mid m_{hit}(x) >m)\cdot \Pr(m_{hit}(x) >m).
	\end{equation} Furthermore, \begin{align*}
	    \Pr(m_{hit}(x) >2 m\mid m_{hit}(x) >m) &= \sum_{x_0\in C_n} \Pr( m_{hit}(x) > 2m \mid X_m=x_0) \Pr(X(m)=x_0) \\
	    &\geq \min_{x_0\in C_n} \Pr( m_{hit}(x) > 2m \mid X_m=x_0) \\ &= \min_{x_0\in C_n} \Pr{_{x_0}}( m_{hit}(x) > m),
	\end{align*} where we used the Markov property in the last equality.  Since we are on the cycle, and the process is symmetric, starting from $x_0$ to get to $x$ is the same as starting from $0$ to get to $x-x_0$ and therefore we have $\Pr_{x_0}( m_{hit}(x) > m)= \Pr( m_{hit}(x-x_0) > m)$. Since, by hypothesis, for any $x\in C_n$, $\Pr( m_{hit}(x) > m)$ is at most $1-p$, we have, by Eq.~\eqref{eq:probability-m-hit-2}, \[\Pr(m_{hit}(x) >2m) \leq (1-p)^2.\] 
	By a similar inductive reasoning, we obtain $\Pr(m_{hit}(x) >sm) \leq (1-p)^s$ for any node $x$ and any integer $s\geq 0$. Thus, using the definition of the expectation, we have
	
	\begin{align*}
	\E(m_{hit}(x)) &= \sum_{s=0}^\infty \Pr(m_{hit}(x)>s) =\sum_{s=0}^\infty \sum_{j=0}^{m-1} \Pr(m_{hit}(x)>sm+j) \\
	&\leq \sum_{s=0}^\infty m \Pr(m_{hit}(x)>sm)
	\leq \sum_{s=0}^\infty m (1-p)^s
	\leq {m}/{p}.
	\end{align*}
Finally, remember that Matthew's upper bound (see, e.g., \cite{Peres}[Thm 11.2]) states that: \[\E\left(m_{cov}\right)\leq \log n\cdot \max_x \E\left( m_{hit}(x)\right),\] and thus, $\E(m_{cov})$ is at most $mp^{-1}\log n$, which concludes the proof of Claim \ref{lem:hit-time-from-point-probability}.\end{claimproof}

To conclude the proof of Lemma \ref{lem:cov-from-pointwise-appendix}, it is enough to prove that, for any $x\in [0,n-1]$,
\begin{equation}\label{eq:proba-X-hits-x}
\Pr(m_{hit}(x)\leq 2m_0) \geq \frac{\sum_{m=m_0}^{2m_0}  \Pr(Z(m)=x)}{\sum_{m=0}^{m_0} \Pr(Z(m)=0)} \geq p ,\end{equation}
where the latter inequality is in fact the condition in the lemma. Lemma \ref{lem:cov-from-pointwise-appendix} is then obtained by using Claim \ref{lem:hit-time-from-point-probability}.

To establish the first inequality in Eq.~\eqref{eq:proba-X-hits-x}, note first that for any $x\in [0,n-1]$, we have $m_{hit}^{C_n}(x) \leq m_{hit}^\Z(x)$.
Indeed, if $x$ is visited by $Z$, then it is visited by $X=Z\mod n$. In particular, we have $\Pr(m_{hit}^{C_n}(x)\leq m_0) \geq \Pr( m_{hit}^\Z(x) \leq m_0)$. Hence to prove Eq.~\eqref{eq:proba-X-hits-x}, it is enough to prove \begin{equation}\label{eq:proba-hits-x}
{\Pr}(m^\Z_{hit}(x) \leq 2m_0) \geq \frac{\sum_{m=m_0}^{2m_0}  {\Pr}(Z(m)=x)}{\sum_{m=0}^{m_0} {\Pr}(Z(m)=0)}.
\end{equation} 
 For this, we rely on the following identity (see also \cite{KanadeMS16}). If $N$ is a non-negative random variable then:
\begin{align}
\Pr(N\geq 1 ) = \frac{\E ( N)}{\E(N\mid N \geq 1)}.
\label{eq:trick}
\end{align} 
We employ this identity for the random variable $N_x(m_0,2m_0)$ which is the number of times $Z$ hits $x\in \Z$ between moves $m_0$ and $2m_0$ included, for $m_0$ as in the statement of Lemma \ref{lem:cov-from-pointwise-appendix}. Note that this quantity is positive if and only if $x$ is visited during this interval, so that
\begin{equation} \label{eq:probability-hit-x-N}\Pr ( m_{hit}^\Z \leq 2m_0) \geq \Pr \left (N_x(m_0,2m_0) \geq 1 \right ).\end{equation}
Note that $N_x(m_0,2m_0)=\sum_{m=m_0}^{2m_0}\mathbbm{1}_{Z(m)=x} $. Therefore, \begin{equation}\label{eq:E_x}\E ( N_x(m_0,2m_0) )=\sum_{m=m_0}^{2m_0} \Pr(Z(m)=x).\end{equation} Note also that the denominator in Eq.~\eqref{eq:trick} applied to $N_x(m_0,2m_0)$ verifies
 \[\E_0 \left ( N_x(m_0,2m_0) \mid N_x(m_0,2m_0) \geq 1\right ) \leq \E_x \left ( N_x(0,m_0) \right ),\]
as a consequence of the Markov property, because the number of returns to $x$ is maximized whenever the first hit of $x$ is at the beginning of the time interval. Finally we have $\E_x \left ( N_x(0,m_0) \right )=\E_0 \left ( N_0(0,m_0)\right )$ because, on the line, all nodes are equivalent. Now, we have $\E_0 ( N_0(0,m_0))=\sum_{m=0}^{m_0} \Pr(Z(m)=0)$, so that
 \begin{equation}\label{eq:E_0}
     \E_0 \left ( N_x(m_0,2m_0) \mid N_x(m_0,2m_0) \geq 1\right ) \leq \sum_{m=0}^{m_0} \Pr(Z(m)=0) 
 \end{equation}
Therefore, when applied to $N_x(m_0,2m_0)$, Eq.~\eqref{eq:trick}, combined with Eqs. \eqref{eq:probability-hit-x-N} \eqref{eq:E_x} and \eqref{eq:E_0}, implies that

\begin{equation*}
{\Pr}_0(m^\Z_{hit}(x) \leq 2m_0) \geq \frac{\sum_{m=m_0}^{2m_0}  {\Pr}_0(Z(m)=x)}{\sum_{m=0}^{m_0} {\Pr}_0(Z(m)=0)},
\end{equation*} 
as desired. This establishes Eq.~\eqref{eq:proba-hits-x}, and thus completes the proof of Lemma~\ref{lem:cov-from-pointwise-appendix}. \end{proof}

\subsection{From $k$-scales search on $\Z$ to $k$ (dependent) random walks on $C_b\times \dots \times C_b \times \Z$}
This section is the conceptual core of the proof. We show how the Weierstrassian walk with $k$ scales $Z$ can be studied as $k$ dependent random walks on the space $C_b\times \dots \times C_b \times \Z$. For this we first define in Section \ref{subsubsec:defnot} the $k$ random walks $Z_0,\dots,Z_{k-1}$ on $C_b\times \dots \times C_b \times \Z$. Then, in Section \ref{subsubsec:decomposition}, we establish how the pointwise probabilities of $Z$ can be obtained by the pointwise probabilities of the $Z_i$.

\subsubsection{Definitions and Notations}\label{subsubsec:defnot}

\paragraph{Definitions.}
We define, for any $i \in [k-1]$:
\[S_i(m):= \sum_{s=1}^m \xi_s\cdot\mathbbm{1}_{( V_s = b^i)},\]
the simple (unitary) random walk on the line corresponding to the steps of length $b^i$, and
\[ J_i(m) :=  b^i S_i (m), \]
the sum of the steps of length $b^i$. Note that \begin{equation}\label{eq:Z-sum-Si}
Z(m)=\sum_{i\leq k-1} J_i(m)=\sum_{i\leq k-1} S_i(m)b^{i}.
\end{equation}We also define
\[ J_i'(m) := \sum_{j=0}^{i-1} J_j,\]
the sum of the steps of length at most $b^{i-1}$.

\paragraph{Base $b$ decomposition.}
We define, for any $x\in \Z$, the (truncated) base $b$ decomposition of $x$ as:
\[ x= \sum_{i=0}^{k-1} x_ib^i,\]
with $x_i\in [0,b-1]$ for any $i\in[0,k-2]$ and $x_{k-1}\in \Z$. This decomposition exists for any $x\in \Z$ and is unique.

\begin{remark}\label{rk:base-b}
For any $x\in \Z$, and any $i\in [0,k-2]$, we have $x_i=\lfloor xb^{-i}\rfloor \mod b$. We have also $x_{k-1}=\lfloor xb^{-(k-1)}\rfloor$. To see why, note that for any $i\in [0,k-2]$, $xb^{-i}=\sum_{j\leq k-1} x_jb^{j-i}=\sum_{j\leq i-1}x_j b^{j-i} + x_i + \sum_{j\in [i+1,k-1]}x_jb^{j-i}$, so that $\lfloor xb^{-i} \rfloor \mod b = x_i+ \lfloor \sum_{j\leq i-1}x_j b^{j-i} \rfloor  \mod b$. Since $0\leq x_j\leq b-1$, we have $0\leq \sum_{j\leq i-1}x_j b^{j} \leq b^i -1$, hence $\lfloor xb^{-i} \rfloor \mod b = x_i$. For $i=k-1$, the proof is similar, except we do not need to take modulo $b$ (as $x_{k-1}\in \Z$).
\end{remark}

\paragraph{Decomposition of $Z$ in the base $b$.}
 In this base, let us denote by $Z_i$ the $i$-th coordinate of $Z$, so that:
\[ Z(m)=\sum_{i=0}^{k-1} Z_i(m)b^i.\]
By Remark \ref{rk:base-b} and  Eq.~\eqref{eq:Z-sum-Si}, we have, for $i\leq k-2$,
\begin{align*} Z_i(m)&=Z(m)b^{-i}\mod b= \sum_{j\leq k-1} S_i(m)b^{j-i} \mod b = \sum_{j\leq i} S_i(m)b^{j-i} \mod b \\&= R_i(m)+N_i(m)\mod b\end{align*}
where we define, for $i\leq k-2$, \[R_i(m):=S_i(m)\mod b,\] and \[N_i(m):=\left \lfloor \left (\sum_{j\leq i-1}S_j(m)b^{j}\right)b^{-i} \right \rfloor  \mod b=\lfloor J'_i(m)b^{-i}\rfloor \mod b.\] Similary, we decompose $Z_{k-1}(m)$ as the sum of $R_{k-1}(m)=S_{k-1}(m)$ and \begin{equation}\label{eq:def-N-k-1}
N_{k-1}(m)=\lfloor J'_{k-1}(m) b^{-(k-1)}\rfloor .
\end{equation}
$R_i$ corresponds to the steps of length $b^i$ and is a lazy random walk on $C_b$ that moves with probability $\frac{p_i}{2}$. $N_i$ can be thought of as the noise from smaller coordinates. For instance, if $Z_{i-1}(m)$ is $b-1$ and a step of length $b^{i-1}$ is done, then $N_i(m)$ will be incremented of one. Note that $N_0(m)=0$ always, and that the $k-1$'st coordinate is defined on $\Z$, thus $R_{k-1}$ and $N_{k-1}$ are not defined modulo $b$.

So far, we have decomposed $Z$ as a linear combination of $k-1$ simple, dependent, random walks. Now we will define additional variables that will allow to control the dependencies between the $Z_i$.

%We need to express how the $Z_i$ evolve. Note that, if for some $j\leq k-1$ and $m\geq 1$, we have $Z(m+1)=Z(m)+b^j$, then
%\[ Z(m+1)=\sum_{i=0}^{j-1}Z_i(m)b^i + (Z_j(m)+1)b^j + \sum_{i=j+1}^{k-1}Z_i(m)b^i,\]
%Since $Z(m+1)=\sum_{i=0}^{k-1}Z_i(m+1)b^i$, with $Z_i\in [0,b-1]$ for $i\leq k-2$, and the decomposition in base $b$ is unique, we have  $Z_i(m+1)=Z_i(m)$ for $i<j$. We have also $Z_j(m+1)=Z_j(m)+1\mod b$. If $Z_j(m)\neq b-1$, then the other coordinates are unchanged. Otherwise, we have $Z_{j+1}(m+1)=Z_{j+1}(m)+1\mod b$, and again, if $Z_{j+1}(m)\neq b-1$, the coordinates $Z_i$ for $i\geq j+2$ will be unchanged. Otherwise, $Z_{j+2}(m+1)$ will be incremented, so on and so forth.

\paragraph{Number of steps of length $b^i$.}

We denote by $M_i(m)$ the number of steps of length $b^i$ done up to move $m$, i.e., \[M_i(m)=\sum_{s\leq m} \mathbbm{1}_{V_s=b^i}.\] This random variable follows a binomial distribution with parameter $p_i=c_b b^{-i}$, and is thus concentrated around its mean: \[\mu_i:=mp_i=c_b m b^{-i}=\Theta(mb^{-i}).\] Precisely, we will show that $M_i(m)$ is likely to belong to the interval: \begin{equation}\label{eq:def-Q-i} Q_i:=\begin{cases}[\frac 12 \mu_i, \frac 32 \mu_i], \mbox{~for $i>0$, and } \\ [\frac 14 m,m] \mbox{~for $i=0$.}\end{cases}\end{equation}
%Considered together, the variable $(M_0(m),\dots, M_{k-1}(m))$ has a multinomial distribution.

\subsubsection{From $Z$ to the $Z_i$} \label{subsubsec:decomposition}
\paragraph{Fixing the number of steps $M_i(m)$.}
Here, we look at what happens when we fix the number of steps of length $b^i$, $M_i(m)$, to be $q_i$. We start with the following important remark.

\begin{remark}\label{rk:conditioning}
	In general, the variables $R_i(m)$ and $N_i(m)$ are dependent. For example if $M_1(1)=1$, then $M_0(1)=0$, since we choose only one step-length between $0$ and $1$. However, once we condition on $M_i(m)=q_i$, $R_i(m)$ and $N_i(m)$ become independent and $R_i(m)$ has then the law of a lazy (with parameter $\frac 12$) random walk after $q_i$ steps. I.e., we have, for any $y\in C_b$, or $y\in \Z$ if $i=k-1$, and any $q_i\leq m$,
	\begin{equation}\label{eq:R-cond-M} \Pr\left(R_i(m)=y \mid M_i(m)=q_i\right )=p_{q_i}^{G_i}(y)\end{equation}
	where 
	\[G_i=\begin{cases} C_b \text{ if } i\in [0,k-2] \\
	\Z \text{ if } i=k-1,\end{cases} \]
	and $p_{q_i}^{G}(y)$ is the law of a lazy (with parameter $\frac 12$) random walk on $G\in \{\Z, C_b\}$, that starts at $0$, to visit the node $y$ at step $q_i$.
	%The variable $S_i(m)$ is a lazy random walk with $M_i(m)$ steps, and depends only on $M_i(m)$ and the signs $\xi_s$ of the jumps of length $b^i$ (i.e., when $V_s=b^i$).
%	In particular, conditioning on $M_i(m)$, the variables $R_i(m)=S_i(m)\mod b$ and $N_i(m)$ become independent.
\end{remark}

Considering this remark, we write, with $m\geq0$ and $x=\sum_{j=0}^{k-1}x_j b^j\in\Z$,
\begin{align}\label{eq:Z-conditioning-M} \Pr(Z(m)=x)=\sum_{q_0+\dots+ q_{k-1}=m}\mathcal{P}_{x,\mathbf{q}} \cdot \mathcal{M}_{\mathbf{q}},\end{align}
where $\mathbf{q}=(q_0,\dots,q_{k-1})$, \[\mathcal{P}_{x,\mathbf{q}}=\Pr\left (Z(m)=x\mid \forall t\leq k-1, M_{t}(m)= q_t \right ), \] and \[ \mathcal{M}_{\mathbf{q}}=\Pr \left (\forall t\leq k-1, M_{t}(m)= q_t \right ) .\]
Since the base $b$ decomposition is unique, we have $Z(m)=x$ if and only if $Z_s(m)=x_s$ for all $s\leq k-1$. Hence, \begin{align*}
\mathcal{P}_{x,\mathbf{q}}= \prod_{s=0}^{k-1} \Pr\left ( Z_s(m)=x_s\mid \mathcal{A}_{s,x,\mathbf{q}} \right ),\end{align*}
where $\mathcal{A}_{s,x,\mathbf{q}}$ denotes the event  $(\forall j<s, Z_{j}(m)=x_{j}) \cap (\forall t\leq  k-1, M_{t}(m)= q_t) $.

Since $Z_i(m)=R_i(m)+N_i(m)=x_i$ if and only if $R_i(m)=x_i-y$ and $N_i(m)=y$ for some $y\in C_b$ ($\Z$ if $i=k-1$), using Remark \ref{rk:conditioning}, we have:
\begin{align}\label{eq:decomposition-ZRN}
\mathcal{P}_{x,\mathbf{q}}=\prod_{s=0}^{k-1} \sum_{y} \Pr\left ( R_s(m)=x_s-y\mid \mathcal{A}_{s,x,\mathbf{q}} \right )\cdot \Pr\left (N_s(m)=y\mid \mathcal{A}_{s,x,\mathbf{q}}. \right). \end{align}
Using Eq.~\eqref{eq:R-cond-M}, we have:
\begin{align*}\Pr\left ( R_s(m)=x_s-y\mid \mathcal{A}_{s,x,\mathbf{q}} \right )=p^{G_s}_{q_s}(x_s-y).
\end{align*}
Inserting this in Eq.~\eqref{eq:decomposition-ZRN}, we obtain that
\begin{align}\label{eq:decomposition-ZRN2}
\mathcal{P}_{x,\mathbf{q}} =\prod_{s=0}^{k-1} \sum_{y} p^{G_s}_{q_s}(x_s-y)\cdot \Pr\left (N_s(m)=y\mid \mathcal{A}_{s,x,\mathbf{q}} \right). \end{align}
Hence, in Eq.~\eqref{eq:Z-conditioning-M}, we have:
\begin{align}\label{eq:Z-conditioning-M-full} \Pr(Z(m)=x)=\sum_{q_0+\dots+ q_{k-1}=m} \mathcal{M}_{\mathbf{q}}\cdot \prod_{s=0}^{k-1} \sum_{y} p^{G_s}_{q_s}(x_s-y)\cdot \Pr\left (N_s(m)=y\mid \mathcal{A}_{s,x,\mathbf{q}} \right).\end{align}

\paragraph{Bounds on pointwise probabilities of $Z$ taking dependencies between the coordinates into account.}
Recall, with Lemma \ref{lem:cov-from-pointwise-appendix}, that we need only bounds on $\Pr(Z(m)=0)$ and $\Pr(Z(m)=x)$ for $x\in[n]$ to estimate the cover time of the Weierstrassian random walk on $C_n$. In the following two lemmas, we show how such bounds can be obtained by the independent study of:
\begin{itemize}
	\item the distributions of $M_i(m)$, studied in Section \ref{subsubsec:number-steps}
	\item the probability  $p_q^{G_i}(y)$. It is given in Section \ref{subsubsec:rw-distributions}, and
		\item the noise in the last coordinate, $N_{k-1}(m)$, studied in Section \ref{subsubsec:noise}.
\end{itemize}

Note that, when neglecting the dependencies between the coordinates, and assuming that $M_i(m)$ is exactly its expected value $mp_i$, we have, as detailed in the main text,
\[ \Pr(Z(m)=x) = \prod_{s=0}^{k-1} \Pr(R_s(m)=x_s) = \prod_{s=0}^{k-1} p_{mp_s}^{G_s}(x_s). \]
Note also that we have $\prod_{s=0}^{k-1} p_{mp_s}^{G_s}(x_s)\leq \prod_{s=0}^{i} p_{mp_s}^{G_s}(x_s)$ for any $i\leq k-1$. This is useful in particular when $mp_{i+1} \leq 1 \leq mp_i$, i.e. when $m\in [c_b b^i, c_b b^{i+1}]$. The following two lemmas provide the additional components that appear when taking into account the noise and the fact that the number of steps $M_i(m)$ does not always equal its expected mean $mp_i$. We shall first prove the following upper bound.

\begin{lemma}\label{lem:separation-ub} For any $m\geq 0$ and any $i\leq k-1$, 
	\begin{equation}\label{eq:separation-ub} 	
		 \Pr(Z(m)=0) \leq \prod_{s=0}^{i} \max_{y,q_s \in Q_s} p_q^{G_s}(y) + \sum_{j=0}^i \left ( \Pr(M_j(m)\notin Q_j)  \prod_{s=0}^{j-1} \max_{y,q_s \in Q_s} p_q^{G_s}(y)\right ) \end{equation}
\end{lemma}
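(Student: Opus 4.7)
\textbf{Proof plan for Lemma \ref{lem:separation-ub}.} The starting point is Eq.~\eqref{eq:Z-conditioning-M-full} specialized to $x=0$, which expresses $\Pr(Z(m)=0)$ as a sum over all tuples $\mathbf{q}=(q_0,\dots,q_{k-1})$ with $q_0+\dots+q_{k-1}=m$, of $\mathcal{M}_{\mathbf{q}}$ times a product over $s\leq k-1$ of inner sums $\sum_y p^{G_s}_{q_s}(-y)\Pr(N_s(m)=y\mid\mathcal{A}_{s,0,\mathbf{q}})$. The first step is to bound each inner sum by $\max_{y}p^{G_s}_{q_s}(y)$, using that the conditional distribution of $N_s(m)$ sums to $1$. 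This yields the crude bound
\[
\Pr(Z(m)=0) \;\leq\; \sum_{\mathbf{q}}\mathcal{M}_{\mathbf{q}}\prod_{s=0}^{k-1}\max_{y}p^{G_s}_{q_s}(y).
\]

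Next I would truncate the product at index $i$ by using the trivial bound $\max_{y}p^{G_s}_{q_s}(y)\leq 1$ for every $s>i$. This leaves a product over $s\leq i$ only. The remaining and key step is then to split the sum over $\mathbf{q}$ according to whether the ``good event'' $\{q_s\in Q_s\text{ for all }s\leq i\}$ holds, and, when it fails, according to the smallest index $j\leq i$ at which $q_j\notin Q_j$. Concretely, partition the tuples into $A_\infty:=\{\mathbf{q}:q_s\in Q_s,\,\forall s\leq i\}$ and, for each $j\leq i$, $A_j:=\{\mathbf{q}:q_0\in Q_0,\dots,q_{j-1}\in Q_{j-1},\,q_j\notin Q_j\}$.

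For the contribution of $A_\infty$, every factor in the product is bounded by $\max_{y,\,q_s\in Q_s}p^{G_s}_{q_s}(y)$, so the sum over $A_\infty$ of $\mathcal{M}_{\mathbf{q}}$ is at most $1$ and one obtains the first term on the right-hand side of the lemma. For the contribution of $A_j$, I would again use the trivial bound $\max_{y}p^{G_s}_{q_s}(y)\leq 1$ for the indices $s\geq j$, leaving a product over $s\leq j-1$; on $A_j$ each such factor is bounded by $\max_{y,\,q_s\in Q_s}p^{G_s}_{q_s}(y)$, and the remaining sum $\sum_{\mathbf{q}\in A_j}\mathcal{M}_{\mathbf{q}}$ is at most $\Pr(M_j(m)\notin Q_j)$, since $A_j$ is contained in the event $\{M_j(m)\notin Q_j\}$. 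Summing over $j\leq i$ yields exactly the second term of the stated bound.

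The proof is essentially bookkeeping once the decomposition idea is identified, so there is no single hard step; the only point requiring mild care is the order of the bounds: one must first replace the inner $y$-sums by the pointwise maxima of $p^{G_s}_{q_s}$ (which removes all explicit dependence on the noise variables $N_s$), and only then decompose the $\mathbf{q}$-sum by the first escape index $j$. Doing these two operations in the opposite order would leave conditional $N_s$-probabilities entangled with the indicator $\mathbf{1}_{A_j}$ and obstruct the clean factorization.
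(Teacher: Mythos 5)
Your proposal is correct and follows essentially the same route as the paper: start from Eq.~\eqref{eq:Z-conditioning-M-full}, collapse each inner $y$-sum to $\max_y p^{G_s}_{q_s}(y)$ using that the conditional law of $N_s$ sums to one, then decompose the $\mathbf{q}$-sum by the first index $j\leq i$ at which $q_j\notin Q_j$ and bound $\sum_{\mathbf{q}\in A_j}\mathcal{M}_{\mathbf{q}}$ by $\Pr(M_j(m)\notin Q_j)$. The decomposition you describe is exactly the paper's Eq.~\eqref{eq:dec-sum}, and your remark about the order of the two operations matches the paper's handling of the noise terms.
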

We will prove in Section \ref{subsubsec:returns-to-0}, that the dominating term of this upper bound is \[ \prod_{s=0}^{i} \max_{y,q_s \in Q_s} p_q^{G_s}(y), \] as is hinted by the intuition. We will also prove the following lower bound. It uses the event $\mathcal{A}_{k-1,x,\mathbf{q}}$ that we recall as $(\forall j\leq k-2, Z_{j}(m)=x_{j}) \cap (\forall t\leq  k-1, M_{t}(m)= q_t) $.

\begin{lemma}\label{lem:separation-lb}
For any $m\geq 0$, any $x\in \Z$, and any $I$ interval of $\Z$,
	\begin{align}\label{eq:separation-lb} \Pr ( Z(m)=x ) \geq & \min_{\forall i, q_i\in Q_i}\Pr(N_{k-1}(m)\in I\mid \mathcal{A}_{k-1,x,\mathbf{q}}) \\ &\cdot \min_{y\in I, q\in Q_{k-1}} p^{\Z}_q(x_{k-1}-y) \cdot\prod_{s=0}^{k-2}  \min_{y\in C_b, q\in Q_s} p_q^{C_b}(y)\nonumber \\ \nonumber &\cdot \Pr(\forall j\leq k-1, M_j(m)\in Q_j ).\end{align}
\end{lemma}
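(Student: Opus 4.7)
\textbf{Proof plan for Lemma~\ref{lem:separation-lb}.}
The plan is to start from the exact identity in Eq.~\eqref{eq:Z-conditioning-M-full} and successively lower bound by (i) throwing away all summands corresponding to atypical values of the step counts, (ii) pulling a worst-case factor out of each inner sum over the noise, and (iii) using the normalization of the conditional noise distribution. First, I would restrict the outer sum over $\mathbf{q}=(q_0,\dots,q_{k-1})$ to tuples in the typical range $\mathcal{Q}:=\{\mathbf{q} : \sum_i q_i=m,\ q_i\in Q_i\ \forall i\}$, giving
\[
\Pr(Z(m)=x) \ \geq\ \sum_{\mathbf{q}\in\mathcal{Q}} \mathcal{M}_{\mathbf{q}}\cdot \prod_{s=0}^{k-1}\Sigma_s(\mathbf{q}),
\]
where I write $\Sigma_s(\mathbf{q}):=\sum_{y} p^{G_s}_{q_s}(x_s-y)\,\Pr(N_s(m)=y\mid \mathcal{A}_{s,x,\mathbf{q}})$.

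Next, I would treat the factors $s\le k-2$ and $s=k-1$ separately. For each $s\le k-2$, both $N_s$ and the argument of $p^{C_b}_{q_s}$ live in $C_b$, so bounding $p^{C_b}_{q_s}(x_s-y)\ge \min_{y'\in C_b,\,q'\in Q_s} p^{C_b}_{q'}(y')$ pointwise and using $\sum_{y\in C_b}\Pr(N_s(m)=y\mid \mathcal{A}_{s,x,\mathbf{q}})=1$ yields
\[
\Sigma_s(\mathbf{q})\ \ge\ \min_{y'\in C_b,\,q'\in Q_s} p^{C_b}_{q'}(y').
\]
For $s=k-1$, since the noise now lives in $\Z$, I would restrict the inner sum to $y\in I$, bound $p^{\Z}_{q_{k-1}}(x_{k-1}-y)\ge \min_{y'\in I,\,q'\in Q_{k-1}} p^{\Z}_{q'}(x_{k-1}-y')$, and use that the remaining factor is at least $\min_{\mathbf{q}\in\mathcal{Q}}\Pr(N_{k-1}(m)\in I\mid \mathcal{A}_{k-1,x,\mathbf{q}})$.

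Finally, each of the minima obtained above is independent of $\mathbf{q}$, so I can pull them in front of the outer sum; the remaining sum $\sum_{\mathbf{q}\in\mathcal{Q}}\mathcal{M}_{\mathbf{q}}$ is exactly $\Pr(\forall j\le k-1,\ M_j(m)\in Q_j)$, which gives the asserted inequality. The main subtlety is in the $s=k-1$ term: because $N_{k-1}$ ranges over $\Z$, we cannot use normalization as in the $C_b$ case, which is precisely why the interval $I$ and the extra probability factor $\Pr(N_{k-1}(m)\in I\mid\mathcal{A}_{k-1,x,\mathbf{q}})$ appear in the statement; handling this carefully (and making sure the minimum over $\mathbf{q}$ of this conditional probability is well-defined) is the only step requiring attention, while everything else is a bookkeeping application of $\sum_y \Pr(N_s=y\mid\mathcal{A})=1$.
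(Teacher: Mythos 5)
Your proposal is correct and follows essentially the same route as the paper's proof: starting from Eq.~\eqref{eq:Z-conditioning-M-full}, restricting the outer sum to $q_i\in Q_i$, restricting the inner sum to $y\in I_s$ (all of $C_b$ for $s\le k-2$, the interval $I$ for $s=k-1$), pulling out the pointwise minima of $p^{G_s}_{q_s}$, and identifying the residual sum of $\mathcal{M}_{\mathbf{q}}$ with $\Pr(\forall j,\ M_j(m)\in Q_j)$. The subtlety you flag about the last coordinate living on $\Z$ is exactly the reason the paper introduces $I$ and the extra conditional-probability factor, so no gap remains.
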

We will prove in Section \ref{subsubsec:returns-to-x}, that the dominating term of this lower bound, when $m\geq b^k$, is the second one, namely, $ \min_{y\in I, q\in Q_{k-1}} p^{\Z}_q(x_{k-1}-y) \cdot\prod_{s=0}^{k-2}  \min_{y\in C_b, q\in Q_s} p_q^{G_s}(y) $. Indeed, with $I$ well-chosen, and for $m\geq b^k$, we will prove that the first and last factors are $\Omega(1)$, in Sections \ref{subsubsec:noise} and \ref{subsubsec:number-steps}, respectively.
		\begin{proof}[Proof of Lemma \ref{lem:separation-ub}]
	We start with Eq.~\eqref{eq:Z-conditioning-M-full}
	\begin{align}\nonumber  \Pr(Z(m)=x)&=\sum_{q_0+\dots+ q_{k-1}=m} \mathcal{M}_{\mathbf{q}}\cdot \prod_{s=0}^{k-1} \sum_{y} p^{G_s}_{q_s}(x_s-y)\cdot \Pr\left (N_s(m)=y\mid \mathcal{A}_{s,x,\mathbf{q}} \right) \\ \nonumber &\leq \sum_{q_0+\dots+ q_{k-1}=m} \mathcal{M}_{\mathbf{q}}\cdot \prod_{s=0}^{k-1} \max_{y} p^{G_s}_{q_s}(y) \sum_{y} \Pr\left (N_s(m)=y\mid \mathcal{A}_{s,x,\mathbf{q}} \right) \\ &\leq \sum_{q_0+\dots+ q_{k-1}=m} \mathcal{M}_{\mathbf{q}}\cdot \prod_{s=0}^{k-1} \max_{y} p^{G_s}_{q_s}(y), \label{eq:Z-dec-up} \end{align}
	where we used in the last inequality that $\sum_{y} \cdot \Pr\left (N_s(m)=y\mid \mathcal{A}_{s,x,\mathbf{q}} \right) = 1$. As the number of steps of length $b^j$, $M_j(m)$, is likely to belong to $Q_j$ (defined by Eq.~\eqref{eq:def-Q-i}), we make the following decomposition of the sum in Eq.~\eqref{eq:Z-dec-up}, for any $i\leq k-1$:
	\begin{align}\label{eq:dec-sum}  \sum_{q_0+\dots+ q_{k-1}=m}=\sum_{\substack{q_0+\dots+q_{k-1}=m \\ q_0 \in Q_0, \ldots, q_i \in Q_i}} +\sum_{j=0}^i \sum_{\substack{q_0+\dots+q_{k-1}=m \\ q_0 \in Q_0, \ldots, q_{j-1} \in Q_{j-1}, q_j \notin Q_j}}  \end{align}
	 The  intuition behind this decomposition is that when $q_0,\dots,q_i\in Q_0\times\dots\times Q_i$, we may obtain a good bound on the pointwise probability of the coordinates $0$ to $i$, giving an upper bound on $\prod_{s=0}^{k-1} \max_{y} p^{G_s}_{q_s}(y)$ (bounding the factors for $s>i$ by $1$). When for some $j\leq i$, $q_0 \in Q_0, \ldots, q_{j-1} \in Q_{j-1},q_j \notin Q_j$, we have such a bound for the coordinates $0$ to $j-1$, yielding a (weaker) bound on $\prod_{s=0}^{k-1} \max_{y} p^{G_s}_{q_s}(y)$. To compensate for this weaker bound, we use that the event $M_j(m)\notin Q_j$ is unlikely, to get a bound on $\mathcal{M}_{\mathbf{q}}.$ 
	
	Let us first consider the inner sum in the second sum of Eq.~\eqref{eq:dec-sum}. We have:
	\begin{align*}
	\sum_{\substack{q_0+\dots+q_{k-1}=m \\ q_0 \in Q_0, \ldots, q_{j-1} \in Q_{j-1}, q_j \notin Q_j}} \mathcal{M}_{\mathbf{q}}\cdot \prod_{s=0}^{k-1} \max_{y}p_{q_s}^{G_s}(y)
	\\ \leq  \left( \prod_{s=0}^{j-1}\max_{y,q_j\in Q_s} p_{q_s}^{G_s}(y)\right)\cdot \sum_{\substack{q_0+\dots+q_{k-1}=m \\ q_0 \in Q_0, \ldots, q_{j-1} \in Q_{j-1}, q_j \notin Q_j}} \mathcal{M}_{\mathbf{q}} \\ 
	\leq \left( \prod_{s=0}^{j-1}\max_{y,q_s\in Q_s} p_{q_s}^{G_s}(y) \right) \cdot \Pr\left (M_0(m)\in Q_0,\dots,M_{j-1}(m)\in Q_{j-1},M_j(m)\notin Q_j \right ) \\ 
	\leq \left ( \prod_{s=0}^{j-1}\max_{y,q_s\in Q_s} p_{q_s}^{G_s}(y)\right) \cdot \Pr\left (M_j(m)\notin Q_j \right ). 
	\end{align*}
	By similar computations, we bound the first sum: \[\sum_{\substack{q_0+\dots+q_{k-1}=m \\ q_0 \in Q_0, \ldots, q_i \in Q_i}}\mathcal{P}_{x,\mathbf{q}} \cdot \mathcal{M}_{\mathbf{q}}\leq \prod_{s=0}^{i} \max_{y,q_s \in Q_s} p_{q_s}^{G_s}(y).\] Inserting into Eq.~\eqref{eq:Z-dec-up}, we get:
	\begin{align*} \Pr(Z(m)=x) \leq \prod_{s=0}^{i} \max_{y,q_s \in Q_s} p_q^{G_s}(y)+\sum_{j=0}^i \Pr(M_j(m)\notin Q_j)  \prod_{s=0}^{j-1} \max_{y,q_s \in Q_s} p_{q_s}^{G_s}(y) ,   \end{align*}
	as desired.
\end{proof}

	\begin{proof}[Proof of Lemma \ref{lem:separation-lb}]	
Let us recall Eq.~\eqref{eq:Z-conditioning-M-full}:
	\begin{align}\nonumber  \Pr(Z(m)=x)=\sum_{q_0+\dots+ q_{k-1}=m} \mathcal{M}_{\mathbf{q}}\cdot \prod_{s=0}^{k-1} \sum_{y} p^{G_s}_{q_s}(x_s-y)\cdot \Pr\left (N_s(m)=y\mid \mathcal{A}_{s,x,\mathbf{q}} \right) \\ \nonumber \geq \sum_{\substack{q_0+\dots+q_{k-1}=m \\ q_0 \in Q_0, \ldots, q_{k-1} \in Q_{k-1}}}\mathcal{M}_{\mathbf{q}}\cdot \prod_{s=0}^{k-1} \sum_{y\in I_s} p^{G_s}_{q_s}(x_s-y)\cdot \Pr\left (N_s(m)=y\mid \mathcal{A}_{s,x,\mathbf{q}} \right),\end{align}
	where $I_s=C_b$ for $s\leq k-2$ and $I_{k-1}=I$ is any interval of $\Z$. We then lower bound $p^{G_s}_{q_s}(x_s-y)$ by $\min_{y\in I_s} p^{G_s}_{q_s}(x_s-y)$, and use that $\sum_{y\in I_s} \Pr\left (N_s(m)=y\mid \mathcal{A}_{s,x,\mathbf{q}} \right) =\Pr(N_s(m)\in I_s \mid \mathcal{A}_{s,x,\mathbf{q}})$, which is $1$ for $s\leq k-2$, and $\Pr(N_{k-1}(m)\in I\mid \mathcal{A}_{k-1,x,\mathbf{q}})$ for $s=k-1$, to get:
	\begin{align*} \Pr(Z(m)=x)\geq  \sum_{\substack{q_0+\dots+q_{k-1}=m \\ q_0 \in Q_0, \ldots, q_{k-1} \in Q_{k-1}}}\mathcal{M}_{\mathbf{q}} \cdot \Pr\left (N_{k-1}(m)\in I \mid \mathcal{A}_{k-1,x,\mathbf{q}} \right) \cdot \prod_{s=0}^{k-1}  \min_{y\in I_s} p^{G_s}_{q_s}(x_s-y) \\ 
	\geq \min_{\forall i, q_i\in Q_i} \left \{ \Pr\left (N_{k-1}(m)\in I \mid \mathcal{A}_{k-1,x,\mathbf{q}} \right) \cdot \prod_{s=0}^{k-1}  \min_{y\in I_s} p^{G_s}_{q_s}(x_s-y) \right \} \cdot
	\sum_{\substack{q_0+\dots+q_{k-1}=m \\ q_0 \in Q_0, \ldots, q_{k-1} \in Q_{k-1}}}\mathcal{M}_{\mathbf{q}}  \end{align*}
	To conclude, we use the definition of $\mathcal{M}_{\mathbf{q}}$ to see that \[\sum_{\substack{q_0+\dots+q_{k-1}=m \\ q_0 \in Q_0, \ldots, q_{k-1} \in Q_{k-1}}}\mathcal{M}_{\mathbf{q}} = \Pr\left (M_0(m)\in Q_0,\dots,M_{k-1}(m)\in Q_{k-1} \right ).\]
\end{proof}

\subsection{Estimating the terms in Lemmas \ref{lem:separation-ub} and \ref{lem:separation-lb}}
In order to estimate the terms in Lemmas \ref{lem:separation-ub} and \ref{lem:separation-lb}, we need to understand \begin{itemize}
	\item the distribution of $M_i(m)$,
	\item the distribution of $p_{q}^{G_s}$.
	\item the distribution of the noise $N_{k-1}(m)$,
\end{itemize}
They will be studied in Sections  \ref{subsubsec:number-steps}, \ref{subsubsec:rw-distributions}, and  \ref{subsubsec:noise}, respectively.
But first, let us start with a very technical claim.

\subsubsection{Preliminary technical computations}
In what follows, we will use several times the following technical claim.
\begin{claim}\label{lem:prod-exponentials}
For any $i\geq 0$, and any constants $c\in(0,1)$ and $c'>0$, we have \[ \prod_{s=0}^i \left( 1 - ce^{-c'b^{i-s}}\right) = \Theta(1), \quad \text{ and } \quad  \prod_{s=0}^i \left( 1 + ce^{-c'b^{i-s}}\right) = \Theta(1).\]\end{claim}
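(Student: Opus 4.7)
The plan is to reduce both claims to bounding a single, rapidly convergent series. First I would change variables $j := i-s$ so that the two products become $\prod_{j=0}^{i}\bigl(1\pm ce^{-c'b^{j}}\bigr)$. Since $b\ge 2$ and the exponent $c'b^{j}$ is strictly positive, each factor $ce^{-c'b^{j}}$ lies in $(0,c)\subset(0,1)$, so every term of the ``$-$'' product is strictly positive. The whole game reduces to showing that $S_i := \sum_{j=0}^{i} e^{-c'b^{j}}$ stays bounded by a constant independent of $i$; this is immediate, because $b^{j}\ge 2^{j}$ gives
\[
S_i \;\le\; \sum_{j=0}^{\infty} e^{-c'2^{j}} \;=:\; C_{0} \;<\; \infty,
\]
a finite constant depending only on $c'$ (and on $b\ge 2$).

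For the ``$+$'' product, I apply $1+x\le e^{x}$ term-by-term to get $\prod_{j=0}^{i}(1+ce^{-c'b^{j}})\le e^{c\,S_i}\le e^{c C_{0}}$, while the trivial bound $\prod(1+ce^{-c'b^{j}})\ge 1$ gives the other side. Thus this product is $\Theta(1)$ uniformly in $i$.

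For the ``$-$'' product, the trivial upper bound is $\le 1$. For the lower bound I take logarithms and use the elementary inequality $-\log(1-x) \le x/(1-c)$, valid for all $x\in[0,c]$ with $c<1$. (This follows from the Taylor expansion $-\log(1-x)=\sum_{k\ge 1}x^{k}/k$ by bounding $x^{k}/k\le x\cdot c^{k-1}$ and summing the geometric series.) Applied with $x=ce^{-c'b^{j}}\le c$, this yields
\[
-\log\prod_{j=0}^{i}\bigl(1-ce^{-c'b^{j}}\bigr) \;\le\; \frac{c\,S_i}{1-c} \;\le\; \frac{c\,C_{0}}{1-c},
\]
so the product is at least $\exp\!\bigl(-cC_{0}/(1-c)\bigr)>0$, again a positive constant independent of $i$.

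There is no real obstacle here; the only point that requires a pinch of care is that the hypothesis $c<1$ (strict) is what makes the Taylor-type bound $-\log(1-x)\le x/(1-c)$ uniform across all the terms of the product, and hence what allows the estimate on $S_i$ to translate into a uniform lower bound on the product. The bound $b\ge 2$ is used only to guarantee that $\sum e^{-c'b^{j}}$ converges, and any $b>1$ would in fact have sufficed.
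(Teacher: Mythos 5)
Your proof is correct and follows essentially the same route as the paper's: take logarithms, bound $-\log(1-x)$ linearly in $x$ (you make the constant $1/(1-c)$ explicit where the paper just writes $O(t)$), and observe that $\sum_{j\ge 0} e^{-c'b^{j}}$ is bounded by a constant (the paper reduces this to a geometric series via $e^{-c't}=O(1/t)$, while you compare directly to the convergent infinite series). The differences are purely cosmetic.
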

\begin{claimproof}{Proof}
   Let us consider the first product. Remark that it is upper bounded by $1$. For the lower bound, as $c<1$ all terms are positive and we can take its logarithm,
  \[ \sum_{s=0}^i \log \left( 1 - ce^{-c'b^{i-s}}\right), \]
  which is negative as $c>0$. To lower bound it, we upper bound its absolute value. For this, we use that $e^{-c'b^{i-s}} \leq e^{-c'}<1$ and $-\log ( 1-t) = O(t )$ for $t\in(0,e^{-c'})$ to get:
    \[ -\sum_{s=0}^i \log \left( 1 - ce^{-c'b^{i-s}}\right) = O\left(  \sum_{s=0}^i e^{-c'b^{i-s}}\right).  \]
    Then, use that $e^{-c't}=O(t^{-1})$ for any $t>0$ to get:     \[ -\sum_{s=0}^i \log \left( 1 - ce^{-c'b^{i-s}}\right) = O\left(  \sum_{s=0}^i b^{s-i} \right) = O\left(  \sum_{s=0}^i b^{-s}\right)=O(1).  \]
    Taking the opposite of this, and then the exponential, proves the first part of Claim~\ref{lem:prod-exponentials}. The second part is done similarly.
\end{claimproof}

\subsubsection{Concentration of $M_i(m)$ around its mean $mp_i$} \label{subsubsec:number-steps}
For any $i\leq k-1$, and any $m\geq 1$, $M_i(m)$ follows a binomial distribution with parameter $p_i$ and is thus concentrated around its mean $mp_i$. Since $Q_i=[\frac 12 mp_i, \frac 32 mp_i]$ for $i>0$ and $Q_0=[\frac 14 m,m]$, we can use Chernoff's bound (Theorems 4.4 and 4.5 in \cite{Mitzenmacher}) to obtain:
\begin{equation}\label{eq:M_i-concentration}\Pr \left ( M_i (m)\notin Q_i \right )  \leq e^{-cmp_i}=e^{-cc_b m b^{-i}},
\end{equation}
for some constant $c>0$. This is the basis for the following lemma, which will essentially ensure that, for $m\geq b^k$, we can suppose that, for all $i\leq k-1$, $M_i(m)\in Q_i$

\begin{lemma}\label{lem:multinomial-concentration}
There are positive constants $c'$ and $c''$ such that for $m\geq c' b^k$, \[\Pr(\forall i\leq k-1, M_i(m)\in Q_i)>c''.\]
\end{lemma}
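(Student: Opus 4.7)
\medskip

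\noindent\textbf{Proof plan for Lemma \ref{lem:multinomial-concentration}.} The plan is to combine the individual concentration bound of Eq.~\eqref{eq:M_i-concentration} with a union bound over $i \in \{0,1,\ldots,k-1\}$, and then exploit the fact that the bound $e^{-c c_b m b^{-i}}$ decays doubly-geometrically as $i$ decreases from $k-1$, so that the whole sum collapses to a convergent geometric-like series controlled by its leading term. Concretely, I would write
\[
\Pr\bigl(\exists i \leq k-1,\, M_i(m) \notin Q_i\bigr) \;\leq\; \sum_{i=0}^{k-1} \Pr\bigl(M_i(m) \notin Q_i\bigr) \;\leq\; \sum_{i=0}^{k-1} e^{-c c_b m b^{-i}},
\]
using Eq.~\eqref{eq:M_i-concentration}. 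The complementary event is exactly what we want to lower bound by a constant.

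Next, I would substitute the hypothesis $m \geq c' b^k$ to get $m b^{-i} \geq c' b^{k-i}$, so reindexing by $j = k-i$ gives
\[
\sum_{i=0}^{k-1} e^{-c c_b m b^{-i}} \;\leq\; \sum_{j=1}^{k} e^{-c c_b c'\, b^{j}} \;\leq\; \sum_{j=1}^{\infty} e^{-\alpha\, 2^{j}},
\]
where $\alpha := c c_b c' \geq \tfrac{1}{2} c c'$ (recall $c_b \geq 1/2$ for $b \geq 2$) and we used $b \geq 2$. The right-hand side is a convergent series whose value tends to $0$ as $c' \to \infty$, since the $j=1$ term $e^{-2\alpha}$ already dominates and can be made arbitrarily small.

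Therefore I would choose $c'$ to be a sufficiently large absolute constant so that $\sum_{j=1}^{\infty} e^{-\alpha 2^{j}} \leq 1/2$, and set $c'' := 1/2$. The claimed inequality $\Pr(\forall i \leq k-1,\, M_i(m) \in Q_i) \geq c''$ then follows by taking complements. There is no real obstacle here: the only mild subtlety is verifying that $c_b$ can be treated as a universal positive constant independent of $b$ (which was established right after the definition of the Weierstrassian random walk, where it is noted that $c_b \in [1/2, 1]$ for $b \geq 2$), so that the constant $\alpha$ in the tail bound depends only on $c'$ and on the absolute constant $c$ coming from Chernoff. This makes the choice of $c'$ and $c''$ universal, as required.
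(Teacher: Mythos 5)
Your proposal is correct and follows essentially the same route as the paper's: a union bound over $i$, the Chernoff estimate of Eq.~\eqref{eq:M_i-concentration}, and the hypothesis $m\geq c'b^k$ to make the resulting series an arbitrarily small constant by choosing $c'$ large (the paper bounds the tail via $e^{-t}\leq \frac{1}{et}$ to get a geometric series in $b^{-(k-i)}$, while you use $b\geq 2$ and the doubly-exponential decay directly; both are fine). No gap.
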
 

\begin{proof}[Proof of Lemma \ref{lem:multinomial-concentration}]
	
Using the union bound and Eq.~\eqref{eq:M_i-concentration}, we get:
	\begin{align*}\Pr(\exists i\leq k-1, M_i(m)\notin Q_i) &\leq \sum_{i\leq k-1}\Pr(M_i(m)\notin Q_i) \\
	&\leq \sum_{i\leq k-1} e^{-cc_bmb^{-i}} 
	\leq \sum_{i\leq k-1} e^{-cc_b c'b^{k-i}} \\
	&\leq \frac{1}{ecc'c_b}\sum_{i\leq k-1} \frac{1}{b^{k-i}} \\ &\leq \frac{1}{ecc'c_b} \frac{1-b^{-k}}{b-1}\leq \frac{2}{ecc'},\\
	\end{align*}
	where we used that $m\geq c' b^k$ and $e^{-t}=O(\frac 1t)$ for $t>0$. For $c'$ well-chosen, this is less than $1-c''$ with $c''>0$. Hence, we have:
	\[\Pr(\forall i\leq k-1, M_i(m)\in Q_i)=1-\Pr(\exists i\leq k-1, M_i(m)\notin Q_i) \geq c'', \]
	as claimed by Lemma~\ref{lem:multinomial-concentration}.
\end{proof}

\subsubsection{Random walks distributions}\label{subsubsec:rw-distributions}
We need to recall estimations for the distribution of a random walk over the infinite line, and over the cycle $C_b$. Since the random walk over the cycle is obtained by projecting the random walk on $\Z$ modulo $b$, let us first state the results on $\Z$.

\begin{claim}\label{claim:lrw-line-distribution}
For a $\frac 12$-lazy random walk on $\Z$ that begins at $0$, we have, for any $q\geq 1$, and any $y\in \Z$, the probability to visit $y$ at step $q$ is:
\[ p_{q}^{\Z}(y) \leq cq^{-\frac 12},\]
with $c>0$ some constant. Furthermore, for any constant $c''>0$, there is a constant $c'>0$ such that for any $y\in [-c''\sqrt{q},c''\sqrt{q}]$, we have
\[  p_{q}^{\Z}(y) \geq c'q^{-\frac 12}. \]
 \end{claim}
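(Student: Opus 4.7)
The plan is to reduce the $\frac 12$-lazy random walk on $\Z$ to an ordinary simple random walk via a standard coin-flipping coupling, so that Claim~\ref{claim:lrw-line-distribution} reduces to estimating a single central binomial coefficient. Concretely, write each lazy step as $(c_{2s-1}+c_{2s})/2$, where the $c_j \in \{\pm 1\}$ are i.i.d.\ Rademacher variables: this produces step values $+1,-1,0$ with probabilities $1/4,1/4,1/2$, matching the lazy distribution. Summing over $s$ gives $Y_q = W_{2q}/2$, where $W$ is the simple (non-lazy) random walk on $\Z$. Since $W_{2q}$ visits $2y$ iff exactly $q+y$ of the $2q$ coins are $+1$, this coupling yields the identity
\[
p_q^\Z(y) \;=\; \Pr(W_{2q}=2y) \;=\; \binom{2q}{q+y}\, 2^{-2q}, \qquad |y|\le q,
\]
with $p_q^\Z(y)=0$ otherwise.

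For the upper bound, I would use the unimodality of $j\mapsto \binom{2q}{j}$ around $j=q$ to get $\binom{2q}{q+y}\le \binom{2q}{q}$ for every $y$, and then apply Stirling's formula to obtain $\binom{2q}{q}2^{-2q} = \Theta(q^{-1/2})$. This gives $p_q^\Z(y)\le c\, q^{-1/2}$ uniformly in $y$, for a suitable constant $c$.

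For the lower bound, fix $|y|\le c''\sqrt{q}$ and (WLOG $y\ge 0$) estimate the ratio
\[
\frac{\binom{2q}{q+y}}{\binom{2q}{q}} \;=\; \prod_{j=1}^{y}\frac{q-j+1}{q+j}.
\]
Taking logarithms and Taylor-expanding each factor shows the log-ratio is at least $-\sum_{j=1}^{y}(2j/q + O(j^2/q^2)) \ge -y^2/q - O(y^3/q^2) \ge -(c'')^2 - O(1)$, uniformly for $y$ in the stated range. Combined with $\binom{2q}{q}2^{-2q}\ge c/\sqrt{q}$, this produces a constant $c'=c'(c'')>0$ with $p_q^\Z(y) \ge c'/\sqrt{q}$ for all $q$ larger than some $q_0(c'')$.

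The routine computations above handle all $q\ge q_0(c'')$; the only point needing care is that, for $q$ below this threshold, $|y|\le c''\sqrt{q}$ restricts $y$ to a bounded set on which $p_q^\Z(y)$ takes finitely many positive values, and shrinking $c'$ if necessary absorbs this finite case. I expect no real obstacle: once the coupling identity is written down, both bounds follow from Stirling and an elementary ratio estimate.
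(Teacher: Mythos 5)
Your proposal is correct, and it takes a genuinely different (and arguably cleaner) route than the paper. The paper's proof conditions on the number of non-lazy moves, which is Binomial$(q,\frac 12)$, applies a concentration argument to reduce to a non-lazy walk with $\Theta(q)$ steps, and then cites the local limit estimates for the simple random walk on $\Z$ (Lawler, Prop.~2.5.3); it also uses the monotonicity $p_q^{\Z}(y)\geq p_q^{\Z}(y+1)$ for $y\geq 0$ to restrict attention to $y=O(\sqrt q)$. Your two-coin coupling instead yields the \emph{exact} identity $p_q^{\Z}(y)=\binom{2q}{q+y}2^{-2q}$, which eliminates both the external citation and the concentration step: the upper bound is unimodality plus Stirling, and the lower bound is the elementary ratio estimate $\prod_{j=1}^{y}\frac{q-j+1}{q+j}\geq e^{-(c'')^2-O(1)}$ on the stated range. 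This is more self-contained and gives sharper constants; the paper's route is more robust in that it would survive changes to the laziness parameter, where your exact identity is special to holding probability exactly $\frac 12$.

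One small caveat on your treatment of small $q$: for $q<(c'')^2$ the window $[-c''\sqrt q, c''\sqrt q]$ can contain points $y$ with $|y|>q$, where $p_q^{\Z}(y)=0$, so no shrinking of $c'$ absorbs those. This is really a defect of the claim as literally stated rather than of your argument (the paper's proof does not address it either), and it is harmless in every application, where $q$ is large and $y$ lies well inside the reachable range; it is worth a parenthetical remark that the lower bound is only meaningful for $|y|\leq q$.
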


\begin{claimproof}{Proof}It is easy to prove that, due to the laziness of parameter $\frac 12$, we have $p_{q}^{\Z}(y)\geq p_{q}^{\Z}(y+1)$ for any $y\geq 0$. Hence we can restrict what follows to $y=O(\sqrt{q})$. In this case, the bounds in \cite{Lawler}[Proposition 2.5.3] show that the distribution of a non-lazy random walk on $\Z$ is of order $\Theta(q^{-\frac 12})$. Going from there to a lazy random walk that moves with probability $\frac 12$, we just need to apply again a concentration argument for a Bernoulli variable. This allows to link the behaviour of the lazy random walk with $m$ steps with that of the non-lazy random walk with $\Theta(m)$ steps.
\end{claimproof}

\begin{claim}\label{claim:lrw-cycle-distribution}
For a $\frac 12$-lazy random walk on $C_b$ that begins at $0$, we have, for any $q\geq 1$, and any $y\in C_b$:
\[ p_{q}^{C_b}(y) \leq \begin{cases}
 cq^{-\frac 12}, \quad q\leq b^2 \\ b^{-1}(1+ce^{-c'qb^{-2}}), \quad q\geq b^2\end{cases}\]
where $c$ and $c'$ are positive constants. Furthermore there are constants $c''\in(0,1)$ and $c'''>0$ such that for any $q\geq b^2$,
 \[ p_{q}^{C_b}(y) \geq b^{-1}(1-c''e^{-c'''qb^{-2}}). \]
 \end{claim}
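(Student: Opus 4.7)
The plan is to handle the short-time upper bound ($q \leq b^2$) by lifting the walk from $C_b$ to $\Z$, and the two mixing bounds ($q \geq b^2$) via Fourier (character) analysis on $C_b$.

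For the short-time upper bound, I would use the projection identity
\[ p_q^{C_b}(y) = \sum_{k \in \Z} p_q^{\Z}(y + kb), \]
which holds because the $\frac{1}{2}$-lazy walk on $C_b$ is the image mod $b$ of the same walk on $\Z$. Combining Claim \ref{claim:lrw-line-distribution} with the standard Gaussian local-CLT refinement $p_q^{\Z}(z) \leq Cq^{-1/2} \exp(-cz^2/q)$ (a sharper version of the estimate from \cite{Lawler}), the sum is dominated by $Cq^{-1/2}\sum_{k \in \Z} \exp(-c(y+kb)^2/q)$. Since $q \leq b^2$ forces $b \geq \sqrt{q}$, the arithmetic progression $\{y+kb\}_{k \in \Z}$ has spacing that exceeds the effective width $\sqrt{q/c}$ of the Gaussian, so the sum is $O(1)$ (bounded by one central term plus an integral comparison for the tails), giving $p_q^{C_b}(y) = O(q^{-1/2})$.

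For the mixing bounds, I would diagonalize the transition matrix of the $\frac{1}{2}$-lazy walk on $C_b$ using the characters $\chi_j(x) = e^{2\pi i j x/b}$, whose eigenvalues are
\[ \lambda_j = \tfrac{1}{2} + \tfrac{1}{2}\cos(2\pi j/b) = \cos^2(\pi j/b) \in [0,1]. \]
The spectral representation
\[ p_q^{C_b}(y) = \frac{1}{b} + \frac{1}{b}\sum_{j=1}^{b-1} \lambda_j^q\, e^{-2\pi i j y/b} \]
isolates the stationary contribution and exposes the deviation. Using the symmetry $\lambda_j = \lambda_{b-j}$ together with the inequality $\sin x \geq 2x/\pi$ on $[0, \pi/2]$, which gives $\cos^2(\pi j/b) \leq e^{-4j^2/b^2}$, I get
\[ \left| p_q^{C_b}(y) - \tfrac{1}{b} \right| \leq \frac{2}{b} \sum_{j \geq 1} e^{-4qj^2/b^2}. \]
For $q \geq b^2$, the right-hand side is geometrically dominated by its first term, producing $|p_q^{C_b}(y) - 1/b| \leq (C/b)\, e^{-c q/b^2}$, which yields the claimed upper and lower mixing bounds simultaneously.

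The main obstacle I anticipate is the bookkeeping required to ensure that the lower-bound constant $c''$ lies in the open interval $(0,1)$ as the statement demands; otherwise the bound is vacuous already at $q = b^2$. Writing the Fourier error as $C e^{-c q/b^2}$, one chooses the claimed exponent $c''' < c$ strictly, so that the extra decay $e^{-(c-c''')q/b^2} \leq e^{-(c-c''')}$ available for $q \geq b^2$ absorbs the implicit constant $C$ into a new constant strictly below $1$. A secondary point is that the short-time bound genuinely requires the Gaussian-tail form of the local CLT rather than merely the $O(q^{-1/2})$ estimate of Claim \ref{claim:lrw-line-distribution}, since without exponential decay in $k$ the naive lift-and-sum argument has no a priori bound.
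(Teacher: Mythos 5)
Your proposal is correct, and it takes a genuinely different route from the paper. For the short-time upper bound the paper simply invokes a black-box estimate for random walks on regular graphs (\cite{Aldous}[Prop 6.18]), whereas you lift the walk to $\Z$ and sum a Gaussian local-CLT bound over the lattice $y+b\Z$; your closing remark is right that the bare $O(q^{-1/2})$ of Claim \ref{claim:lrw-line-distribution} is insufficient for the naive lift-and-sum (it would give $O(bq^{-1/2})$), so you do need either the Gaussian tail or, alternatively, the monotonicity $p_q^{\Z}(z)\geq p_q^{\Z}(z+1)$ for $z\geq 0$ to control the sum over $k\neq 0$. For the mixing regime the divergence is larger: the paper derives the lower bound from general mixing-time machinery (total variation distance, separation distance, and the inequality $s(2q)\leq 1-(1-2d(q))^2$), which is vacuous on the window $b^2\leq q\leq Cb^2$ and forces a separate patch there via comparison with the line walk, followed by a case analysis to stitch the two regimes into a single exponential with $c''<1$. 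Your explicit diagonalization with eigenvalues $\cos^2(\pi j/b)$ (note the laziness is exactly what makes them nonnegative, so $\lvert\lambda_j\rvert^q=\lambda_j^q$ causes no parity trouble) gives the two-sided bound $\lvert p_q^{C_b}(y)-b^{-1}\rvert\leq (C/b)e^{-cq/b^2}$ uniformly for all $q\geq b^2$ in one computation, eliminating the intermediate-regime patch entirely; the constant-absorption step you describe to force $c''<1$ is the same bookkeeping the paper performs, but applied to a single formula rather than a piecewise one. The trade-off is that the paper's argument is softer and would transfer to other vertex-transitive graphs given only their mixing time, while yours exploits the abelian structure of $C_b$ but is self-contained and cleaner.
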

 Note that $c''<1$ ensures that this lower bound (which holds for all $q\geq b^2$) is at least $\Omega(\frac 1b)$.
\begin{claimproof}{Proof}
First, the upper bound simply follows as a particular case of the distribution of a random walk in regular graphs \cite{Aldous}[Prop 6.18]. 

The lower bound requires more explanation. Informally, it stems from the mixing properties of the cycle. Recall that the mixing time of the cycle is $\Theta(b^2)$, which means that after this time, the nodes have probability roughly $\frac 1b$ to be visited. In what follows, we make this statement more precise.

Define the {\em separation distance} as: \[s(q)=\min_{y\in C_b} \{ 1-b\cdot p_q^{C_b}(y)\}=\inf \{ s : p_q^{C_b}(y)\geq \frac{1-s}{b}, \forall y \in C_b \},\]
and the {\em total variation distance} as: \[d(q)=\frac 12 \sum_{y\in C_b} \lvert p_q^{C_b}(y)-\frac 1b\rvert.\] 
We have, as a consequence of the mixing time of the cycle being less than $b^2$, that $d(q)\leq \eps$ for $q\geq b^2 \log (\eps^{-1})$ (see \cite{Peres}[5.3.1 and Eq. (4.36)]). Furthermore, by \cite{Peres}[Lemma 19.3 and Eq.~(4.24)], we have $s(2q)\leq 1-(1-2d(q))^2$ for any $q\geq 1$. Hence, for $q\geq 2b^2 \log (\eps^{-1})$, we have $s(q)\leq 1-(1-2\eps)^2=4\eps - 4 \eps^2 < 4\eps$. That is, when $q\geq 2b^2 \log (\eps^{-1})$, we have, for any $y\in C_b$: \[p_q^{C_b}(y) \geq  \frac 1b(1-4\eps).\] With the change of variable $\eps= \exp{(-\frac{q}{2b^2})}$, we have \[p_q^{C_b}(y) \geq  \frac 1b(1-4\exp{(-\frac{q}{2b^2})}),\] 
which is not meaningful (as the bound is negative)  when  $q\leq 2b^2$. In fact, we will use this bound only for $q\geq Cb^2$, with $C=2\log(8)>1$. This ensures that $1-4\exp{(-\frac{q}{2b^2})}\geq \frac 12$ which makes for a more useful lower bound.

Now, for  $b^2\leq q \leq Cb^2$, we can lower bound $p_q^{C_b}(y)$ by $p_q^{\Z}(y)$, and use Claim \ref{claim:lrw-line-distribution}, to show that $p_q^{C_b}(y)\geq C'\frac 1b$ for some $C'\in(0,1)$. Altogether, we have $p_q^{C_b}(y) \geq  \frac 1b F(q)$ for any $q\geq b^2$, where: 
\[ F(q):=\begin{cases} C' \text{ for } q\in [b^2,Cb^2] \\
1-4\exp{(-\frac{q}{2b^2})} \text{ for } q \geq Cb^2.\end{cases} \]
To conclude, we need to show that we can bound $F(q)$ from below, for all $q\geq b^2$, by $(1-c''e^{-c'''qb^{-2}})$, for a good choice of $c''\in (0,1)$ and $c'''>0$. This is equivalent to:
establishing that: \[ \begin{cases} c''e^{-c'''qb^{-2}} \geq 1-C' \text{ for } q\in [b^2,Cb^2] \\
c''e^{-c'''qb^{-2}} \geq 4\exp{(-\frac{q}{2b^2})} \text{ for } q \geq Cb^2,\end{cases} \]
which is in turn equivalent to:
\[ \begin{cases} c''e^{-c'''qb^{-2}} \geq 1-C' \text{ for } q\in [b^2,Cb^2] \\
c''e^{qb^{-2}(\frac 12 - c''')} \geq 4 \text{ for } q \geq Cb^2.\end{cases} \]
Since we are looking for $c''<1$, for the second condition to be true, we need that $c'''<\frac 12$ (otherwise, it is obvious that the condition will not hold for $q\rightarrow \infty$). Given $c'''<\frac 12$, the left hand side of the second equation is increasing with $q\geq Cb^2$ and thus it is enough to verify the condition at $q=Cb^2$. Similarly, the left hand side of the first equation is decreasing with $q$ and thus it is enough to verify the condition at $q=Cb^2$. 
The system is thus equivalent to:
\[ \begin{cases} c''e^{-c'''C} \geq 1-C' \\
c''e^{C(\frac 12 - c''')} \geq 4\end{cases}, \]
which is in turn equivalent to the condition $c''e^{-c'''C}\geq M$ for $M:=\max\{1-C',4e^{-\frac{C}{2}}\}$. Since $C=2\log 8$, we have $M:=\max\{1-C',\frac 12\}$. Since $M<1$, we may take $c''=\frac{1+M}{2}<1$. Then it suffices to take $c'''$ small enough, e.g., $c'''=\frac{1}{C} \log(\frac{1}{2}+\frac{1}{2M}) >0$. 
With these parameters, we have proved:
\[ p_q^{C_b}(y) \geq  \frac 1b F(q) \geq \frac 1b ( 1-c''e^{-c'''qb^{-2}}),\]
for any $q\geq b^2$, and with $c''<1$. This concludes the proof of Claim \ref{claim:lrw-cycle-distribution}.
\end{claimproof}

With Claims \ref{claim:lrw-line-distribution} and \ref{claim:lrw-cycle-distribution}, we can obtain the following Lemma. Intuitively, Lemma \ref{lem:prod-proba-rw} gives the distribution of $(R_0,\dots, R_{k-1})$ when they are approximated as independent. As we will show, the bounds of Lemma \ref{lem:prod-proba-rw} are good approximations of the distributions of $\Pr(Z(m)=0)$.
\begin{lemma}\label{lem:prod-proba-rw}
We have, for any $m\geq b^k$, any $x\in \Z$,
\begin{equation}\label{eq:prod-proba-rw-lb}
\prod_{s=0}^{k-2} \min_{y\in C_b, q\in Q_{s}} p_{q}^{G_s}(x_s) = \Omega(b^{-(k-1)}).
\end{equation}
We have also, for any $i\leq k-1$, $m\in (b^i,b^{i+1}]$,
\begin{equation}\label{eq:prod-proba-rw-ub}
    \prod_{s=0}^j \max_{y,q \in Q_s} p_{q}^{G_s} (y) = \begin{cases}  O(b^{-j-1}) \text{ if } j\leq i-2, \\
    O\left ( \frac{1}{\sqrt{mb^{i-1}}} \right ) \text{ if } j=i-1, \\ 
    O(\frac{\sqrt{b}}{m}) \text{ if } j=i,\end{cases}
\end{equation}
and, for any $m\geq b^k$,
\begin{equation}\label{eq:prod-proba-rw-ub'}
    \prod_{s=0}^{k-1} \max_{y,q \in Q_s} p_{q}^{G_s} (y) = O\left ( \frac{1}{\sqrt{mb^{k-1}}} \right ). 
\end{equation}
\end{lemma}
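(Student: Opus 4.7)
Both bounds follow by applying Claims~\ref{claim:lrw-line-distribution} and~\ref{claim:lrw-cycle-distribution} factor by factor to the product and tracking which regime each $q \in Q_s$ falls into. The key observation is that for $q \in Q_s$ one has $q = \Theta(m p_s) = \Theta(m b^{-s})$, so the short-time regime ($q < b^2$) versus long-time regime ($q \geq b^2$) of Claim~\ref{claim:lrw-cycle-distribution} for the $s$-th factor is determined by whether $m$ is smaller or larger than roughly $b^{s+2}$. In the transition window $q = \Theta(b^2)$ both halves of Claim~\ref{claim:lrw-cycle-distribution} agree up to constants (the bounds $b^{-1}$ and $q^{-1/2}$ coincide there), so either may be used.

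For the lower bound \eqref{eq:prod-proba-rw-lb}, when $m \geq b^k$ and $s \leq k-2$ each $q \in Q_s$ satisfies $q \geq \tfrac{c_b}{2} b^{k-s} = \Omega(b^2)$, so Claim~\ref{claim:lrw-cycle-distribution} gives $\min_{y, q \in Q_s} p_q^{C_b}(y) \geq b^{-1}\bigl(1 - c'' e^{-c''' q b^{-2}}\bigr)$. Factoring out $b^{-(k-1)}$ and substituting $t = k - s - 2$, the residual product becomes $\prod_{t=0}^{k-2}\bigl(1 - c'' e^{-\Theta(b^t)}\bigr)$, which is $\Theta(1)$ by Claim~\ref{lem:prod-exponentials}. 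This yields the desired $\Omega(b^{-(k-1)})$.

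For the upper bound \eqref{eq:prod-proba-rw-ub} with $m \in (b^i, b^{i+1}]$, I would split the indices into the long-time part ($s \leq i-2$) and the two boundary indices $s = i-1, i$. For $s \leq i-2$, $q \in Q_s$ satisfies $q = \Omega(b^{i-s}) \geq \Omega(b^2)$, so Claim~\ref{claim:lrw-cycle-distribution} gives $\max p_q^{C_b}(y) \leq b^{-1}\bigl(1 + c e^{-c' q b^{-2}}\bigr)$; applying Claim~\ref{lem:prod-exponentials} in analogy with the lower-bound computation collapses $\prod_{s=0}^{j}$ to $O(b^{-(j+1)})$ for any $j \leq i-2$, establishing the first case. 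For $s \in \{i-1, i\}$, $q$ is at most of order $b^2$, so Claim~\ref{claim:lrw-cycle-distribution} gives $\max_{y, q \in Q_s} p_q^{C_b}(y) = O(1/\sqrt{m p_s})$. Multiplying the $s = i-1$ factor $O(\sqrt{b^{i-1}/m})$ by the long-time contribution $O(b^{-(i-1)})$ produces $O(1/\sqrt{m b^{i-1}})$ (the $j = i-1$ case), and multiplying further by the $s = i$ factor $O(\sqrt{b^i/m})$ gives $O(\sqrt{b}/m)$ (the $j = i$ case). Small-$i$ edges (where $s = 0$ already falls into the short-time bucket and uses the wider interval $Q_0 = [m/4, m]$) are handled identically, since $\max p_q^{C_b}(y) = O(1/\sqrt{m}) = O(1/\sqrt{m p_0})$ still holds.

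Finally, for \eqref{eq:prod-proba-rw-ub'} with $m \geq b^k$, the indices $s \leq k-2$ are all in the long-time regime and contribute $O(b^{-(k-1)})$ exactly as in paragraph~2. For the last factor $s = k-1$ the walk lives on $G_{k-1} = \Z$, and Claim~\ref{claim:lrw-line-distribution} gives $\max_{y, q \in Q_{k-1}} p_q^{\Z}(y) \leq c q^{-1/2} = O(\sqrt{b^{k-1}/m})$, since $q \geq \tfrac{c_b}{2} m b^{-(k-1)}$. The combined product is $O(1/\sqrt{m b^{k-1}})$. The main technical subtlety throughout is the boundary region $q = \Theta(b^2)$ where Claim~\ref{claim:lrw-cycle-distribution} is formally outside either stated regime; this is resolved by the observation in paragraph~1. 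Everything else is routine bookkeeping across the three cases of Lemma~\ref{lem:prod-proba-rw}.
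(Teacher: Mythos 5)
Your proposal is correct and follows essentially the same route as the paper's proof: apply Claims~\ref{claim:lrw-line-distribution} and~\ref{claim:lrw-cycle-distribution} factor by factor according to whether $q=\Theta(mb^{-s})$ falls in the short- or long-time regime, and absorb the $(1\pm ce^{-c'qb^{-2}})$ corrections via Claim~\ref{lem:prod-exponentials}. The arithmetic in all three cases matches the paper's computation.
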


\begin{proof}
Let us show first Eq.~\eqref{eq:prod-proba-rw-lb}. For $j\leq k-2$, $q\in Q_j$ and $m\geq b^k$, we have $q=\Theta(mp_i)=\Theta(mb^{-i})=\Omega(b^2)$. Applying the lower bound in Claim \ref{claim:lrw-cycle-distribution}, we have, for some constants $c\in (0,1)$ and $c'$,
\begin{align*} \prod_{j=0}^{k-2} \min_{y\in C_b, q\in Q_{j}} p_{q}^{G_j}(x_j) \geq \prod_{j=0}^{k-2}  \min_{q\in Q_{j}} \left ( b^{-1}(1-ce^{-c'qb^{-2}}) \right ) \\ 
\geq b^{-(k-1)} \prod_{j=0}^{k-2} \left ((1-ce^{-c'c_bmb^{i-2}}) \right) \geq  b^{-(k-1)}\prod_{j=0}^{k-2} \left ((1-ce^{-c'c_bb^{k-2-i}}).\right) \end{align*}
We conclude by applying Lemma \ref{lem:prod-exponentials} to show that $\prod_{j=0}^{k-2} (1-ce^{-c' c_b b^{k-2-i}})=\Omega(1)$.

To prove Eq.~\eqref{eq:prod-proba-rw-ub}, we proceed similarly. Let $i \leq k-1$ and $m\in (b^i,b^{i+1}]$. Using this time the upper bound from Claim \ref{claim:lrw-cycle-distribution}, we have, for $j\leq i-2$,
\begin{align} \prod_{s=0}^{j} \max_{y\in C_b, q\in Q_{s}} p_{q}^{G_s}(y) \leq \prod_{s=0}^{j}  \max_{q\in Q_{s}} \left ( b^{-1}(1+c''e^{-c'''qb^{-2}}) \right ) = O(b^{-j-1}) \label{eq:prod-prob-j} \end{align}
where the last equality is justified as above. For the cases $j=i-1$, by the upper bound in Claim \ref{claim:lrw-cycle-distribution}, for $q\in Q_{i-1}$, we have $\max_{y} p^{(i-1)}_q(y) = O ( \sqrt{\frac{b^{i-1}}{m}})$. Using Eq.~\eqref{eq:prod-prob-j}, we then have
\begin{align} \prod_{s=0}^{i-1} \max_{y\in C_b, q\in Q_{s}} p_{q}^{G_s}(y) = O\left (b^{-(i-1)}\sqrt{\frac{b^{i-1}}{m}}\right )=O\left( \frac{1}{\sqrt{mb^{i-1}}}\right ). \label{eq:prod-prob-i-1}\end{align}

For $j=i$, with the upper bound in Claim \ref{claim:lrw-cycle-distribution} (or Claim \ref{claim:lrw-line-distribution} if $i=k-1$), we have $\max_y p^{(i-1)}_q(y) = O ( \sqrt{\frac{b^{i}}{m}})$, which, gives, with Eq.~\eqref{eq:prod-prob-i-1}:
\begin{align*} \prod_{s=0}^{i} \max_{y\in C_b, q\in Q_{s}} p_{q}^{G_s}(y) = O\left ( \sqrt{\frac{b^{i}}{m}} \cdot \frac{1}{\sqrt{mb^{i-1}}}\right ) = O\left( \sqrt{\frac{b}{m}}\right). \end{align*}

Finally, for $m\geq b^k$, we use again Claims \ref{claim:lrw-line-distribution} and \ref{claim:lrw-cycle-distribution} to show that \[ \max_{y,q\in Q_s}p_q^{G_s}(y)\leq b^{-1}\max_{q\in Q_s} (1+c''e^{-c'''qb^{-2}}),\] for $s\leq k-2$, and \[\max_{y,q\in Q_{k-1}}p_q^{\Z}(y)=O\left (\max_{q_{k-1}\in Q_{k-1}} \frac{1}{\sqrt{q_{k-1}}}\right )=O \left (\sqrt{\frac{b^{k-1}}{m}} \right ).\] Hence, we have with Lemma \ref{lem:prod-exponentials},
\[ \prod_{s=0}^{k-1} \max_{y, q\in Q_{s}} p_{q}^{G_s}(y) = 
O\left ( b^{-(k-1)} \sqrt{\frac{b^{k-1}}{m}} \right ) 
= O\left( \frac{1}{\sqrt{b^{k-1}m}}\right).\]
This concludes the proof of Lemma~\ref{lem:prod-proba-rw}.
\end{proof}

\subsubsection{Noise in the last coordinate}\label{subsubsec:noise}

Recall that the last coordinate $Z_{k-1}$ of $Z$ verifies $Z_{k-1}(m)=R_{k-1}(m)+N_{k-1}(m)$. Since $R_{k-1}$ is a walk on $\Z$ that moves with probability $p_{k-1}/2=\Theta(b^{k-1})$, we can expect that $\lvert R_{k-1}(m) \rvert \approx \sqrt{\frac{m}{b^{k-1}}}$.

With the following Lemma, we show that, when considering that the variables $M_i(m)$ are close to their mean, we have $N_{k-1}(m)=O( \sqrt{\frac{m}{b^{k}}})$ with at least constant probability, and hence the noise $N_{k-1}(m)$ is of lesser order than $R_{k-1}(m)$, at least with constant probability.

\begin{lemma}\label{lem:noise}
There is a constant $c'>0$ such that, for any $m\geq b^k$, with $I=(-u,u)$ and $u=c'\sqrt{\frac{m}{b^{k}}}$,
\begin{equation}\label{eq:noise}\min_{\forall i, q_i\in Q_i}\Pr(N_{k-1}(m)\in I\mid \mathcal{A}_{k-1,x,\mathbf{q}} )=\Omega(1). \end{equation}
\end{lemma}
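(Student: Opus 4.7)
My plan is to express the conditional probability as a ratio of pointwise probabilities of $J'_{k-1}(m)$ and then estimate that ratio via a Gaussian approximation of this variable's distribution. Set $\nu := \sum_{j=0}^{k-2} x_j b^j \in [0, b^{k-1})$. Uniqueness of the base-$b$ decomposition gives that the event $\{Z_j(m) = x_j\text{ for all }j \leq k-2\}$ coincides with $\{J'_{k-1}(m) \equiv \nu \pmod{b^{k-1}}\}$, and on this event $N_{k-1}(m) = (J'_{k-1}(m) - \nu)/b^{k-1} \in \Z$. Writing $M = q$ as shorthand for ``$M_t(m) = q_t$ for all $t$'', we obtain
\[
\Pr\!\bigl(N_{k-1}(m) \in I \bigm| \mathcal{A}_{k-1,x,\mathbf{q}}\bigr) \;=\; \frac{\sum_{c \in \Z,\,|c|<u} \Pr(J'_{k-1}(m) = cb^{k-1} + \nu \mid M = q)}{\sum_{c \in \Z} \Pr(J'_{k-1}(m) = cb^{k-1} + \nu \mid M = q)},
\]
so it suffices to lower bound this ratio by a positive constant, uniformly in admissible $\mathbf{q}$ and $\nu$.

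Under the conditioning $M = q$ with $q_t \in Q_t$, the sums $S_0, \ldots, S_{k-1}$ are independent (they involve disjoint subsets of the $(\xi_s)$'s, as in Remark~\ref{rk:conditioning}), and $J'_{k-1}(m) = \sum_{j=0}^{k-2} b^j S_j(m)$ is a centered integer-valued random variable with variance
\[
\sigma_W^2 \;:=\; \sum_{j=0}^{k-2} b^{2j} \cdot \frac{q_j}{2} \;=\; \Theta(m b^{k-2}),
\]
using $q_j = \Theta(m b^{-j})$ on $Q_j$. For $m \geq b^k$, this yields $\sigma_W / b^{k-1} = \Theta(\sqrt{m / b^k}) \geq \Omega(1)$, so the lattice spacing $b^{k-1}$ is at most of order $\sigma_W$, and the threshold satisfies $u b^{k-1} = c'\sqrt{m b^{k-2}} = \Theta(c' \sigma_W)$, matching the standard deviation of $J'_{k-1}(m)$.

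I would then appeal to a local central limit theorem for $J'_{k-1}(m)$, viewed as a sum of $\sum_{j \leq k-2} q_j = \Theta(m)$ independent scaled $\{0,\pm 1\}$-valued summands; the span-one unit contributions from $S_0$ supply aperiodicity, so
\[
\Pr(J'_{k-1}(m) = w \mid M = q) \;=\; \frac{1+o(1)}{\sqrt{2\pi}\,\sigma_W}\,\exp\!\Bigl(-\frac{w^2}{2\sigma_W^2}\Bigr)
\]
uniformly in $w \in \Z$. Both sums above then become Riemann sums of this Gaussian density at spacing $b^{k-1} \lesssim \sigma_W$, so the denominator is $\Theta(1/b^{k-1})$ (the density integrates to $1$), while the numerator is $\Theta\bigl((1/b^{k-1})\cdot \Pr(|\mathcal{N}(0,1)| \leq \Theta(c'))\bigr)$. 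Choosing $c'$ to be a sufficiently large absolute constant makes the Gaussian probability bounded below by a universal constant, yielding the claimed $\Omega(1)$ lower bound on the ratio.

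The principal obstacle is establishing a local CLT with a $\mathbf{q}$-uniform error term, since the summands of $J'_{k-1}(m)$ live at different scales $b^0, \ldots, b^{k-2}$ and are therefore not i.i.d. This is handled by a standard characteristic-function argument for triangular arrays: the span-one unit steps from $S_0$ control the Fourier transform away from the origin uniformly over $\mathbf{q} \in \prod_t Q_t$, and the bounded individual variances then give the needed decay to pass from a characteristic-function bound to a pointwise Gaussian approximation.
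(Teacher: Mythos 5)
Your reduction is correct and your route is genuinely different from the paper's. The identity $\{\forall j\le k-2:\, Z_j(m)=x_j\}=\{J'_{k-1}(m)\equiv\nu \pmod{b^{k-1}}\}$ and the resulting ratio of lattice probabilities are right, and the variance computation $\sigma_W^2=\Theta(mb^{k-2})$ is correct. The paper, however, avoids any local limit theorem: it writes $\lvert N_{k-1}(m)\rvert\le 1+\lvert J'_{k-1}(m)\rvert b^{-(k-1)}$, applies Markov's inequality to $\E(\lvert J'_{k-1}(m)\rvert\mid\mathcal{A}_{k-1,x,\mathbf{q}})\le\sum_{i\le k-2}b^i\E(\lvert S_i(m)\rvert\mid\cdot)$, and the only delicate point is Claim~\ref{claim:expectation-modulo-b}, showing that conditioning a lazy walk on its residue class mod $b$ does not inflate $\E\lvert S_q\rvert$ beyond $O(\sqrt q)$. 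This elementary first-moment argument even yields the stronger conclusion that the probability is $1-O(1/c')$, whereas your ratio argument only gives a (possibly small) constant --- which does suffice for the lemma as stated.

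Two steps in your sketch are genuine gaps as written. First, the local CLT in the multiplicative form $(1+o(1))\,\sigma_W^{-1}\varphi(w/\sigma_W)$ \emph{uniformly in} $w\in\Z$ is false: a characteristic-function argument gives $\sup_w\lvert\Pr(J'_{k-1}(m)=w)-\sigma_W^{-1}\varphi(w/\sigma_W)\rvert=o(\sigma_W^{-1})$, an additive error that is not multiplicative once $\lvert w\rvert$ exceeds $\sigma_W$ by logarithmic factors. This is harmless for your numerator (all its terms satisfy $\lvert w\rvert=O(c'\sigma_W)$), but it breaks your treatment of the denominator: that is an infinite lattice sum, and summing the additive $o(\sigma_W^{-1})$ error over all $c\in\Z$ diverges, so ``the density integrates to $1$'' does not deliver the required $O(b^{-(k-1)})$ upper bound. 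You would need either a pointwise bound with Gaussian decay, $\Pr(J'_{k-1}(m)=w\mid M=q)\le C\sigma_W^{-1}e^{-cw^2/\sigma_W^2}$ (strictly more than a basic LCLT), or, more simply, to note that the denominator equals $\Pr(\forall j\le k-2:\,Z_j(m)=x_j\mid M=q)$ and bound it by $\prod_{j\le k-2}\max_{y,q_j\in Q_j}p^{C_b}_{q_j}(y)=O(b^{-(k-1)})$, which is exactly Eq.~\eqref{eq:prod-proba-rw-ub} (each $q_j=\Theta(mb^{-j})\ge\Omega(b^2)$ when $m\ge b^k$, and the product of the $(1+\eps)$ corrections is $O(1)$ by Claim~\ref{lem:prod-exponentials}). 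Second, the uniform-in-$\mathbf{q}$ LCLT lower bound for the multi-scale sum $\sum_{j}b^jS_j(m)$ --- a non-identically-distributed triangular array with summands on $k-1$ different lattices --- still has to be proved; your outline (aperiodicity supplied by $S_0$) identifies the right mechanism, but it constitutes the bulk of the proof rather than a routine citation, and it is precisely the machinery the paper's Markov-inequality route is designed to sidestep.
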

\begin{proof}
It is enough to prove that there is a constant $c''>0$ such that, for any $u>0$, and any $\mathbf{q}=(q_0,\dots,q_{k-1})\in Q_0\times \dots\times Q_{k-1}$,
\begin{equation}\label{eq:noise-general}\mathcal{N}_{x,\mathbf{q}}:=\Pr\left (\lvert N_{k-1}(m) \rvert <u\mid \mathcal{A}_{k-1,x,\mathbf{q}} \right )\geq 1-c'' \frac{\sqrt{\frac{m}{b^{k}}}}{u-1}. \end{equation}
Since, by Eq.~\eqref{eq:def-N-k-1}, $N_{k-1}(m)= \lfloor J'_{k-1}(m)b^{-(k-1)} \rfloor$, we have $\lvert N_{k-1}(m)\rvert \leq 1+\lvert J'_{k-1}(m)\rvert b^{-k+1}$. Thus,  defining $u'=(u-1)b^{k-1}$, we have: \[ \mathcal{N}_{x,\mathbf{q}}\geq %\Pr(1+\lvert J_{k-1}'(m)\rvert b^{-k+1}<u\mid \mathcal{A}_{k-1,x,\mathbf{q}})=
\Pr\left (\lvert J_{k-1}'(m)\rvert <u'\mid \mathcal{A}_{k-1,x,\mathbf{q}} \right )\] By Markov's inequality, we have $\Pr\left (\lvert J_{k-1}'(m)\rvert \geq u'\mid \mathcal{A}_{k-1,x,\mathbf{q}} \right ) \leq \E \left (\lvert J_{k-1}'(m)\rvert \mid \mathcal{A}_{k-1,x,\mathbf{q}} \right ) \cdot \frac{1}{u'} $ and hence:
\[\mathcal{N}_{x,\mathbf{q}}\geq 1-\E\left(\lvert J_{k-1}'(m)\rvert \mid \mathcal{A}_{k-1,x,\mathbf{q}} \right ) \frac{1}{u'}.\]
Since $J_{k-1}'(m) = \sum_{i\leq k-2} b^i S_{i}(m)$, we have $\lvert J_{k-1}'(m) \rvert \leq \sum_{i\leq k-2} b^i\lvert S_{i}(m)\rvert$, therefore: \[\E\left (\lvert J_{k-1}'(m)\rvert \mid \mathcal{A}_{k-1,x,\mathbf{q}} \right )\leq \sum_{i\leq k-2} b^i\E \left (\lvert S_{i}(m)\rvert\mid \mathcal{A}_{k-1,x,\mathbf{q}}\right ).\]
Hence,
\begin{equation} \label{eq:noise-sum-expectations} \mathcal{N}_{x,\mathbf{q}}\geq 1-  \frac{
\sum_{i\leq k-2} b^i\E \left (\lvert S_{i}(m)\rvert\mid \mathcal{A}_{k-1,x,\mathbf{q}}\right )}{u'}.\end{equation}
Our next goal is to bound $\E \left (\lvert S_{i}(m)\rvert\mid \mathcal{A}_{k-1,x,\mathbf{q}}\right )$. Recall that conditioning on $\mathcal{A}_{k-1,x,\mathbf{q}}$, $S_i(m)$ is a lazy (with parameter $\frac 12$) random walk on $\Z$ with $q_i$ (possibly lazy) steps, and we have $Z_i(m)=x_i$ for every $i\leq k-2$. Thus, for every $i\leq k-2$, $S_{i}(m)+N_i(m)\mod b=Z_i(m)=x_i$. Conditioning on the value $y_i\in C_b$ taken by $N_i(m)$, we have $S_{i}(m)=x_i-y_i\mod b$ and are in the setting of the following claim.

\begin{claim}\label{claim:expectation-modulo-b}
Let $S_q$ be a lazy (with parameter $\frac 12$) random walk on $\Z$ at step $q\geq b^2$, and $x\in[b]$. Then there is a constant $c>0$ such that:
	\[\E(\lvert S_q \rvert \mid S_q =x\mod b) \leq c\sqrt q. \]
\end{claim}
The claim essentially says that the conditioning on $S_q=x\mod b$, for any $x\in [0,b-1]$, does not change significantly the distance travelled by the walk up to step $q$. Let us delay the proof of Claim \ref{claim:expectation-modulo-b} and assume it for now. Then, by Claim \ref{claim:expectation-modulo-b}, for any $y_i\in C_b$, \[\E(\lvert S_{i}(m) \rvert\mid \mathcal{A}_{k-1,x,\mathbf{q}}\cap N_i(m)=y_i)\leq c \sqrt{q_i}.\] Hence, $\E(\lvert S_{i}(m) \rvert\mid \mathcal{A}_{k-1,x,\mathbf{q}})\leq c \sqrt{q_i}$, and thus, by Eq.~\eqref{eq:noise-sum-expectations}:
\[\mathcal{N}_{x,\mathbf{q}} \geq 1- \frac{c}{u'}\sum_{i\leq k-2}b^i\sqrt{q_i}.\] 
Since $q_i\in Q_i$, we have $q_i = \Theta( m b^{-i})$. Hence
\[\sum_{i\leq k-2}b^i\sqrt{q_i}  =\Theta(\sqrt{m} \sum_{i\leq k-2}\sqrt{b}^i)= \Theta(\sqrt m \sqrt b^{k-2}).\]
Thus, for some constant $c''>0$, we have
\[\mathcal{N}_{x,\mathbf{q}} \geq 1-\frac{c''}{u'}\sqrt m \sqrt b^{k-2}.\] 
Replacing $u'$ yields Eq.~\eqref{eq:noise-general} and thus establishes Lemma \ref{lem:noise}, assuming Claim \ref{claim:expectation-modulo-b}.\end{proof}

 We next proceed to prove Claim \ref{claim:expectation-modulo-b}.
\begin{claimproof}[Proof of Claim \ref{claim:expectation-modulo-b}]
	Let $x\in \{0,\dots,b-1\}$. By definition,

	\begin{align} \E(\lvert S_q \rvert &\mid S_q=x\mod b) = \frac{1}{\Pr(S_q=x\mod b)} \sum_{k\geq 1} k \Pr(\lvert S_q\rvert =k \cap S_q=x\mod b) \nonumber \\
	&= \frac{1}{\Pr(S_q=x\mod b)} \sum_{k\geq 1}\sum_{l\in\Z} k \Pr(\lvert S_q\rvert =k \cap S_q=x+lb) \nonumber  \\
	&=  \frac{1}{\Pr(S_q=x\mod b)} \sum_{k\geq 1}\sum_{l\in\Z} k( \Pr( S_q=k=x+lb)+ \Pr(S_q=-k=x+lb))  \nonumber \\
	&= \frac{1}{\Pr(S_q=x\mod b)} \left ( \theta_x + \gamma_x \right ). \label{eq:exp-S-q-cond-mod-b} 
	\end{align}
	where $ \theta_x = \sum_{l\geq 0} (x+lb)\Pr(S_q=x+lb)$ and $\gamma_x=\sum_{l\geq 1} (lb-x)\Pr(S_q=-lb+x)$. We will prove that $\gamma_x+\theta_x$ is of order $\frac{\sqrt{q}}{b}$. For this, note that \begin{equation}
	    \sum_{y=0}^{b-1} \theta_y +\gamma_y=\E(\lvert S_q\rvert ) = O(\sqrt{q}).
	\end{equation}
	Next, let us prove that $\theta_y+\gamma_y$ does not significantly depend on $y\in [b]$, for $q\geq b^2$. First, by symmetry of the process, for any $y\in\{0,\dots,b-1\}$, we have $\gamma_y=\sum_{l\geq 1} (lb-y)\Pr(S_q=lb-y)=\sum_{l\geq 0} (lb+b-y)\Pr(S_q=lb+b-y)=\Theta_{b-y}$. Thus, \begin{equation}
	    \label{eq:sum-gamma-theta}
	    \sum_y \theta_y+\gamma_y=2\sum_y \theta_y=O(\sqrt{q})
	\end{equation}
	Furthermore, as $S$ is lazy with parameter $\frac 12$, we have $\Pr(S_q=z)\geq \Pr(S_q=z+1)$ for any $q>0$ and $z\geq 0$. Hence, \begin{equation}\label{eq:theta-ub}
	     \theta_y \leq \sum_{l\geq 0}(y+lb)\Pr(S_q=lb)\leq  \sum_{l\geq 0}(b+lb)\Pr(S_q=lb) \leq b\Pr(S_q=0\mod b) + \theta_0.
	\end{equation} Using the same monotony property of the process, we have \begin{align*} \theta_y &\geq \sum_{l\geq 0} lb \Pr(S_q=(l+1)b) = \sum_{l\geq 0} (l+1)b \Pr(S_q=(l+1)b)-b\sum_{l\geq 0}\Pr(S_q=(l+1)b) \\&\geq \theta_0 - b\Pr(S_q=0\mod b).\end{align*}
	By Claim \ref{claim:lrw-cycle-distribution}, we have, for $q\geq  b^2$, $ \Pr(S_q=0\mod b) =\Theta(\frac 1b)$. Hence \[ \theta_y = \theta_0 \pm \Theta(1) \] and, by summing, we have \[ \sum_y \theta_y = b\theta_0 \pm \Theta(b). \] 
	Since $\sum_y \theta_y =O(\sqrt{q})$, and $q\geq b^2$, this implies $\theta_0=O(\frac{\sqrt{q}}{b})$, and hence, $\theta_y=O(\frac{\sqrt{q}}{b})$. Combined with Eq.~\eqref{eq:exp-S-q-cond-mod-b}, we have, for $q\geq b^2$,
	\begin{align*} \E(\lvert S_q \rvert &\mid S_q=x\mod b) = O\left ( \frac{1}{\Pr(S_q=x\mod b)}\frac{\sqrt{q}}{b} \right)=O\left(b\frac{\sqrt{q}}{b} \right )=O\left ( \sqrt{q}\right ) ,
	\end{align*}
	where in the last equality we use again Claim \ref{claim:lrw-cycle-distribution}. This proves Claim \ref{claim:expectation-modulo-b}.
\end{claimproof}

\subsection{Estimating the number of visits to $0$ and $x$}
Recall, with Lemma \ref{lem:cov-from-pointwise-appendix}, that we want to find $p>0$ and $m_0$ such that \[\frac{\sum_{m=m_0}^{2m_0}\Pr(Z(m)=x)}{\sum_{m=0}^{m_0} \Pr(Z(m)=0)}\] with $m_0 p^{-1}$ as small as possible, since the cover time is then $\tilde{O}(m_0p^{-1})$, by Lemma~\ref{lem:cov-from-pointwise-appendix}. 

Let us explain intuitively how we find the right $m_0$. We want any $x\in [0,n-1]$ to have a reasonable chance to be visited by $Z(m_0)$. As $x=x_0+\dots+x_{k-1}b^{k-1}\leq n$, with nonnegative $x_i$, we have $x_{k-1}\leq \hat{n}$, where we define:
\[\hat{n}:=\lfloor \frac{n}{b^{k-1}} \rfloor. \]
Hence, we are interested in the behaviour of $Z_0,\dots,Z_{k-2},Z_{k-1}$ on $C_b\times \dots \times C_b \times [0,\hat{n}]$. To ensure that any $x\in[0,n-1]$ has a reasonable chance to be visited, we require that every coordinate $R_i$, for $i\leq k-2$, should be mixed. As $R_i$ is a random walk on $C_b$ which moves with probability $p_i/2$, this happens after $\Theta(p_i^{-1}b^2)=O(p_{k-2}^{-1}b^2)$ steps. We also require that the coordinate $R_{k-1}$ has gone to distance at least $\hat{n}$, which needs about $\hat{n}^2 p_{k-1}^{-1}=\Theta(\frac{n^2}{b^{k-1}})$ steps. This leads us to define: \[ m_0:=\max\{p_{k-2}^{-1}b^2,\hat{n}^2p_{k-1}^{-1} \}= c_b^{-1} b^{k-1}\max\{b,\hat{n}^2\}=\Theta\left ( \max\{ b^k,\frac{n^2}{b^{k-1}}\} \right ) ,\]
as the minimal number of steps such that both of these conditions are satisfied. Note that $m_0 \in [c_b^{-1}b^k, c_b^{-1} b^{k+1}]$, with $c_b^{-1}=p_0^{-1}\in (1, 2)$ as is explicit in the definition of the Weierstrassian process.

\subsubsection{Estimating the expected number of visits to $x$}\label{subsubsec:returns-to-x}

\begin{lemma}\label{lem:total-number-visits-x} The expected number of visits to $x$ in between steps $m_0$ and $2m_0$ is:
\begin{equation} \sum_{m=m_0}^{2m_0} \Pr(Z(m)=x) = \Omega\left( \sqrt{\frac{m_0}{b^{k-1}}} \right). \end{equation}
\end{lemma}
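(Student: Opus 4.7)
The plan is to apply Lemma \ref{lem:separation-lb} with the specific interval $I = (-u,u)$ where $u = c'\sqrt{m/b^k}$ from Lemma \ref{lem:noise}, and to estimate each of its four factors for $m$ in the range $[m_0, 2m_0]$. Since $m_0 = \Theta(\max\{b^k, n^2/b^{k-1}\}) \geq c_b^{-1} b^k$, for such $m$ we have $m \geq b^k$, so all of the preceding lemmas about that regime apply.

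Concretely, I would proceed as follows. First, Lemma \ref{lem:noise} gives the noise factor $\min_{q_i \in Q_i} \Pr(N_{k-1}(m) \in I \mid \mathcal{A}_{k-1,x,\mathbf{q}}) = \Omega(1)$, and Lemma \ref{lem:multinomial-concentration} gives $\Pr(\forall j,\, M_j(m) \in Q_j) = \Omega(1)$. Second, Eq.~\eqref{eq:prod-proba-rw-lb} gives the product of the first $k-1$ coordinate factors $\prod_{s=0}^{k-2} \min_{y \in C_b, q \in Q_s} p_q^{C_b}(y) = \Omega(b^{-(k-1)})$. Third, I need to bound $\min_{y \in I, q \in Q_{k-1}} p_q^{\Z}(x_{k-1}-y)$ from below. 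For $q \in Q_{k-1}$ we have $q = \Theta(mp_{k-1}) = \Theta(m/b^{k-1})$, so $\sqrt{q} = \Theta(\sqrt{m/b^{k-1}})$. Since $x \in [0,n-1]$ we have $x_{k-1} \leq \hat n$, and because $m_0 \geq \hat n^2/p_{k-1} = c_b^{-1}\hat n^2 b^{k-1}$, for $m \geq m_0$ we get $\hat n = O(\sqrt{m/b^{k-1}})$. Together with $|y| \leq c'\sqrt{m/b^k} = O(\sqrt{m/b^{k-1}})$, this yields $|x_{k-1}-y| = O(\sqrt{q})$. Applying the second (lower) bound of Claim \ref{claim:lrw-line-distribution}, we obtain $p_q^{\Z}(x_{k-1}-y) = \Omega(q^{-1/2}) = \Omega(\sqrt{b^{k-1}/m})$ uniformly over $y \in I$ and $q \in Q_{k-1}$.

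Combining the four factors via Lemma \ref{lem:separation-lb} then gives, for every $m \in [m_0, 2m_0]$,
\[ \Pr(Z(m) = x) = \Omega(1) \cdot \Omega\!\left(\sqrt{b^{k-1}/m}\right) \cdot \Omega(b^{-(k-1)}) \cdot \Omega(1) = \Omega\!\left(\frac{1}{\sqrt{m \cdot b^{k-1}}}\right). \]
Summing over the $m_0 + 1$ values of $m \in [m_0, 2m_0]$ and using $m \leq 2m_0$ in the denominator yields
\[ \sum_{m=m_0}^{2m_0} \Pr(Z(m) = x) = \Omega\!\left(\frac{m_0}{\sqrt{m_0 b^{k-1}}}\right) = \Omega\!\left(\sqrt{\frac{m_0}{b^{k-1}}}\right), \]
which is the desired bound.

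The main obstacle is the verification of the third step, namely that the quantity $|x_{k-1} - y|$ is within $O(\sqrt{q})$ so that the lower bound in Claim \ref{claim:lrw-line-distribution} is applicable at its full strength $\Omega(q^{-1/2})$ (rather than an exponentially decaying tail). This is precisely where the choice $m_0 \geq \hat n^2/p_{k-1}$ is used: it ensures that the walk $R_{k-1}$ on $\Z$ has had enough time to spread to the typical distance $\hat n$ covered by $x_{k-1}$, while the noise window $I$ is kept narrow enough by Lemma \ref{lem:noise} that it does not push $y$ beyond this diffusive scale.
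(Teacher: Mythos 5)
Your proof is correct and follows essentially the same route as the paper: it invokes Lemma \ref{lem:separation-lb} with the interval $I$ from Lemma \ref{lem:noise}, bounds the four factors by Lemma \ref{lem:noise}, Claim \ref{claim:lrw-line-distribution}, Eq.~\eqref{eq:prod-proba-rw-lb} and Lemma \ref{lem:multinomial-concentration} respectively, and sums over $m\in[m_0,2m_0]$. Your verification that $\lvert x_{k-1}-y\rvert = O(\sqrt{q})$ via $m_0\geq \hat n^2 p_{k-1}^{-1}$ is exactly the argument the paper uses for its term $T_2$.
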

\begin{proof}
To lower bound $\Pr(Z(m)=x)$, we use Eq.~\eqref{eq:separation-lb} with $m\in[m_0,2m_0]$, and $I=(-u,u)$ with $u=\frac{m}{b^{k-2}}>1$. Let us write Eq.~\eqref{eq:separation-lb} as the product of four terms: \[\Pr(Z(m)=x)\geq T_1 T_2 T_3 T_4.\] 
\begin{itemize} \item The first term is:
\[T_1 :=\min_{\forall i, q_i\in Q_i}\Pr(N_{k-1}(m)\in I\mid \forall i (Z_i(m)=x_i) \cap (M_i(m)=q_i))=\Omega(1).\]
where the last inequality is by Lemma \ref{lem:noise}.
\item The second term of Eq.~\eqref{eq:separation-lb} is
 \[T_2:= \min_{y\in I, q\in Q_{k-1}} p^{\Z}_q(x_{k-1}-y),\]
 in which, as $q\in Q_{k-1}$, we have $q=\Theta(mp_{k-1})=\Theta(\frac{m}{b^{k-1}})$. As $\lvert x_{k-1} \rvert \leq \hat{n}=\lfloor\frac{n}{b^{k-1}}\rfloor $ and $\lvert y\rvert <u=1+c'\frac{m}{b^{k}}$, we have $\lvert x_{k-1}-y\rvert = O( \frac{n}{b^{k-1}}+\frac{m}{b^{k}})=O(\frac{n}{b^{k-1}})$ where we verify the last equality easily by using the fact that $m\in [m_0,2m_0]$. Thus, $\lvert x_{k-1}-y\rvert =O(\hat{n})$. As in addition, $q=\Theta(\frac{m}{b^{k-1}})=\Omega(\hat{n}^2)$ and $p_q^{\Z}$ is the distribution of a lazy random walk on the line, which is given by Claim \ref{claim:lrw-line-distribution}, we have: \[ T_2 = \Omega\left(\frac{1}{\sqrt{q}}\right )=\Omega\left( \sqrt{\frac{b^{k-1}}{m_0}} \right ). \]
\item The third term of Eq.~\eqref{eq:separation-lb} verifies, by Lemma \ref{lem:prod-proba-rw},
 \[T_3:=\prod_{j=0}^{k-2} \min_{y\in C_b, q\in Q_{j}} p_q^{j}(y) = \Omega\left (b^{-(k-1)} \right ).\]
 \item Finally, the fourth term of Eq.~\eqref{eq:separation-lb} verifies, by Lemma \ref{lem:multinomial-concentration}, \[T_4:= \Pr(\forall j\leq k-1, M_j(m)\in Q_j) = \Theta(1). \]
 Altogether, we obtain: 
 \begin{align*}\Pr ( Z(m)=x ) =\Omega( T_1 T_2 T_3 T_4 ) = \Omega\left (\frac{1}{\sqrt{b^{k-1}m_0}}\right),\end{align*}\end{itemize}
which implies that the total expected number of visits to $x$ between steps $m_0$ and $2m_0$ is
\begin{equation*}\sum_{m=m_0}^{2m_0} \Pr(Z(m)=x) = \Omega\left( \sqrt{\frac{m_0}{b^{k-1}}} \right), \end{equation*}
as claimed by Lemma \ref{lem:total-number-visits-x}.
\end{proof}

\subsubsection{Estimating the expected number of returns to the origin} \label{subsubsec:returns-to-0} 
To apply Lemma \ref{lem:cov-from-pointwise-appendix}, we want to bound the expected number of returns to $0$ up to step $m_0$. Ideally, we would like to match the upper bound, found in  Lemma \ref{lem:total-number-visits-x}, on the expected number of visits to $x$, which is $O(\sqrt{m_0 b^{-(k-1)}})$. The following Lemma shows this is nearly the case, up to a factor of $k\log b$. 
\begin{lemma}\label{lem:total-number-visits-0}The expected number of returns to $0$ up to step $m_0$ is
\begin{equation*}\sum_{m=0}^{m_0}\Pr(Z(m)=0) =O\left ( \sqrt{\frac{m_0}{b^{k-1}}} k\log b\right ). \end{equation*}
\end{lemma}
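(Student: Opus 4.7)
The strategy is to carry out the dyadic decomposition sketched in the main text (Eqs.~\eqref{eq:returns-origin-sum-decomposition}--\eqref{eq:returns-origin-sum-k}), now made rigorous using Lemma~\ref{lem:separation-ub} together with the bounds of Lemma~\ref{lem:prod-proba-rw} and the Chernoff estimate in Eq.~\eqref{eq:M_i-concentration}. Using that $m_0 \leq c_b^{-1} b^{k+1}$, I first split
\[
\sum_{m=0}^{m_0} \Pr(Z(m)=0) \leq O(1) + \sum_{m=2}^{b} \Pr(Z(m)=0) + \sum_{j=1}^{k-1}\sum_{m=b^j+1}^{b^{j+1}} \Pr(Z(m)=0) + \sum_{m=b^k+1}^{m_0} \Pr(Z(m)=0).
\]
The $j=0$ range is disposed of by applying Lemma~\ref{lem:separation-ub} with $i=0$: the main term $\max_{y, q \in Q_0} p_q^{C_b}(y) = O(1/\sqrt{m})$ (by Claim~\ref{claim:lrw-cycle-distribution}, since $q \in [m/4, m] \subseteq [1, b^2]$) sums to $O(\sqrt b)$ over $(1,b]$, and the single correction $\Pr(M_0(m) \notin Q_0) \leq e^{-cm}$ sums to $O(1)$.

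For each middle range $(b^j, b^{j+1}]$ with $1 \leq j \leq k-1$, I would apply Lemma~\ref{lem:separation-ub} with $i = j$. By Eq.~\eqref{eq:prod-proba-rw-ub} in the case $j=i$, the main term is bounded by $O(\sqrt{b}/m)$, contributing $O(\sqrt{b}\log b)$ to the sum over the range. Each correction term $\Pr(M_\ell(m) \notin Q_\ell) \cdot \prod_{s=0}^{\ell-1} \max p_q^{G_s}$ with $\ell \leq j-1$ carries the super-exponentially small Chernoff factor $e^{-\Omega(b^{j-\ell})}$ and is negligible. The delicate case is the boundary correction $\ell = j$, where $mp_j$ varies between $\Theta(1)$ and $\Theta(b)$ as $m$ ranges over $(b^j, b^{j+1}]$, so Chernoff gives only $e^{-c c_b m/b^j}$; using the change of variables $u = m/b^j$ together with Eq.~\eqref{eq:prod-proba-rw-ub} (case $j-1$) I would bound
\[
\sum_{m \in (b^j,b^{j+1}]} \Pr(M_j(m)\notin Q_j) \cdot O(1/\sqrt{m\, b^{j-1}}) \;=\; O(\sqrt{b})\int_1^b \frac{e^{-c c_b u}}{\sqrt{u}}\, du \;=\; O(\sqrt{b}),
\]
since the integrand is exponentially decaying. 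Thus each middle range contributes $O(\sqrt{b}\log b)$, summing over $j$ to $O(k\sqrt{b}\log b)$.

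For the final range $(b^k, m_0]$, I apply Lemma~\ref{lem:separation-ub} with $i = k-1$: Eq.~\eqref{eq:prod-proba-rw-ub'} bounds the main term by $O(1/\sqrt{m\, b^{k-1}})$, and every Chernoff factor $\Pr(M_\ell(m) \notin Q_\ell) \leq e^{-\Omega(b)}$ since $mp_s \geq c_b b$ for all $s \leq k-1$ in this regime. Summing yields $O(\sqrt{m_0/b^{k-1}})$. Combining the three pieces gives $O(k\sqrt{b}\log b + \sqrt{m_0/b^{k-1}})$, and since $m_0 \geq c_b^{-1}b^k$ forces $\sqrt{m_0/b^{k-1}} = \Omega(\sqrt{b})$, the first summand is absorbed into the second, yielding the claimed $O(\sqrt{m_0/b^{k-1}}\cdot k\log b)$.

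The main obstacle is the boundary correction term $\ell = j$ in each middle range: a naive pointwise Chernoff bound would only give $e^{-c c_b} = \Theta(1)$, losing a $\sqrt{b}$ factor, so the integral-over-$u$ estimate above is essential. Verifying it requires careful bookkeeping of the Chernoff constant through the change of variables, checking via Claim~\ref{lem:prod-exponentials} that the $(1+\eps_{mp_s})$ terms implicitly aggregated in Eq.~\eqref{eq:prod-proba-rw-ub} stay $\Theta(1)$ uniformly, and confirming that $m p_j \geq c_b$ is sufficient for Eq.~\eqref{eq:M_i-concentration} to be applicable in the stated form. Beyond this, the computation is routine given the infrastructure developed in Sections~\ref{subsubsec:defnot}--\ref{subsubsec:noise}.
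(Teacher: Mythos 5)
Your proposal is correct and follows essentially the same route as the paper's proof: the same dyadic decomposition of $[0,m_0]$, the same application of Lemma~\ref{lem:separation-ub} combined with Lemma~\ref{lem:prod-proba-rw} and Eq.~\eqref{eq:M_i-concentration} on each range, and the same final absorption of $O(k\sqrt{b}\log b)$ into $O(\sqrt{m_0/b^{k-1}}\,k\log b)$ using $m_0\geq b^k$. The only (immaterial) difference is in the boundary correction term $\ell=j$, which you bound by integrating $e^{-cc_b u}/\sqrt{u}$ over $u\in[1,b]$, whereas the paper simply uses $e^{-t}\leq t^{-1}$ pointwise to fold that term into the $O(\sqrt{b}/m)$ main term before summing.
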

\begin{proof}
To estimate $\sum_{m=0}^{m_0}\Pr(Z(m)=0)$, the strategy, as presented in the main text, starts with the following decomposition:
\[ \sum_{m=0}^{m_0}\Pr\left(Z(m)=0\right) = 1+\frac{1}{2}+\sum_{i=0}^{k-1} \sum_{m=1+b^i}^{b^{i+1}}\Pr\left(Z(m)=0\right) + \sum_{m=1+b^k}^{m_0}\Pr\left(Z(m)=0\right).\]
The main idea is to use that, for $i\leq k-1$, between the steps $b^i$ and $b^{i+1}$, the coordinates $0$ to $i-2$ are mixed, and that we know short-time probability bounds for the coordinates $i-1$ and $i$.

 Precisely, let $i\in [1,k-1]$ and $m\in (b^i,b^{i+1}]$. Recall that Lemma \ref{lem:separation-ub} states that:
 	\begin{equation}\label{eq:z-decom-ub} \Pr(Z(m)=0) \leq \sum_{j=0}^i \left( \Pr(M_j(m)\notin Q_j)  \prod_{s=0}^{j-1} \max_{q_s \in Q_s} p_q^{G_s}(0) \right )+ \prod_{s=0}^{i} \max_{q_s \in Q_s} p_q^{G_s}(0).   \end{equation}
By Eq.~\eqref{eq:prod-proba-rw-ub} in Lemma \ref{lem:prod-proba-rw} and Eq.~\eqref{eq:M_i-concentration}, we have:
 \begin{align}\label{eq:ub-proba-0} \Pr(Z(m)=0)= O\left ( \sum_{j=0}^{i-1}\left( e^{-cmb^{-j}}b^{-j}\right) + e^{-cmb^{-i}}\frac{1}{\sqrt{m}\cdot \sqrt{b^{i-1}}} + \frac{\sqrt{b}}{m} \right ).
 \end{align}
 Using that, for any $t>0$, $e^{-t}\leq 2 t^{-2} $, we have \[\sum_{j=0}^{i-1}\left( e^{-cmb^{-j}}b^{-j}\right)\leq \sum_{j=0}^{i-1}\left( \frac{2}{c^2m^2b^{-2j}} b^{-j}\right)= O\left(\frac{1}{m^2}\sum_{j=0}^{i-1} b^j\right)=O\left (\frac{b^i}{m^2}\right)=O\left(\frac 1m\right), \] where we used in the last equality that $m\geq b^i$. For the middle term of Eq.~\eqref{eq:z-decom-ub}, we use that $e^{-t}\leq t^{-1}$ for any $t>0$. Hence, $e^{-cmb^{-i}}\frac{1}{\sqrt{m}\cdot \sqrt{b^{i-1}}} =O(\frac{\sqrt{b}}{m})  $. Altogether, we have
 \begin{align}\label{eq:ub-proba-0'} \Pr(Z(m)=0)= O\left(  \frac{\sqrt{b}}{m} \right )\end{align}
 We now sum Eq.~\eqref{eq:ub-proba-0'} for $m$ between $b^i$ and $b^{i+1}$:
  \begin{align}\sum_{m=1+b^i}^{b^{i+1}} \Pr(Z(m)=0) =O\left( \sum_{m=1+b^i}^{b^{i+1}}\frac{\sqrt{b}}{m} \right) =O\left( \int_{b^i}^{b^{i+1}}\frac{\sqrt{b}}{u}du\right) \nonumber \\= O\left (\sqrt{b}\log\left (\frac{b^{i+1}}{b^i} \right )\right )=O\left (\sqrt{b}\log b \right ).\label{eq:sum-returns-i}\end{align}
Summing Eq.~\eqref{eq:sum-returns-i} for $i=1,\dots,k-1$, we have:
  \begin{align}\sum_{m=1+b}^{b^{k}} \Pr(Z(m)=0)=O\left (k\sqrt{b}\log b \right )\label{eq:sum-returns-1-to-k-1}.\end{align}
  For $m\in [2,b]$, by Lemma \ref{lem:separation-ub} and Eq.~\eqref{eq:M_i-concentration} applied with $i=0$, and Claim \ref{claim:lrw-cycle-distribution}, we have $\Pr(Z(m)=0)=O(e^{-cc_bm}+\frac{1}{\sqrt{m}})=O(\frac{1}{\sqrt{m}})$. Thus,
  \begin{equation}\label{eq:sum-returns-0}
      \sum_{m=2}^b \Pr(Z(m)=0) = O( \sum_{m=2}^b m^{-\frac 12} ) = O( \sqrt{b} ).
  \end{equation}
Finally, let us bound the expected number of returns to the origin between steps $b^k$ and $m_0$. We use Eq.~\eqref{eq:separation-ub} (with $i=k-1$), Eq.~\eqref{eq:M_i-concentration} and Lemma \ref{lem:prod-proba-rw} to obtain, for $m\geq b^k$, 
 \begin{align*} \Pr(Z(m)=0)&=O\left( \sum_{j=0}^{k-1}\left( e^{-cmb^{-j}}b^{-j}\right) +\frac{1}{\sqrt{b^{k-1}}\sqrt{m}}\right)=O\left ( \frac{1}{\sqrt{b^{k-1}}\sqrt{m}} \right ), \end{align*}
 where in the last equality, we use again that $e^{-t}\leq t^{-2}$, and $m\geq b^k$. Summing this for $m\in (b^k, m_0]$, we use again a comparison to an integral:
  \begin{align} \label{eq:sum-returns-k} \nonumber \sum_{m=1+b^k}^{m_0} \Pr(Z(m)=0) &= O\left (  \sum_{m=1+b^k}^{m_0} \frac{1}{\sqrt{b^{k-1}}\sqrt{m}} \right) = O \left ( \int_{b^k}^{m_0} \frac{1}{\sqrt{b^{k-1}}\sqrt{u}}du \right )\\&= O\left ( \sqrt{\frac{m_0}{b^{k-1}}} \right ).\end{align}
Combining Eqs.~\eqref{eq:sum-returns-1-to-k-1}, \eqref{eq:sum-returns-0} and \eqref{eq:sum-returns-k}, we have:
 \begin{equation*}\sum_{m=0}^{m_0}\Pr(Z(m)=0) = O\left (k\sqrt{b}\log b + \sqrt{\frac{m_0}{b^{k-1}}}\right )=O\left ( \sqrt{\frac{m_0}{b^{k-1}}} k\log b\right ), \end{equation*}
 where we used in that last inequality that $m_0\geq b^k$ and hence $\sqrt{\frac{m_0}{b^{k-1}}}\geq \sqrt{b}$. This concludes the proof of Lemma \ref{lem:total-number-visits-0}.
\end{proof}

 \subsection{Concluding the Proof of Theorem \ref{thm:main-up}}
 Now we have by Lemmas~\ref{lem:total-number-visits-x} and \ref{lem:total-number-visits-0}:
 \[ \frac{\sum_{m=m_0}^{2m_0}\Pr(Z(m)=x)}{\sum_{m=0}^{m_0} \Pr(Z(m)=0)} = \Omega\left( \sqrt{\frac{m_0}{b^{k-1}}} \cdot \frac{1}{\sqrt{\frac{m_0}{b^{k-1}}}k\log b} \right ) = \Omega\left ( \frac{1}{k\log b} \right), \]
 and, by Lemma \ref{lem:cov-from-pointwise}, the cover time of the Weierstrassian random walk with parameter $b$ on $C_n$ is:
 \[ O\left ( m_0 \cdot k\log b \cdot k \log n \right ) = O\left ( m_0 k^2\log b \log n\right ) .\]
  Since we have defined \[m_0=\Theta\left (b^{k-1}\max\{b,\hat{n}^2\} \right)=\Theta\left (b^{k-1}\max\{b,\frac{n^2}{b^{2(k-1)}} \}\right )=\Theta\left (n\max\{\frac{b^k}{n},\frac{n}{b^{k-1}}\}\right) ,\] this concludes the proof of Theorem \ref{thm:main-up}.
 
\qed \end{document}